\definecolor{ao}{rgb}{0.0, 0.5, 0.0}
\newcommand{\greencheck}{{\color{ao}\ding{52}}}
\newcommand{\redcross}{{\color{red}\ding{56}}}
\newcommand{\name}{\textsc{BigDipper}\xspace}
\newcommand{\nameOne}{{\textsc{BigDipper-}$7$}\xspace}
\newcommand{\nameLite}{\textsc{BigDipper-Lite}\xspace}
\newcommand{\nameQuarter}{\textsc{BigDipper-}$\sfrac{1}{4}$\xspace}
\newcommand{\teaspoon}{\textsc{Teaspoon}\xspace}
\newcommand{\navigation}{\textsc{Navigator}\xspace}
\newcommand{\scooping}{\textsc{Scooping}\xspace}
\newcommand{\DAL}{\textmd{DA-CR}\xspace}
\newcommand{\DA}{\textmd{DA-CR}\xspace}
\newcommand{\card}{\textsc{Card}\xspace}
\newcommand{\cardOne}{{\textsc{Card-}$7$}\xspace}
\newcommand{\cardQuarter}{\textsc{Card-}\sfrac{1}{4}\xspace}
\newcommand{\cardLite}{\textsc{Card-Lite}\xspace}
\newcommand{\LSC}{\textsc{LSC}\xspace}
\newcommand{\cl}{\textit{commitment list}\xspace}
\newcommand{\al}{\textit{attestation set}\xspace}
\newcommand{\mycomment}[1]{}
\newcommand{\primitive}[1]{\textsf{{#1}}}
\newcommand{\event}[1]{\textsf{{#1}}}
\newcommand{\proceduref}[1]{\textbf{\textsc{{#1}}}}
\newcommand{\property}[1]{\textup{{#1}}}
\newcommand{\primitiveTl}[1]{\textit{{#1}}}
\newcommand{\messageProto}[1]{\textrm{{#1}}}
\newcommand{\kzgcommitment}{C}
\newcommand{\false}{\texttt{false}\xspace}
\newcommand{\true}{\texttt{true}\xspace}
\title{ $\name$: Sharded Censorship Resistant Data Availability for Leader-Based BFT} 
\author{Bowen Xue}{EigenLabs,  Seattle, WA, 98115, USA}{}{[orcid]}{}
\author{Samuel Laferriere}{Independent,  NY, USA}{
samlaf92@gmail.com}{[orcid]}{} 
\author{Souhbik Deb}{EigenLabs,  Seattle, WA, 98115, USA}{}{[orcid]}{} 
\author{Sreeram Kannan}{University of Washington,  Seattle, WA, 98115, USA}{ksreeram@uw.edu}{[orcid]}{}
\authorrunning{B. Xue, S. Laferriere, S. Deb, and S. Kannan} 
\keywords{Censorship Resistance; byzantine fault tolerant; consensus; blockchain}
\begin{document}

\maketitle

\begin{abstract}
Leader-based Byzantine-fault-tolerant (BFT) protocols provide low latency and
simple communication structure, but they give the leader short-term control over
transaction inclusion. A malicious leader can keep the protocol live while
delaying or excluding time-sensitive transactions such as auction bids, oracle
updates, liquidations, and bridge messages. Existing responses often build a
fixed censorship-resistance, hiding, or ordering mechanism into the protocol
path, forcing all transactions to pay for the same protection level.
\name follows the end-to-end principle: the consensus layer exposes inclusion
primitives rather than hardcoding stronger policies. Higher-layer protocols can
then choose their own submission strategies and resources, whether through
replication, erasure coding, or other mechanisms, to obtain the
censorship-resistance, hiding, ordering, or execution guarantees they need.
At the core of \name is censorship-resistant data availability, or DA-CR, which
certifies available replica-contributed mini-blocks for use by leader-based
consensus. A central design goal is that data remains sharded on the consensus
critical path: validators do not reconstruct or execute the full payload before
voting, but instead check commitments, availability evidence, and the DA-CR
inclusion rule.
We define DA-CR guarantees for data-tampering resistance, honest mini-block
inclusion, and residual leader influence. We then give concrete constructions
based on erasure coding and linear commitments, analyze client-tunable
transaction submission, and instantiate \name inside HotStuff-2.
\end{abstract}

\section{Introduction}
\label{sec:introduction}

Leader-based Byzantine-fault-tolerant (BFT) protocols are the dominant design
point for many production blockchain systems because they provide low latency,
simple communication structure, and high throughput in partially synchronous
networks~\cite{yin2019hotstuff,buchman2018latest,cryptoeprint:2023/397,castro1999practical}.
In these protocols, the leader coordinates the consensus round and constructs
the proposal that replicas vote on. This structure is efficient, but it gives
the leader short-term control over inclusion: a malicious leader can keep the
protocol live while delaying or excluding selected transactions whose value
depends on timely inclusion. This problem is not captured by ordinary BFT
liveness, which only guarantees that the protocol eventually commits blocks,
not that transactions received by honest replicas are included promptly. The
same concern has been studied in recent work~\cite{mcp,cr_thput}. For many
blockchain applications, this gap is economically meaningful: auction bids,
oracle updates, liquidations, bridge messages, and other time-sensitive
transactions may lose most of their value if delayed by only a few blocks. We
refer to this problem as \emph{short-term censorship}: the leader need not halt
the protocol indefinitely to cause harm; it only needs to prevent prompt
inclusion during the relevant window.

A natural response is to add stronger anti-censorship machinery to the core
protocol path. Fair-ordering protocols constrain transaction order using
replica receive times~\cite{kelkar2021themis,kelkar2022order}. MCP-style
multi-proposer protocols use erasure-coded DA and relay attestations
to prevent a leader from selectively including available batches~\cite{mcp}.
These approaches provide valuable guarantees for applications that need them.
However, they also choose a strong protection level for the protocol path:
fair-ordering systems impose an ordering policy, while MCP-style systems attach
nonselective inclusion and hiding machinery to the slot pipeline. Not every
transaction requires this level of short-term censorship resistance or
pre-confirmation hiding.

This paper argues for a more modular design. Not every transaction requires the
same level of short-term censorship resistance. A liquidation, final auction
bid, or bridge message may justify higher dissemination cost. An ordinary
transfer, delayed message, or low-value application action may not. A consensus protocol should therefore expose
a clean mechanism for short-term censorship-resistant inclusion, while allowing
clients and applications to decide how much protection is worth its cost.

This view follows the end-to-end argument:
functions whose correct implementation depends on application-level
requirements should not be unnecessarily embedded in a lower-level
subsystem~\cite{saltzer1984end}. Applied to BFT consensus, the middle layer
should provide the minimal common mechanism needed for safe ordering, available
data, and enforceable inclusion. Application specific policies such as how
aggressively to resist short-term censorship, whether to hide transaction
contents before confirmation, which ordering rule to use, and how to execute or
serve full data, should remain modular choices at the endpoints or upper
layers.

BigDipper applies this principle through a censorship-resistant data
availability layer, denoted $\DAL$, placed between transaction submission and
leader-based BFT consensus. Replicas collect submitted transactions into
mini-blocks, and $\DAL$ establishes which mini-block data is available and must
be captured by a valid leader proposal. The abstraction captures the
concerns about
data tampering, guaranteed inclusion of honest replica contributions, and the
leader's residual influence over which available data is captured. We later
formalize these dimensions using parameters $(t,\eta)$.

With this abstraction in place, probabilistic short-term censorship resistance becomes a tunable opt-in property rather than a fixed protocol tax. A client that cannot tolerate
short-term censorship submits its transaction to more replicas. A client that
is less sensitive to delay can submit to fewer replicas and pay lower
communication cost. The protocol does not force one protection level on all
transactions: clients choose the prompt-inclusion probability they need.

An additional property emerges from the interaction between transaction
submission and $\DAL$. A malicious leader may still censor some transactions,
especially those submitted with small dissemination fanout. However, censoring
a large fraction of transactions becomes much harder: as the total number of
submitted transfactions grows, the probability that an adversary can censor an
$\alpha$ fraction of them decreases exponentially. Our analysis formalizes this
effect with a union-bound argument. Thus BigDipper does not make every
transaction uncensorable at no cost; instead, it exposes an explicit
submission-level cost/security tradeoff while making broad short-term
censorship increasingly unlikely at scale.
The detailed $\DAL$ construction, transaction-submission protocol, and BFT
integration are introduced later.

BigDipper applies the same modular principle to the boundary between consensus
and execution. In many blockchain systems, consensus validators are also
expected to receive, store, and often execute the entire block payload. This
coupling makes the system easy to reason about, but it also forces every
validator to pay the cost of full-data execution. 
For consensus, a validator only needs to check that the proposed block is backed
by available, certified data and that the leader has satisfied the $\DAL$
inclusion rule.

BigDipper separates these responsibilities. The consensus layer orders and
certifies available data, while the upper layer decides how that data is
retrieved, derived, executed, and served to users. This is similar in spirit to
rollup-style architectures, where data publication and execution need not be
performed by the same entities. If an application wants conventional full
replication, it can still require every node to execute the full payload. But
BigDipper does not force that choice at the consensus layer.

This separation opens a broader execution design space. An application may use
high-capacity full-data executors, a staked executor committee, trusted
execution environments, zk-proved execution, or optimistic execution with fraud
proofs. These choices have different trust, latency, and cost tradeoffs. The
point is that they are execution-layer decisions. The consensus validator's
critical path remains sharded: validators check commitments, availability
evidence, and the $\DAL$ inclusion rule, rather than reconstructing and
executing the full payload before voting.

BigDipper makes this possible by combining erasure coding with a commitment
scheme that preserves linear structure. Replicas contribute mini-blocks, the
leader erasure-codes them into a sharded data object, and validators check
compact commitment evidence for the shards they receive. Because both the
encoding and the commitment relation are linear, these local checks still bind
the leader to a single global data object.
As a result, no individual validator needs to download the entire data object
merely to check the short-term censorship-resistance condition.

The sharded design also requires care at the transaction-submission layer. If
clients were encouraged to broadcast every transaction to every validator, then
the system would recreate full replication at the edge and lose much of the
benefit of sharded consensus. BigDipper instead treats fanout as part of the
client's cost/security choice. Clients choose how many replicas to contact
based on how much short-term censorship risk they are willing to tolerate. A
transaction that needs stronger prompt-inclusion protection can pay the
data-availability cost of reaching more replicas; a less time-sensitive
transaction can use smaller fanout and pay less. In this way, the cost of
stronger short-term censorship resistance is borne by the transactions that
need it, rather than imposed uniformly on all traffic.
The transaction-submission interface is intentionally simple: clients choose
how much effort to spend to reach replicas. More sophisticated coded
submission protocols, such as Sedna\cite{sedna} transaction sharding, can be viewed
as richer strategies at the same boundary, by using erasure code to reduce redundancy while boosting censorship resistance.

This gives BigDipper a different scalability profile from full replication. In
a fully replicated design, adding validators improves fault tolerance but does
not increase data capacity, because every validator must still process the same
full payload. In BigDipper, the DA-CR layer can capture mini-blocks contributed
by many replicas, so the amount of data made available to consensus can grow
with aggregate replica-side bandwidth. This gives the consensus layer a
sharded, linear data-contribution path. However, this is not by itself an
end-to-end throughput claim: overall throughput and latency also depend on how
the upper-layer execution and serving system is designed. A fully replicated
executor layer may become the limiting factor, while specialized executors,
committees, TEEs, zk-proved execution, or optimistic execution expose different
performance and trust tradeoffs.

Finally, BigDipper gives a concrete integration with a modern leader-based BFT
protocol. We instantiate the ordering layer using HotStuff-2~\cite{cryptoeprint:2023/397}.
The integration is surgical: BigDipper does not replace the safety core of
HotStuff-2, nor does it require all validators to fully replicate block data. A
leader proposal is valid only if it incorporates the required $\DAL$ output,
and honest validators check this condition before voting. This shows that
$\DAL$ is not merely a high-level modular abstraction; it can be connected
precisely to an existing leader-based BFT protocol.

\subsection{Contributions}

This paper makes the following contributions.

\begin{enumerate}
    \item \textbf{A modular abstraction for short-term censorship-resistant inclusion.}
    We introduce $\DAL$, a censorship-resistant data availability abstraction
    between transaction dissemination and leader-based BFT consensus. $\DAL$
    captures the point at which replica-contributed mini-blocks become
    available and must be included by the consensus leader. We characterize
    $\DAL$ protocols using parameters $(t,\eta)$, which capture
    data-tampering resistance, guaranteed honest mini-block inclusion, and the
    leader's residual influence over inclusion.

    \item \textbf{A sharded, scalable data path for leader-based BFT.}
    We present a design in which ordinary validators do not need to download the
    entire block payload merely to vote. Validators vote using commitments,
    metadata, and $\DAL$ evidence, allowing aggregate proposal bandwidth to grow
    with replica-side data contribution rather than being limited by a single
    leader's input bandwidth. This design separates consensus from full-data
    execution and serving, allowing specialized upper-layer nodes to download,
    verify, execute, and serve data without requiring every consensus validator
    to do so.

    \item \textbf{Client-tunable short-term censorship resistance.}
    We design \textsc{Navigator}, a transaction dissemination mechanism that
    lets clients amplify prompt-inclusion probability by sending a transaction
    to multiple replicas. Clients that send to more replicas obtain stronger
    short-term inclusion protection; clients that send to fewer replicas pay
    lower communication cost but accept a higher probability of short-term
    censorship. We analyze this cost/security tradeoff, including a union-bound
    argument showing that broad censorship becomes unlikely when sufficiently
    many clients send to sufficiently many replicas.

    \item \textbf{A concrete HotStuff-2 integration and analysis.}
    We instantiate $\name$ inside HotStuff-2 through the \scooping rule. The
    integration specifies the modified protocol flow and proves that safety and
    liveness are preserved while enforcing inclusion of the $\DAL$ output. We
    also analyze how $\name$ reduces the single-leader proposal bottleneck by
    allowing multiple replicas to contribute mini-blocks concurrently.
\end{enumerate}

\subsection{Paper Organization}

Section~\ref{sec:background} gives background on BFT protocols and data
availability. Section~\ref{sec:card_property} defines the $\DAL$ abstraction and
its properties. Section~\ref{sec:card_design} presents concrete $\DAL$
constructions. Section~\ref{sec:navigation} gives the \textsc{Navigator}
transaction broadcast protocol. Section~\ref{sec:bigDipper_hotstuff} describes the
\scooping rule and the HotStuff-2 integration. The appendices contain additional
protocol variants, proofs, and discussion of ordering extensions.   

\subsection{Related Approaches and Design Position}
\label{sec:introduction:related_position}

MCP studies short-term censorship resistance in multi-proposer BFT using
availability and relay attestations, preventing a leader from selectively
including visible transaction batches~\cite{mcp}. It also situates this design
relative to other leaderless, DAG-based, and multi-proposer BFT protocols.
BigDipper takes a different architectural position. Following the end-to-end
principle, it does not bake hiding or one fixed censorship-resistance level into
the consensus path. Instead, the $\DAL$ layer enforces non-hiding inclusion of
replica-contributed data, while the upper layer chooses how much
submission cost, hiding, or ordering policy.

Sedna further supports this separation by showing that the client-side
dissemination layer is itself a rich design space: it uses erasure-coded
transaction fragments to improve the bandwidth, latency,
censorship-resistance, and hiding tradeoff in MCP-style systems~\cite{sedna}.
In BigDipper, {\navigation} is the minimal submission interface: the degree of
short-term censorship resistance depends on how many honest replicas receive
the transaction before the target {\DAL} instance. Sedna-style coded submission
could be composed with BigDipper by sending erasure-coded fragments rather than
full transaction copies, reducing duplicate data while preserving the same
fanout-based inclusion logic. We leave this extension to future work; the key
point is that such dissemination choices are upper-layer strategies built on
top of the DA-CR interface, not policies hardcoded into consensus.

Other BFT protocols also obtain inclusion from many replica contributions, often
through leaderless, all-to-all, or DAG-style dissemination. We do not repeat the
full comparison here; Appendix~\ref{sec:appendix:bft_complexity_analysis}
summarizes representative protocols and their communication tradeoffs.
BigDipper targets a different point: it keeps an active leader-based BFT core
while adding a sharded DA-CR layer that forces the leader to capture
replica-contributed mini-blocks.

\section{Background and Related Works}
\label{sec:background}

\subsection{BFT problem }
\label{sec:background_bft}
A BFT protocol is made of $n$ replicas that maintain a consensus on some common state. Clients submit transactions and rely on honest replicas to process their transactions correctly.
Replicas communicate among themselves on a network, which is modeled as asynchronous, partially synchronous or synchronous. In partially synchronous networks, there is a notion of an unknown Global Stabilization Time (GST), after which all transactions can arrive within a known $\Delta$ duration.
$\name$ is a modular system designed for BFT protocols based on the partially synchronous network where a leader can coordinate replicas.
A BFT protocol must satisfy the following properties:
\begin{itemize}
\item \textbf{Safety}: At any time, the transactions ledger of every pair of honest replicas is the prefix of another.
\item \textbf{Liveness}: After GST, any valid transaction by honest client is eventually replicated by honest replicas.
\end{itemize}
The lower bound with $f$ malicious replicas requires $n\ge3f+1$ ~\cite{dwork1988consensus}.
Every round of a BFT protocol is associated with a view number. Appendix~\ref{sec:appendix:background:validity} discusses the validity property.
%

\subsection{DA and Verifiable Information Dispersal}
\label{sec:background:da-vid}
Data availability (\textbf{DA}) is an implied requirement by all BFT protocols such that all replicas have access to a common ordered data at any time.
A {\DAL} also provides the data availability. 
Most leader based BFT protocol uses a simple DA by having the leader broadcast the data to everyone.
There are more sophisticated designs of DA, including AVID~\cite{cachin2005asynchronous}, AVID-FP~\cite{hendricks2007verifying}, AVID-M~\cite{yang2022dispersedledger}, SEMI-AVID-PR~\cite{nazirkhanova2021information}. We compare our censorship resistant DA (i.e. {\DA}) with them in Appendix~\ref{sec:appendix:card:avid}.
Information dispersal is a technique that uses Reed-Solomon code to split data into $O(n)$ chunks, and each replica only stores a constant number of chunks to ensure the data is reconstructable. The verifiability ensures two clients always retrieve identical data.

\subsection{Reed Solomon Erasure code}
\label{sec:background_rs}
A Reed-Solomon(RS) erasure code is specified by a pair $(k,h)$, where $k$ is the number of systematic chunks, created from the input data, and the second parameter $h$ is the coding redundancy. The newly generated data are called parity chunks.
The ratio between $k$ and $h$ is called the coding ratio.
The encoding process of RS code is based on the idea of polynomial interpolation\cite{reed1960polynomial}, which is a linear operation.
Given an input array, the encoding can be implemented so that the output is a concatenation of input data and parity chunks.
We use $rsEncode, rsDecode$ to refer to the encoding and decoding operations. 

\subsection{Polynomial Commitment and 1D KZG}
\label{sec:background:poly-commit}
A polynomial commitment scheme~\cite{kate2010constant} treats data as a polynomial and provides primitives including \primitiveTl{commit}, \primitiveTl{createWitness}, and \primitiveTl{verifyEval}. 
The \primitiveTl{commit} primitive provides a concise and unique representation of a polynomial.
It has a \textsf{Polynomial binding} property such that it is computationally infeasible to find two polynomials that commit to the same commitment. 
With primitives \primitiveTl{createWitness}, \primitiveTl{verifyEval}, the scheme allows us to reveal(open) evaluations at specific indices along with witnesses(proofs) without disclosing the entire polynomial. When presented with a correct witness, any verifier can use \primitiveTl{verifyEval} and be convinced that the evaluations are indeed correct against the commitment.
The scheme has a \textsf{Evaluation binding} property such that it is computationally infeasible to find two witnesses for two different polynomial evaluations at the same evaluation index that passes the \primitiveTl{verifyEval}. 
Kate-Zaverucha-Goldberg (KZG)\cite{kate2010constant} is a polynomial commitment scheme that provides linearity in commitments and witnesses.
Appendix~\ref{sec:appendix:background:kzg} contains more technical definitions. 
KZG commitment of bn254 curve on 1MiB data takes about 100ms on Apple M4 Pro with 12 working threads\cite{kzgcommit}.

%

\subsection{2D RS KZG} 
\label{sec:background_rs_2dkzg}
Suppose we have $n$ polynomials, each of degree $d-1$, and we want to encode it with a coding ratio of $\frac{1}{3}$.
A matrix can be created where every column contains one polynomial. 
The RS encoding is applied to each of $d$ row by extending the evaluations to $2n$ more points. 
The commitment of every newly generated $2n$ polynomials can be shown to be the polynomial extension of the first $k$ column commitments. It is due to the linearity property of KZG and RS~\cite{Vitalik2d}.
Figure~\ref{fig:card_data_struct} provides a visualization. More discussion can be found in Appendix~\ref{sec:appendix:background:rs}.

\subsection{Combined Signature Scheme and CRHF}
\label{sec:background:signature}
We use combined signature to refer to a signature generated from either a multi-signature or threshold signature scheme. We will use the exact name when a specific signature scheme is used.
On paring friendly curves, a practical way to aggregate multiple signatures on a common message into one signature is to use BLS multi-signature \cite{boneh2018compact}. 
The signature scheme provides three secure primitive $\sigma_i \gets$ \primitiveTl{ms-sign}$(sk_i,m)$; $\sigma, I_{\sigma} \gets$ \primitiveTl{ms-agg} $(pk_1,\sigma_1 \cdots $ $ pk_ ,\sigma_t)$; $bool \gets \primitiveTl{ms-verify}(I_{\sigma}, m, \sigma)$, where $I_{\sigma}$ is an indicator vector for the signers, which has a linear size. 
In contrast to multi-signature, a threshold signature show out of $n$ signers at least some ratio number of signers have signed a common message. Its verification does not require an indicator, $I_{\sigma}$. So the verification requires only constant message. 
To convert a list of data into a single message, we assume a collision resistant hash function(\primitiveTl{CRHF}) with negligible probability for two different data to hash into the same digest. 
A combined signature is valid if it contains $n-f$ replicas signatures. 

\subsection{Security Assumptions for {\name}}

We model Byzantine messages as being delivered as early as the adversary wants,
subject to the network model. In probability calculations involving message
arrival order, we pessimistically assume Byzantine replicas' messages are
available to the leader before honest replicas' messages. This convention gives
the adversary priority in leader collection and is used only to obtain
worst-case inclusion bounds; Byzantine replicas may still delay, omit, or
equivocate messages arbitrarily.

The network is partially synchronous: after GST, messages between honest
replicas are delivered within a known bound $\Delta$\cite{dwork1988consensus}. Up to $f$ replicas are
Byzantine and may be coordinated by a single adversary or act separately.
Byzantine replicas may send, delay, omit, or equivocate messages arbitrarily.
When analyzing leader collection and transaction inclusion probabilities, we
use a pessimistic scheduling convention in which Byzantine messages can arrive
before honest messages. This gives the adversary maximum priority in filling the
leader's collection set and avoids relying on favorable Byzantine latency.

Transaction censorship occurs when Byzantine replicas, including a Byzantine
leader, intentionally exclude transactions. This is distinct from a transaction
missing a mini-block because an honest replica did not receive it before the
collection deadline. Honest clients may submit transactions or retrieve
dispersed data, and no honest client is assumed to have an offline trust
relationship with any particular BFT replica.

\section{Censorship Resistant DA }
\label{sec:card_property}

The goal of this section is to define the properties of the $\DAL$ layer
independently from the later protocol construction and BFT integration. In
$\name$, short-term censorship resistance is obtained by composing three
components: a transaction-submission mechanism that delivers client transactions
to replicas, a $\DAL$ layer that lets replicas contribute mini-blocks and makes
the resulting data available, and a BFT integration that forces the leader to
consume the $\DAL$ output when forming a block. This section focuses on the
second component. The $\DAL$ layer is the point at which decentralized
transaction submission is converted into a certified data object, and its
properties determine how much short-term censorship resistance the surrounding
BFT protocol can obtain. We characterize a $\DAL$ protocol along two
dimensions: guaranteed inclusion of honest replica
mini-blocks, and the leader's ability to capture excluded replica space. These
dimensions are summarized by the parameters $(t,\eta)$.

\subsection{Primitives and Definitions}
\label{sec:card_propery:primitives_definitions}
{\name} is built on three protocol components in order to provide short term censorship resistance for leader based BFT protocols. The first component delivers clients'
transactions to BFT replicas; {\DAL} is the second component, it creates a mechanism that allows all replicas to create mini-blocks, and store them in a distributed manner;
the last component integrates the {\DAL} with a BFT protocol by enforcing the current leader to consume all data from {\DAL} when producing any BFT block.
%
The {\DAL} is a critical piece in the {\name} BFT system as its properties determine how much censorship resistance the system has. In this section, we first state interfaces of two invocations to {\DA}. 
To precisely describe their properties, we provide definitions on three aspects of censorship resistance. In the end we develop a parameterization of any {\DA} protocols with three numbers.


%
The {\DAL} provides a {\primitive{Disperse}} invocation that distributes the data among replicas, and one {\primitive{Retrieve}} invocation to safely retrieve the data from honest replicas.
For censorship resistance, we require that the {\primitive{Disperse}} invocation is initiated in a distributed manner from at least $n-f$ replicas,  
and the final dispersal data $B$ assembled by the leader must also contain at least $n-f$ mini-blocks. 
%
For the {\primitive{Retrieve}} invocation, either replicas or clients can invoke the procedure to retrieve partial or full dispersed data. %

Every invocation needs to be initiated against a {\DAL} instance with a unique ID, which is used by replicas and clients to identify the context for intended invocations.
%
%
%
Every {\primitive{Disperse}} invocation is associated with two events, a $\event{Start}$ event and a $\event{Finish}$ event. 
The $\event{Finish}$ event can only evaluate to the following types: $\event{Complete}$ or $\event{Incomplete}$.
Since a malicious leader can stall the protocol by inactivity, the {\event{Incomplete}} type is used as a trigger for the leaders replacement.
If a \primitive{Disperse} invocation finishes with the type $\event{Complete}$, it will return a commitment $C$ and its associated combined signature containing $n-f$ out of $n$ replicas.
%
It's possible that a \event{Completed} \primitive{Disperse} invocation may not contain mini-blocks from all $n$ replicas. This is because either some honest replicas are late for the leader's collection time, or some percentage of malicious replicas are refusing to send mini-blocks. 
For the empty space, a protocol designer can meditate on the trade-offs and decide if that space should be used by the leader to fill in its own data; otherwise that empty space can be filled with zeroes indicating that the corresponding replicas are not included. We elaborate on this trade-off in Appendix~\ref{sec:appendix:card_design:lsc}.  
%
%
%
%
%
%
%

For a {\DAL} protocol, its censorship resistance depends on the characteristics of the {\primitive{Disperse}} invocation.
Definitions (\ref{def:replica-exclusion},~\ref{def:censorship-resistance}) quantify the amount of censorship resistance with a single parameter.
Given a desired parameter, the protocol has a large design space for choosing an inclusion mechanism for meeting the requirement.
Definition (\ref{def:accoutable-mechanism}) classifies the design space into two categories, based on the idea of whether a leader informs every replica about whose data are included. 
It is achieved with the use of attestation in definition (\ref{def:attestaion}).
If a design uses attestation in conforming the censorship resistance parameter, definitions (\ref{def:data-tampering},~\ref{def:space-capturing}) capture possible outcomes when a leader interferes with the inclusion mechanism. In Section~\ref{sec:card_design}, we provide protocols designed base on either of the two categories.
%

\begin{definition}[Replica-Exclusion]
\label{def:replica-exclusion}
    Suppose a \primitive{Disperse} invocation finishes with a \event{Complete} type with a valid combined signature, a replica $p$ is excluded if the leader does not include the mini-block from $p$ for the {\DAL} instance. Replica-exclusion happens even when the leader is honest.
\end{definition}

\begin{definition}[Censorship Resistance $t$] 
\label{def:censorship-resistance}
    A protocol is censorship resistant if it guarantees that at least $t$ honest replicas are not excluded in any {\DAL} instance. If $t=0$, the protocol is not censorship resistant. 
\end{definition}

\begin{definition}[Data-Attestation]
\label{def:attestaion}
    If a replica proposes some non-zero data as its mini-block for a {\DAL} instance, the replica declares it to a leader by generating a signature, called attestation, based on data in the mini-block. An attestation from replica $p$ is denoted $a_p$.
\end{definition}

\begin{definition}[Accountable-Inclusion-Mechanism]
\label{def:accoutable-mechanism}
    To conform with the censorship resistance $t$, an accountable inclusion mechanism requires a leader to explicitly demonstrate the required number of inclusion by showing attestations from other replicas, whereas a non-accountable inclusion mechanism allows the leader to demonstrate conformity without attestations. 
\end{definition}

%

\begin{definition}[Data-tampering]
\label{def:data-tampering}
    In an accountable mechanism, suppose a replica $p$ participates in a \primitive{Disperse} invocation with a non-zero mini-block and the leader used the attestation $a_p$ for demonstrating sufficient inclusion of mini-blocks to other replicas. Let $b_p$ 
    be the mini-block in the actual dispersed data corresponding to replica $p$, the data of replica $p$ is tampered if $a_p$ does not bind to $b_p$. 
\end{definition}
%

\begin{definition}[Space-capturing]
\label{def:space-capturing}
    Suppose in a \primitive{Disperse} invocation a replica $p$ is excluded from the {\DAL} instance, the data space designated for the replica $p$ is captured by a leader if in the actual dispersed data $b_p$, it is not set to a system default value reserved for the excluded replicas.
\end{definition}

The default value for the excluded replicas can be arbitrary, we let the value be identically $0$. To handle a corner case when a replica has no data to upload, the protocol defines a special NULL transaction for differentiating scenario.
%
%
With respect to definition(\ref{def:accoutable-mechanism}), a great difference between the two types of inclusion mechanism is shown in Lemma~\ref{lemma:accountability-diff}:
\begin{lemma}
\label{lemma:accountability-diff}
    Replica-Exclusion and Space-capturing in an accountable mechanism is publicly attributable; whereas in a non-accountable mechanism, both of them are only known privately by the leader and the excluded replica.
\end{lemma}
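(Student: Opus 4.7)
The plan is to split the statement into two independent implications, one for each flavor of inclusion mechanism, and in each case argue that the cryptographic artifacts (or absence thereof) available to a generic third-party replica are exactly what determines public attributability.

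For the accountable direction, I would start from Definition~\ref{def:accoutable-mechanism}: a leader is forced to exhibit an attestation $a_p$ whenever it claims to have included replica $p$'s mini-block. Since the combined signature on the dispersed data commits to the attestation set $\mathcal{A}$ and this set is delivered to every honest replica as part of the \primitive{Disperse} workflow, any observer can test whether $a_p \in \mathcal{A}$. Absence of $a_p$ from the signed attestation set, combined with $p$ having proposed some non-zero mini-block, directly exhibits \textit{Replica-Exclusion} and identifies both victim and perpetrator; hence it is publicly attributable. For Space-capturing I would use the binding property of attestations (Definition~\ref{def:attestaion}) together with the \textit{Commitment Binding} property of {\DA}: everyone can recompute the commitment of the default zero mini-block at position $p$ and compare it to the committed slot $b_p$ in the dispersal; a discrepancy with $p \notin \mathcal{A}$ is, by Definition~\ref{def:space-capturing}, capture, and the evidence is again held by every honest replica.

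For the non-accountable direction, I would argue by a simulation/indistinguishability style observation. Without the requirement to publish attestations, the only data received by a third-party replica is the dispersal itself, and that dispersal carries no binding between the slot $b_p$ and the identity of replica $p$ beyond the positional convention chosen by the leader. Thus, from the viewpoint of any replica $q \neq p$ that did not itself generate $b_p$, the transcript is information-theoretically consistent both with the scenario ``$p$ was included with mini-block $b_p$'' and with ``$p$ was excluded and the leader filled the slot with arbitrary data.'' Consequently, $q$ cannot distinguish the two, which is exactly the claim that exclusion and capture are not publicly attributable. The leader of course knows what it did, and replica $p$ detects it trivially by comparing $b_p$ with the mini-block it proposed.

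The main obstacle I expect is the non-accountable half: phrasing the ``only known privately'' clause requires care, because one could worry that some side channel (timing of collection messages, residual signatures on mini-blocks, retrieval queries) leaks the identity of the excluded replica. I would address this by stipulating that the argument is against the protocol transcript alone, and by invoking the fact that in a non-accountable design the protocol does not require replicas to broadcast proposals to each other before the leader's collection, so no non-leader replica has a transcript that distinguishes ``$p$ never sent a mini-block'' from ``$p$ sent a mini-block that the leader discarded.'' With that modeling choice fixed, the two indistinguishability claims reduce to the absence of an attestation artifact in the transcript, completing the proof.
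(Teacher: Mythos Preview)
Your proposal is correct and follows essentially the same line as the paper: both arguments reduce the lemma to the presence or absence of attestations as the sole cryptographic artifact binding a slot to its originating replica. The paper's own proof is a two-sentence remark (``direct consequence of using attestations'' plus the observation that without attestations there is no association between mini-blocks and senders), so your version is a strictly more fleshed-out rendering of that same idea; in particular, your indistinguishability framing for the non-accountable case and your explicit zero-commitment check for space-capturing make precise what the paper leaves implicit.
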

The proof is a direct consequence of using attestations. In a non-accountable mechanism, there is a lack of association between dispersed mini-blocks and their senders, so the leader can overwrite some space with its own data as long as it satisfies some requirement of censorship resistance $t$.
In contrast, an accountable inclusion mechanism keeps track of what mini-block is sent by which replica. 
See Appendix~\ref{sec:appendix:card_design:lsc} for more discussion.
%


When an accountable inclusion mechanism is used, it exposes an attack vector based on definition(\ref{def:data-tampering}), which does not exist for protocols that use non-accountable inclusion mechanisms.
The goal of the attack is to introduce inconsistency between what is being attested and what is actually dispersed.
An honest leader never tampers data;
whereas a malicious leader can tamper any mini-blocks including those produced by other malicious replicas.
With the definitions above, we describe properties required for a {\DAL} protocol.

\subsection{Properties of DA-CR protocols}
\label{sec:card_propery:properties}
\noindent A leader based {\DA} protocol that offers censorship resistance must provide the following properties for all {\DA} instances. The {\DA} protocol relies on partial synchronous assumption, different from asynchronous protocols including AVID\cite{cachin2005asynchronous}, AVID-M\cite{yang2022dispersedledger}.
The first four properties are important for integrating with offshelf BFT protocols, and the last three properties characterizes the censorship resistance of the integrated protocol.
\begin{itemize}
    \item \textbf{Termination}: If some replicas invoke \primitive{Disperse}$(B)$, then those replicas eventually \event{Finish} the dispersal.
    If at least $n-f$ replicas invoke \primitive{Disperse}$(B)$ and the leader is honest  with a network after GST, then the dispersal \event{Finishes} with type $\event{Complete}$.

    \item \textbf{Availability}: If an honest replica $\event{Finished}$ a dispersal with type $\event{Complete}$, then any honest retriever can invoke $\primitive{Retrieve}$ and eventually reconstruct a block $B'$. 
   
    \item \textbf{Correctness}: If an honest replica \event{Finished} a dispersal with a type $\event{Complete}$, then any honest retriever always retrieves the same block $B'$ by invoking \primitive{Retrieve}. If the leader is honest, then $B=B'$, where $B$ contains all the mini-blocks available to the honest leader at dispersal.

    \item \textbf{Commitment Binding}:  If an honest replica \event{Finished} a \primitive{Disperse}$(B)$ invocation with a type \event{Complete} and a valid combined signature on its commitment, no adversary can find another $B'$ that verifies correctly against the combined signature. 
    
    \item \textbf{Data-tampering Resistance}: In an accountable inclusion mechanism, if an honest replica \event{Finished} a \primitive{Disperse}$(B)$ invocation with type $\event{Complete}$,
    the dispersed data $B$ must ensure no mini-blocks are tampered in the {\DAL} instance.

    \item \textbf{Inclusion-\textit{t}}: If an honest replica \event{Finished} a \primitive{Disperse}$(B)$ invocation with type \event{Complete}, 
    the dispersed data must include at least $t$ mini-blocks from honest replicas.

    \item \textbf{Space-Capturing}-$\eta$: If an honest replica \event{Finished} a \primitive{Disperse}$(B)$ invocation with type \event{Complete},
    the dispersed data contain at most $\eta$ captured mini-blocks by a leader.
    
\end{itemize}

We elaborates differences among those properties in Appendix~\ref{sec:appendix:card_property:discussion}.
Table~\ref{table:card_comparison} summarizes the parameters of {\DA} properties offered by the family of {\name} protocols.
For ease of reference, we use the suffix of each protocol combining with {\card} as the name for {\DA} component.
%
{\card} stands for \textbf{c}ensorship \textbf{r}esistance \textbf{d}ata \textbf{a}vailability with flexible ordering.
%
%

\begin{table}[!hbt]
\caption{Comparison of {\DA} component of BFT protocols}
\begin{subtable}[h]{\linewidth}
\resizebox{\textwidth}{!}{
\begin{tabular}{|c | c | c | c | c | c | }
\hline
\multirow{2}{*}{}  {\name} & DA-CR & BFT $(n\ge)$ & $t$ & $\eta$  & message \\
                     protocol & protocol  & assumption &  &  & complexity \\
\hline \hline
{\nameOne} & {\cardOne} & $3f+1$     & ${n-2f}$ & $0 {\dagger}$ & $O(b + \lambda n) $ \\
\hline
{\nameQuarter}& {\cardQuarter}  & $4f+1$ & $n-3f$ & $n-f-1$ & $O(b + \lambda)$ \\
\hline
{\nameLite} & {\cardLite} & $3f+1$ & $n-2f-o(f / \log(f))$ & $o(f/\log{f}){\dagger}$ & $O(b + \lambda \kappa \log n)$ \\
\hline
\end{tabular}
}
\caption*{$\lambda$ is a security parameter with fixed size, which uniquely represents the data block. $\kappa$ is a small constant. ${\dagger}$ means the parameter $\eta$ can increase up to $\eta=f$, see Appendix~\ref{sec:appendix:card_design:lsc}.}
\end{subtable}
\label{table:card_comparison}
\end{table}

\section{DA-CR Protocol Designs}
\label{sec:card_design}

Section~\ref{sec:card_property} defines the properties required from a
censorship-resistant data availability layer. This section gives a concrete
construction that realizes these properties while preserving the central design
goal of $\name$: data remains sharded throughout the consensus critical path.

A DA-CR construction must constrain the leader without forcing every validator
to reconstruct the full data object. The leader should not be able to
arbitrarily remove honest mini-blocks, replace them with its own data, or
obtain a certificate for unavailable data. A fully replicated design could
enforce these checks by asking every validator to download the entire payload,
but doing so would eliminate the scalability benefit of sharding.

Instead, ordinary validators in $\name$ vote using only their assigned shards,
compact commitment evidence, and the DA-CR inclusion condition. The challenge
is to make these local checks imply a global guarantee: the certified data is
available, bound to the committed object, and captures enough
replica-contributed mini-blocks. The construction below achieves this by
combining erasure coding with KZG commitments. The erasure code keeps data
available in sharded form, while the linear commitment relation binds the
leader to a single global data object.

\subsection{Data Layout and KZG Linearity}
\label{sec:card_design:data_layout}

The construction keeps data sharded by combining a positional data layout with
the linearity of Reed--Solomon encoding and KZG commitments, see Figure~\ref{fig:card_data_struct}. For a DA-CR
instance, replica $p_i$ contributes a mini-block, or blob, denoted $b_i$. The
corresponding KZG commitment is denoted $c_i=\mathsf{KZG.Commit}(b_i)$. The
leader assembles the replica contributions into an ordered vector
$B=(b_1,\ldots,b_n)$ and the corresponding commitment vector
$\mathbf{c}=(c_1,\ldots,c_n)$, where position $i$ is reserved for replica
$p_i$.

This positional layout is part of the protocol. It lets the DA-CR layer reason
about which replica's contribution is included, which replica is excluded, and
which space is controlled by the leader. Each position has one of three
meanings. If $p_i$ contributes a non-empty mini-block, then position $i$
contains $b_i$ and, in an accountable construction, the leader must provide the
corresponding attestation. If $p_i$ participates but has no transaction data, it
contributes a distinguished NULL mini-block. If $p_i$ is excluded from the
DA-CR instance, then position $i$ is filled with the system default value,
written as $0$, unless the construction explicitly allows the leader to capture
that space. The distinction between NULL and $0$ separates honest empty
participation from leader exclusion.

After forming $B$, the leader applies Reed--Solomon encoding to obtain
$\widehat{B}=\mathsf{Encode}(B)$. Because Reed--Solomon encoding is linear,
each encoded shard is a linear combination of the original mini-blocks. Because
KZG commitments are also linear, the commitment to an encoded shard can be
derived from the same linear combination of the commitments
$c_1,\ldots,c_n$. Concretely, if an encoded shard is
$\widehat{b}_j=\sum_{i=1}^{n}\alpha_{j,i}b_i$, then its commitment is checked
against $\widehat{c}_j=\prod_{i=1}^{n}c_i^{\alpha_{j,i}}$, using additive or
multiplicative notation depending on the commitment group\cite{nazirkhanova2021information}.

This is the mechanism that makes local validation meaningful. Validator $p_j$
does not download the full vector $B$ or reconstruct the full encoded object
$\widehat{B}$. Instead, it receives the systematic commitment list
$\mathcal{C}=(c_1,\ldots,c_n)$, the DA-CR inclusion evidence, and its assigned
encoded chunks. The validator locally extends $\mathcal{C}$ to
$(c_1,\ldots,c_{3n})$ using Reed--Solomon linearity, then checks that each
received chunk commits to the corresponding derived commitment. The check is
therefore local in data but global in commitments: each validator verifies only
a few chunks, while the commitment list binds those chunks to one globally
committed layout of replica mini-blocks.

The leader therefore cannot freely replace an honest mini-block while
preserving the commitment relation, unless the protocol allows that position to
be excluded or captured under the DA-CR rule. Inclusion, tamper-resistance, and
space-capturing are expressed over the global layout of $B$, while validators
remain on a sharded critical path.

\subsection{{\cardOne} Protocol}
\label{sec:card_design:cardOne}

We now instantiate the layout above as a concrete DA-CR protocol. The
{\cardOne} protocol uses Reed--Solomon encoding, KZG commitments, and an
aggregate multi-signature scheme to implement a \primitive{Disperse} invocation
with parameters $(t=n-2f,\eta=0)$. It is an accountable construction:
the leader must show enough replica attestations, and the committed data must
remain consistent with the attested mini-blocks.

\begin{figure}[h]
\centering
\begin{subfigure}{0.5\textwidth}
\centering
\includegraphics[width=\textwidth]{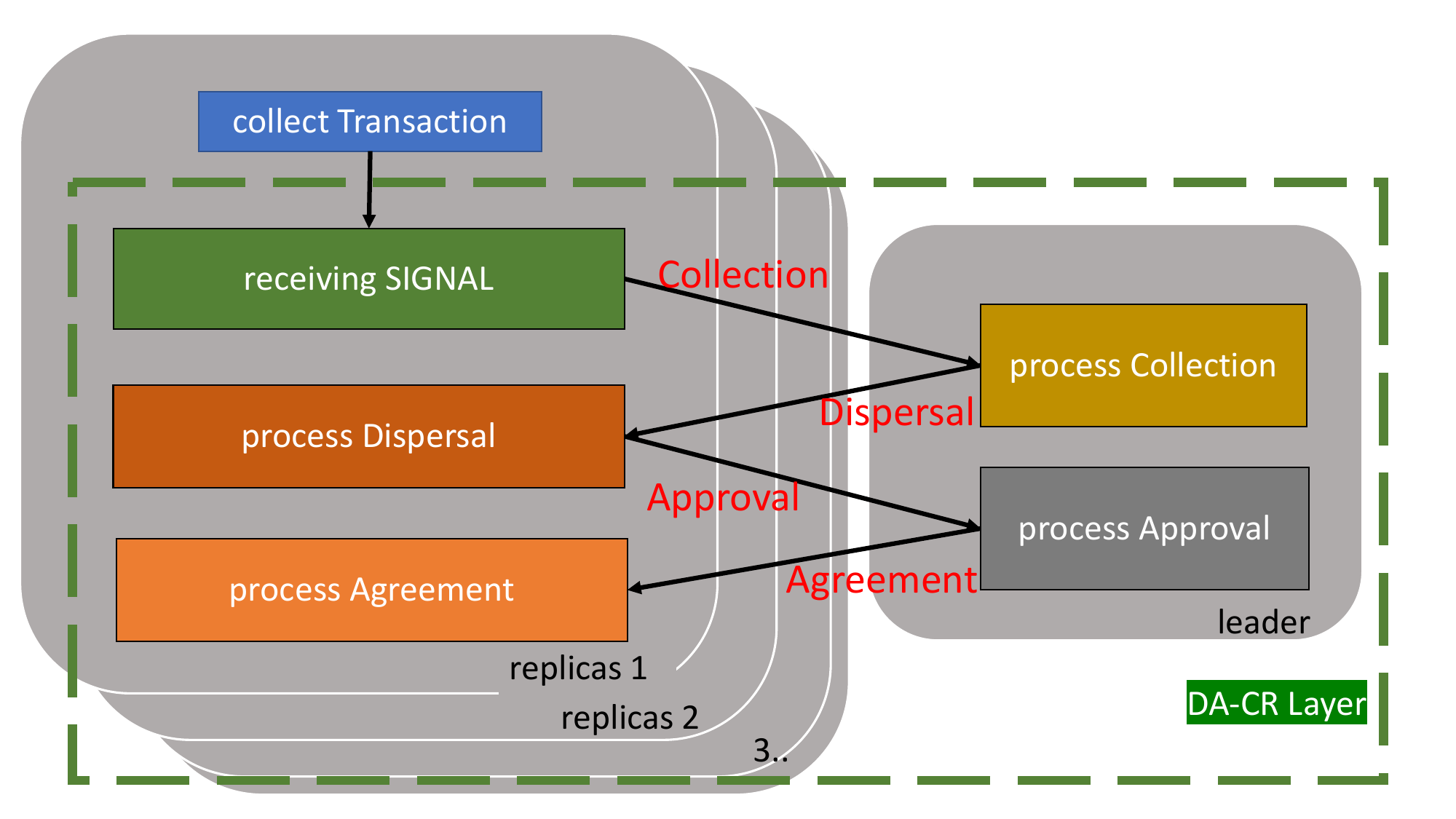}
\caption{Flow diagram for a {\DAL} protocol } 
\label{fig:nav_and_da}    
\end{subfigure}%
\begin{subfigure}{.5\textwidth}
\centering
\includegraphics[width=0.9\textwidth]{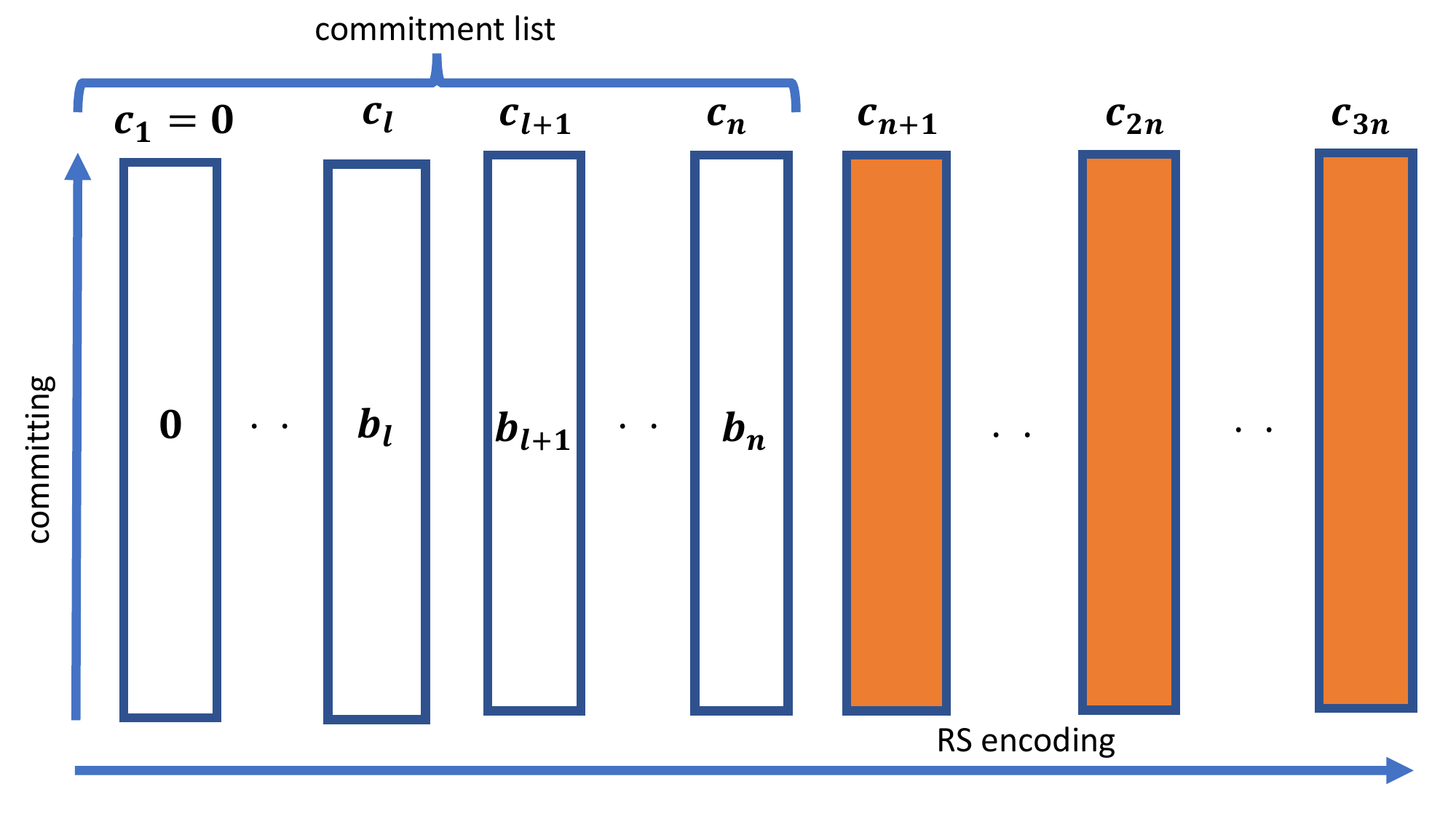}
\caption{A 2D matrix data structure used by the leader for producing $3n$ commitment. In the diagram, replica $1$ is excluded, as its data is $0$. $b$ denotes mini-block(systematic chunk), orange are for parity chunks.} 
\label{fig:card_data_struct}
\end{subfigure}
\end{figure}

At a high level, \primitive{Disperse} consists of four messages, see Figure~\ref{fig:nav_and_da}. In the
\textsf{Collection} phase, replicas send mini-blocks and attestations to the
leader. In the \textsf{Dispersal} phase, the leader encodes the collected
mini-blocks, commits to the encoded object, and sends each replica its assigned
shards with the corresponding commitment evidence. In the \textsf{Approval}
phase, replicas verify their local shards and the inclusion evidence, then sign
the resulting DA-CR commitment. In the \textsf{Agreement} phase, the leader
aggregates $n-f$ approvals into a certificate for the completed DA-CR instance.

The invocation is triggered by the surrounding BFT protocol. We model this
trigger as a \textsf{Signal} message carrying the current view number. The
details of how this signal is produced by the BFT pacemaker are orthogonal to
the DA-CR construction and are discussed in
Section~\ref{sec:bigDipper_hotstuff}. For this section, it is enough to assume
that all honest replicas start the DA-CR instance for the same view.

Before giving the protocol, we fix the notation used in
Algorithm~\ref{alg:card}. The variable $v$ denotes the current BFT view and also
identifies the corresponding {\DAL} instance. Replica $p_i$'s mini-block in
view $v$ is $b_i^v$, its KZG commitment is
$c_i^v=\mathsf{kzgCommit}(b_i^v)$, and its attestation is
$a_i^v=\mathsf{sign}(c_i^v\|v)$. The leader collects such attestations into a
set $S$, also denoted {\al}. In the dispersal message, the first $n$
commitments $(c_1,\ldots,c_n)$ correspond to the systematic positions: if
$p_i$'s mini-block is included then $c_i=\mathsf{kzgCommit}(b_i^v)$, and if
$p_i$ is missing or excluded then $c_i=0$.

Each replica receives $(c_1,\ldots,c_n)$, the attestation set $S$, and its
assigned parity chunks. It locally extends $(c_1,\ldots,c_n)$ to
$(c_1,\ldots,c_{3n})$ using the Reed--Solomon linearity, and signs the digest
$C=\mathsf{CRHF}(c_1,\ldots,c_{3n})$ if all checks pass. A completed {\DAL}
instance returns $C$ together with an aggregate signature from at least $n-f$
replicas.

The four-message flow is shown in Algorithm~\ref{alg:card}. The algorithm is
written from the perspective of both a regular replica and the leader. A regular
replica starts a {\DAL} instance after receiving a \textsf{Signal}, sends its
mini-block and attestation in a \textsf{Collection} message, verifies the
leader's \textsf{Dispersal} message, and signs the resulting commitment. The
leader collects mini-blocks, encodes them, disperses shards, and aggregates
$n-f$ approvals into an \textsf{Agreement} certificate.

\begin{algorithm}[!htbp]
    {\fontsize{7.5pt}{7.5pt}\selectfont \caption{{\cardOne}}
    \label{alg:card}
    \begin{algorithmic}[1]
        \State \underline{\textbf{At the replica } $p$ :}
        \State $\textit{myView} \gets 0$, $\textit{aggSigs}[\cdot] \gets \emptyset$, $\textit{db} \gets \emptyset$
        \Event{$ \textit{receiving SIGNAL}(v)$} \label{alg:card:line:signal_start}
            \State $\textit{myView} \gets v$
            \State $ b_p^v \gets mempool$
            \State $c_p^v \gets kzgCommit(b_p^v)$
            \State $a_p^v \gets sign(c_p^v|| v)$
            \State $\text{send } \langle \textsf{Collection}, a_p^v, b_p^v\rangle \text{ to leader}$
        \EndEvent \label{alg:card:line:signal_end}
        \Event{$ \textit{receiving} \textsf{ Dispersal}$}
            \State $\{c_i\}_{1\le i \le 3n} \gets extendEval(\{c_i\}_{1\le i \le n})$ \label{alg:card:line:dipserse_start}
            \State {$a \gets (checkCom(c_{p}, b_p^v) \lor c_p =0)\wedge checkInc(S, \{c_i\}_{1\le i \le n}) $} \label{alg:card:line:sig_check} \label{alg:card:line:check1}\\
            \hspace*{8mm} {$ \wedge \text{ } checkCom(c_{p+n}, chunk_{p+n})$ $\wedge$ $ checkCom(c_{p+2n}, chunk_{p+2n})$}  \label{alg:card:line:dipserse_check_end} \label{alg:card:line:check2}
            \If{$a=\true$}
                \State $db[c_p,c_{p+n},c_{p+2n}] \gets (b_p^v, chunk_{p+n}, chunk_{p+2n})$
                \State $C \gets CRHF(\{c_i\}_{1 \le i \le 3n})$
                \State $\sigma_{p} \gets  \textit{ms.sign}(sk_p,C)$
                \State $\text{send } \langle \textsf{Approval}, \sigma_{p}\rangle$ to leader
            \EndIf
        \EndEvent \label{alg:card:line:dipserse_end}
        \Event{$ \textit{receiving } \textsf{Agreement}$}
            \If{$\textit{ms.verify}(m.\sigma_{agg}, m.I_{\sigma_{agg}}, m.C)$} 
                \State $aggSigs[m.g, m.v] \gets (m.\sigma_{agg}, m.I_{\sigma_{agg}})$
            \EndIf
        \EndEvent
        \State \underline{\textbf{At the leader:}}
        \Event{$ \textit{receiving } \textsf{Collection} \wedge \textit{ validCollect(m) } \wedge \textit{reach-proc-time}   $} \label{alg:card:line:collection_start}
            \State $B \gets \{ m.b_p^v | m.a_p^r \in S \} $
            \State $\{c_i, chunk_i\}_{1 \le i \le 3n}\gets encode2d(B)$
            \For{$i \gets 1 \cdots n $}
                \State {$\text{send i } \langle \textsf{Dispersal}, \{c_i\}_{1\le i \le n}, chunk_{i+n}, chunk_{i+2n}, S \rangle$}
            \EndFor       
        \EndEvent  \label{alg:card:line:collection_end}
        \Event{$\textit{receving }\textsf{Approval} \wedge validApproval(m) $} \label{alg:card:line:aggregate_start}
            \State $\sigma_{agg}, I_{\sigma_{agg}} = \text{ms.agg}( \{m.\sigma_{i_p}, pk_{i_p}\}_{1 \le p \le n-f})$
            \For{\texttt{i = 1..n}}
                \State {$\text{send replica i } \langle \textsf{Agreement}, I_{\sigma_{agg}}, \sigma_{agg}, C \rangle$}
            \EndFor
        \EndEvent  \label{alg:card:line:aggregate_end}
    \end{algorithmic}}
\end{algorithm}

\begin{algorithm}[!htbp]
    {\fontsize{7.5pt}{7.5pt}\selectfont \caption{utility functions}
    \label{alg:utility}
    \begin{algorithmic}[1]
        \Procedure{$\text{checkCom}$}{ $commitment$, $chunk$}
            \State \Return $kzgCommit(chunk) = commitment$
        \EndProcedure
        \Procedure{$\text{extendEval}$}{ $\{c_i\}_{1\le i \le n}$}
            \State \Return $rsEncode(\{c_i\}_{1\le i \le n}, 1/3)$ $  \textit{\# 1d RS to 3n with systematic encoding}$
        \EndProcedure
        \Procedure{$\text{checkInc}$}{ $S$, $\{c_i\}_{1\le i \le n}$} $\quad \textit{S is the attestation set}$ \label{alg:utility:line:checkInc} 
            \State $O \gets \emptyset$, $A \gets \{1 \cdots n\}$
            \For{$a_p^v \textit{ in } S$}
                \If{$\textit{verifySig}(a_p^v || myView, c_p) = \true \wedge c_p \neq 0 $}
                    \State $O \gets O \cup p$
                \Else
                    \State \Return $\false$
                \EndIf
            \EndFor
            \For{$p \gets A - O$} \label{alg:utility:line:spaceCapture}
                \If{$c_p \neq 0$}
                    \State \Return $\false$
                \EndIf
            \EndFor
            \State \Return $|O| \ge n - f$ 
        \EndProcedure 
        \Procedure{$\text{validCollect}$}{ $m$}
            \If{$\textit{verifySig}(msg.a_p^v || myView, \text{kzgCommit}(m.b_p^v)) $}
                \State $S \gets S \cup m.a_p^v$  
            \EndIf
            \If{$len(S) \ge n-f$} 
                \State \Return $\true, S$
            \EndIf
        \EndProcedure
        \Procedure{$\text{validApproval}$}{ $m$}
            \If{$\textit{ms.verify}(\sigma_p, pk_p) $}
                \State $M \gets M \cup (m, pk_p)$
            \EndIf
            \If{$len(M) \ge n-f$} 
                \State \Return $\true, M$
            \EndIf
        \EndProcedure
        \Procedure{$\text{encode2d}$}{ $B$}
            \State $\{chunk_i\}_{1 \le i \le 3n} \gets \text{rsEncode2d}(B, 1/3)$
            \For{$i \gets 1 \cdots 3n$}
                \State $c_i \gets \text{kzgCommit}(chunk_i)$
            \EndFor
            \State\Return $\{c_i, chunk_i\}_{1 \le i \le 3n}$
        \EndProcedure
    \end{algorithmic}}
\end{algorithm}  

\subsubsection{Why the Checks Give $(t=n-2f,\eta=0)$}

The checks in Algorithm~\ref{alg:card} enforce the parameters claimed for
{\cardOne}. The function \textsc{checkInc} requires the leader to present
$n-f$ valid attestations for non-zero systematic commitments. Since at most $f$
of these attestations can come from Byzantine replicas, at least $n-2f$
included mini-blocks are from honest replicas, giving $t=n-2f$.

The same check prevents space capturing. For any systematic position without a
valid attestation, \textsc{checkInc} requires the corresponding commitment to
be zero. Hence the leader cannot use an excluded replica's position for its own
data, giving $\eta=0$.

Finally, data tampering is prevented by the combination of attestations and KZG
checks. If the leader uses replica $p_i$'s attestation, then the systematic
commitment $c_i$ must match the mini-block attested by $p_i$. The parity
commitments are derived from the systematic commitments using the linearity of
Reed--Solomon encoding and KZG commitments, and each approving replica verifies
the chunks it receives against those derived commitments. Therefore the leader
cannot change an attested mini-block or its encoded shards while still passing
validation, except by breaking commitment binding. 

The remaining properties---termination, availability, correctness, and
commitment binding---follow from the dispersal certificate, erasure-coded
retrievability, and KZG binding; full proofs appear in
Appendix~\ref{sec:appendix:card7_proof}.

\subsubsection{\primitive{Retrieve} Invocation}
\label{sec:card_design:retrieve}

The \primitive{Retrieve} invocation is formally defined in
Algorithm~\ref{alg:retrieve} in Appendix~\ref{sec:appendix:retrieve_algorithm}.
It is the read interface for data certified by a completed {\DA} instance.
Retrieval may be invoked by clients, replicas, or upper-layer services that need
full or partial data. It is not part of the consensus voting path: validators
approve a dispersal by checking their local chunks, the commitment list, and
the DA-CR inclusion evidence during \primitive{Disperse}.

In the common case, a retriever requests the desired data from the leader or
from the replica that stores the corresponding chunk. For a full mini-block,
correctness is checked by recomputing the KZG commitment and comparing it with
the corresponding entry in {\cl}. For partial data, the retriever requests the
data and witness and verifies them against the same commitment. If the leader
or responsible replica does not respond, the retriever queries other replicas,
collects chunks consistent with the commitments, and runs \textit{rsDecode} to
recover the desired data.

The retrieved data is then consumed by the upper layer. An application may run
a derivation and execution bundle, exposed through RPC endpoints, that downloads
certified data, verifies it against {\cl}, derives the application payload, and
serves results to users. This execution role may be implemented by full-data
executors, an application committee, TEEs, or software whose outputs are checked
by zk or optimistic proofs. These choices are outside the {\DA} protocol. The
{\DA} layer guarantees availability and commitment binding; the upper layer
decides how certified data is executed and served.

Because each replica stores three chunks in {\cardOne}, a retriever needs to
contact a fraction of replicas to reconstruct data with high probability.
Appendix~\ref{sec:appendix:retrieve:num_reconstruct} gives the analysis.

\subsection{{\cardQuarter}: A Non-Accountable Variant}
\label{sec:card_quarter}

{\cardQuarter} explores the low-overhead end of the DA-CR design space. Unlike
{\cardOne}, where every replica receives the commitment list {\cl} and the
attestation list {\al}, {\cardQuarter} removes accountable inclusion entirely.
Replicas send mini-blocks to the leader without attestations; after collecting
at least $q\ge n-f$ mini-blocks, the leader concatenates the collected data,
encodes it with a single Reed--Solomon polynomial, and commits to it with one
KZG commitment. During dispersal, the leader sends coded data only to the
replicas whose mini-blocks were collected, so each dispersal message contains
only constant-size commitment evidence and coded data.

The cost is weaker public accountability. In {\cardOne}, even an excluded
replica can verify the attestation list and approve that enough replica data was
included. In {\cardQuarter}, an excluded replica has no public evidence linking
included mini-blocks to their original senders, so only included replicas can
safely approve. The leader may therefore have to complete with only $n-2f$
approvals: up to $f$ honest replicas may be excluded, and up to $f$ Byzantine
replicas may refuse to approve. Since as many as $f$ approvers can be
Byzantine, reconstruction must succeed from at least $n-3f$ honest approvers,
which requires the stronger assumption $n\ge 4f+1$. Thus {\cardQuarter}
achieves constant per-replica dispersal overhead, but with weaker inclusion,
larger residual leader influence, and a stronger liveness threshold, as
summarized in Table~\ref{table:card_comparison}. Full details and proofs appear
in Appendix~\ref{sec:appendix:non-accountables:proofs_cardQuarter}.


\subsection{{\cardLite}: Sampling-Based Accountable Inclusion}
\label{sec:card_lite}

{\cardLite} is an optimization of {\cardOne} that keeps accountable inclusion
and the standard $n\ge 3f+1$ assumption, while avoiding the need to send the
full commitment list {\cl} and attestation list {\al} to every replica. The
construction is DAS-like in spirit: instead of having every replica verify the
entire accountable structure, each replica checks a small number of sampled
positions. The goal is to preserve the key guarantee of {\cardOne} with only $O(\log n)$
per-replica overhead.

The main cryptographic tool is a doubly homomorphic commitment
scheme~\cite{bunz2021proofs}. In {\cardLite}, the entries of {\cl} are KZG
commitments, and the leader commits to this commitment list itself. This
``commitment to commitments'' lets the leader open a sampled position with a
logarithmic-size witness with pedersen commitment scheme: for a sampled replica position $i$, the leader reveals
the attestation $a_i$, the mini-block commitment $c_i$, and a proof that $c_i$
is indeed the $i$-th entry of the committed {\cl}. Thus each sample checks both
accountability and consistency with the underlying 2D KZG structure.

Sampling is used to distinguish an honest leader from a censoring leader. An
honest leader that includes enough mini-blocks can answer sampled challenges;
a leader that omits many mini-blocks loses the ability to answer too many of
them, because the corresponding attestations or commitment openings are
missing. To avoid an interactive challenge protocol, {\cardLite} derives the
sampled positions from the final commitment using the Fiat--Shamir transform.
The leader sends each replica only one valid sampled batch, and the replica
recomputes the challenges and verifies the batch locally.

With appropriate parameters, this sampling test detects a leader that censors
$\Omega(f/\log f)$ mini-blocks with overwhelming probability, while an honest
leader with up to $f$ missing mini-blocks caused by Byzantine inactivity still
passes. Therefore {\cardLite} achieves
$t=n-2f-o(f/\log f)$ with $O(\log n)$ overhead, as summarized in
Table~\ref{table:card_comparison}. The full protocol, parameter choice, and
analysis are deferred to Appendix~\ref{sec:appendix:card_lite}.

\section{Transaction Broadcast}
\label{sec:navigation}

The $\DAL$ layer guarantees that a completed instance captures enough
replica-contributed mini-blocks. Transaction broadcast determines whether a
client's transaction reaches those mini-blocks before they are submitted. This
section describes {\navigation}, a simple broadcast protocol in which a client
chooses a fanout parameter $x$ and sends the transaction to $x$ available
replicas for a target view $v$.

The fanout parameter exposes the short-term censorship-resistance/cost
tradeoff at the client layer. Larger $x$ gives stronger prompt-inclusion
probability but consumes more data-availability bandwidth; smaller $x$ is
cheaper but accepts higher censorship risk. This avoids forcing every
transaction to pay for maximum censorship resistance, which would recreate a
form of full replication at the submission layer. The rest of this section
defines {\navigation} and analyzes deterministic inclusion, probabilistic
inclusion, and broad-censorship bounds.

\subsection{{\navigation} Protocol}
\label{sec:navigation:protocol}

The {\navigation} protocol is defined in Algorithm~\ref{alg:navigation}. A
client submits a transaction by choosing three inputs: the transaction $tx$, a
fanout parameter $x$, and a target view $v$. The fanout $x$ is the number of
distinct replicas the client wants to contact. The target view $v$ identifies
the {\DAL} instance in which the client wants prompt inclusion.

The client first calls \textsc{getAvailableReplicas} to obtain replicas that
can still include transactions in their mini-blocks for view $v$. A concrete
implementation of this procedure is given in
Appendix~\ref{sec:appendix:navigator:tx_broadcast_sync}. If at least $x$
available replicas are returned, the client sends $\langle tx,v\rangle$ to
each of them. Each replica responds with $1$ if it can still place the
transaction into its mini-block for view $v$, and $0$ otherwise. The number of
positive responses is returned to the client and can be used as the effective
fanout in the inclusion-probability analysis.

The view number $v$ should be interpreted as a prompt-inclusion target, not as
an additional consensus safety assumption. A client that cares about
short-term censorship resistance can pre-schedule a transaction by sending it
for a future view. An honest replica that receives $\langle tx,v\rangle$
before forming its mini-block for view $v$ stores the transaction with that
view tag and includes it when the corresponding {\DAL} instance starts. If the
transaction arrives too late, the replica returns failure and the client can
retry for a later view. Thus client timing affects the achieved fanout, and
therefore the inclusion probability for the target view, but it does not affect
BFT safety or the {\DAL} validity rule.

The target view \(v\) in {\navigation} should be interpreted as a
prompt-inclusion target, not as an additional consensus safety assumption. A
transaction receives the prompt-inclusion guarantee for view \(v\) only if it
reaches honest replicas before they form their mini-blocks for that view. This
is analogous to slot-cutoff assumptions in multi-proposer protocols, where
transactions must reach honest proposers before the relevant slot deadline to
receive that slot's inclusion guarantee~\cite{mcp}. If a transaction arrives
too late, the replica returns failure and the client can retry for a later view;
this affects the achieved fanout and inclusion probability, but not BFT safety
or the {\DAL} validity rule.

\begin{algorithm}[!t]
    {\fontsize{7.5pt}{7.5pt}\selectfont \caption{\navigation}
    \label{alg:navigation}
    \begin{algorithmic}[1]
        \State \underline{\textbf{At the client:} }
        \Procedure{$submit$}{ $x$, $tx$, $v$}
            \State $S  \gets getAvailableReplicas(x, v)$, $cn \gets 0$ \label{alg:navigation:line:get_avail}
            \If{$ |S| < x$}
                \State\Return $FAIL, |S|$
            \EndIf
            \For{$j \text{ in } S $}
                \State $r \gets \text{send } \langle tx,v \rangle \text{ to replica } j$
                \State $cn \gets cn + r$
            \EndFor
            \State\Return $cn$
        \EndProcedure
        \State \underline{\textbf{At the replica } $p$ :}
        \Event{$ \textit{receiving } \langle tx, v \rangle $}
            \If{$(v > \textit{myView}) \lor (v = \textit{myView } \wedge \text{waiting for Signal})$}
                \State $\text{insert } \langle tx, v \rangle \text{ to mempool}$ \textit{ and respond 1}
            \Else
                \State $\textit{respond 0}$
            \EndIf
        \EndEvent
    \end{algorithmic}}
\end{algorithm}

\subsection{Inclusion Guarantees}
\label{sec:navigation:inclusion_guarantees}

The inclusion probability of a transaction is determined by two quantities:
the fanout \(x\), i.e., the number of distinct available replicas contacted by
the client, and the number \(q\) of honest mini-blocks captured by the completed
\({\DAL}\) instance. The value of \(q\) is always at least the
\property{Inclusion-\(t\)} parameter of the \({\DAL}\) protocol. When the
leader is honest, \(q\) may be larger, up to \(n-f\), depending on network
timing and the collection waiting time. In the worst case, a malicious leader
may capture only the minimum \(t\) honest mini-blocks allowed by the \({\DAL}\)
rule.

\begin{table}[!hbt]
\caption{Probability of inclusion as a function of client fanout and honest mini-block inclusion.}
\begin{subtable}[h]{\linewidth}
\centering
\resizebox{0.6\textwidth}{!}{
\begin{tabular}{|c | c | c  | c|}
\hline
Leader type          &  Malicious    & Honest & Honest \\
\hline
number honest mini-block       &  $q=t$        & $q=t$  & $q=n-f$\\
\hline \hline
$n-t+1 \le x \le n$       &  $\Pr(IN)=1$  & $\Pr(IN)=1$ & $\Pr(IN)=1$\\
\hline
$f + 1 \le  x <n-t+1$    &  see~\ref{sec:bigdipper_prob_bad_leader}  & see~\ref{sec:navigation:honest_leader} & $\Pr(IN)=1$\\
\hline
$0 <  x <f+1$    &  {$\Pr(IN)=0$}  & see~\ref{sec:navigation:honest_leader} & see~\ref{sec:navigation:honest_leader} \\
\hline
\end{tabular}
}
\caption*{\(x\) is the number of transaction copies sent to distinct available replicas. \(q\) is the number of honest mini-blocks captured by the completed \({\DAL}\) instance.}
\end{subtable}
\label{table:prob_inclusion_table}
\end{table}

Table~\ref{table:prob_inclusion_table} summarizes the main regimes. It is
computed under the conservative assumption that all \(f\) Byzantine replicas
actively censor the transaction. The table separates deterministic cases, where
inclusion follows by counting, from probabilistic cases analyzed in the next
subsection.
The deterministic entries follow from the pigeonhole principle. If a client
contacts \(x\) replicas and the completed \({\DAL}\) instance captures \(q\)
honest mini-blocks, then inclusion is guaranteed whenever \(x+q\ge n+1\).
In particular, because every completed \({\DAL}\) instance captures at least
\(t\) honest mini-blocks, sending to \(n-t+1\) available replicas guarantees
inclusion.

The opposite extreme occurs under a malicious leader when \(x\le f\). In that
case, a targeted transaction may be delivered only to replicas whose
mini-blocks the leader can avoid, so the transaction can be censored in a
single instance. For the remaining non-binary cases, the exact inclusion
probability depends on the pair \((x,q)\). Section~\ref{sec:navigation:honest_leader}
derives the probability under an honest leader, Section~\ref{sec:bigdipper_prob_bad_leader}
handles the malicious-leader case, and Section~\ref{sec:bigdipper_upper_bound}
bounds broad censorship across many independently submitted transactions.

\subsection{Probabilistic Transaction Inclusion}
\subsubsection{Probability of Inclusion with an Honest Leader}
\label{sec:navigation:honest_leader}

Given the leader is honest
%
and the client uses Algorithm~\ref{alg:navigation} to choose $x$ replicas randomly without replacement,
let $X$ be a random variable indicating the number of honest replicas that have the client's transaction.
For $0<i \le \min{(x, n-f)}$, the probability of the transaction being included by $X=i$ honest replicas is 
$\Pr(X=i;x) = {n-f \choose i}{f \choose x-i} / {n \choose x}$.


%
%
%
Assume $f$ malicious mini-blocks has $0$ network latency, and are collected by the honest leader, the number of collected honest mini-block $q \ge n-2f$, because the leaders need to at minimal have $n-f$ mini-blocks.
%
The probability of inclusion by any honest mini-block is
%
%
%
%
\begin{align*}
    \centering
    \label{eq:1}
    \Pr(\textit{IN; x,q} )   
             &= 1 - \sum_{i=1}^x \Pr(\textit{None } | X=i; q ) \Pr{(X=i;x)} \tag{1}
\end{align*}
where $\Pr(\textit{None } | X; q)$ is the probability none of $X$ honest mini-block containing the client's transaction. If $X + q > n$, the probability of exclusion $\Pr(\textit{None } | X;q) = 0$. 
If all honest replicas have independent and identically distributed network latency, we can explicitly calculate the conditional probability as 
$\Pr(\textit{None } | X=i;q ) = \prod_{j=0}^{q-1} \frac{n-f-i-j}{n-f-j}$, and the probability of exclusion decreases exponentially as $q$ increases.
%
%
%
In most cases, a client only needs to send a few transactions to achieve high probability of inclusion. We provide an empirical evaluation in Appendix~\ref{sec:appendix:navigator:empirical_eval_with_honest_leader} and presents an entire table of probability in Figure~\ref{fig:heatmap_honest_leader}.

\subsubsection{Probability of Inclusion with a Malicious Leader}
\label{sec:bigdipper_prob_bad_leader}
When a leader is malicious, but a client sends $x \ge f+1$ copies of a transaction, 
%
%
%
although some of them might be received by malicious replicas, the chance of $f+1$ honest replicas receiving it increases with $x$.
Since the malicious leader must include $f+1$ honest mini-block, 
%
%
the probability of inclusion equals to the probability that the client delivers the transaction to at least $f+1$ honest replicas, which can be computed $\Pr(IN) = \sum_{i=f+1}^{x}({n-f \choose i} {f \choose  x-i} / {n \choose x})$, where $f+1 \le x < n-t+1$. 
Further analysis is provided in Appendix~\ref{sec:appendix:nagivator:malicious_leader_f1}. 

\subsection{Upper Bound on Total Censored transactions}
\label{sec:bigdipper_upper_bound}

A malicious leader may be able to censor a targeted transaction when the client
uses small fanout. However, censoring many independently submitted transactions
is harder. Suppose $T$ clients each send $x$ copies to random distinct replicas.
Let $p$ be the probability that a transaction reaches at least one replica that
the adversary cannot avoid while satisfying the DA-CR inclusion rule. Under the
sampling model used in Appendix~\ref{sec:appendix:nagivator:malicious_leader_upper},
we have
$
    p = 1- \prod_{i=0}^{x-1} \frac{n-f}{n-i}.
$
Using Hoeffding's inequality and a union bound over adversarial choices of
captured replicas, the probability that the adversary censors an $\alpha$
fraction of the $T$ transactions is at most
\[
    2 {n \choose n-f} \exp\{-2(p+\alpha-1)^2T\}.
\]
Thus, for fixed $x$ and $\alpha$, the probability of broad censorship decreases
exponentially in the number of submitted transactions.

\subsection{Erasure Coded {\navigation}}

{\navigation} should be viewed as a minimal submission interface rather than
the only possible dissemination strategy. It sends full transaction copies to a
chosen set of available replicas, which makes the cost/security tradeoff easy
to analyze. More sophisticated protocols can be built at the same boundary. For
example, Sedna shards a transaction into coded symbols and disseminates those
symbols across proposer lanes, addressing the user-side dissemination problem
left open by MCP-style protocols: full replication improves censorship
resistance but hurts goodput and exposes payloads, while sending to few
proposers gives weaker censorship and latency guarantees~\cite{sedna}. A
Sedna-style strategy could replace full-copy submission in {\navigation},
improving bandwidth efficiency and providing transaction hiding until enough
symbols are reconstructed. The DA-CR and BFT layers are orthogonal to this
choice; they only require that enough replica-contributed data is captured by
the completed instance.

\section{{\name} with BFT Integration}
\label{sec:bigDipper_hotstuff}

\subsection{Integration Overview}

The $\DAL$ layer certifies that replica-contributed data is available and that
the leader has captured enough mini-blocks according to the DA-CR rule. To turn
this into censorship-resistant consensus, the surrounding BFT protocol must
enforce one additional condition: a leader should not be able to commit a BFT
block unless the block consumes a completed $\DAL$ output. We call this
enforcement rule \emph{\scooping}.

The integration is modular. We do not change the lock-commit safety argument of
the underlying leader-based BFT protocol. Instead, we strengthen the proposal
validity condition. A proposal is valid only if it is accompanied by a valid
$\DAL$ certificate for the corresponding view and commits to consuming the data
certified by that $\DAL$ instance. If the certificate is missing, invalid, or
unrelated to the proposal, honest replicas reject the proposal and do not
proceed to lock or commit it.

Scooping does not prescribe the application-level ordering or execution
semantics of the certified data. The BFT layer only checks that the proposal is
backed by a completed $\DAL$ instance. How the certified mini-blocks are later
derived into an application payload, ordered internally, executed, and served to
users is an upper-layer responsibility. For example, an application may use a
simple deterministic round-robin derivation over mini-blocks, or it may use a
more expressive ordering rule with additional validity evidence.

The rest of this section first defines the \scooping check, and then shows how
it is inserted into HotStuff-2~\cite{cryptoeprint:2023/397}. The key point is
that $\DAL$ is consumed as a deterministic validity predicate inside the BFT
protocol: HotStuff-2 still decides blocks through its normal voting and locking
rules, but honest replicas only vote for proposals that reference a completed
$\DAL$ instance and satisfy the scooping rule.

\begin{algorithm}[!t]
    {\fontsize{7.5pt}{7.5pt}\selectfont \caption{$\textit{Scooping}$}
    \label{alg:scooping}
    \begin{algorithmic}[1]
        \State \underline{\textbf{At the replicas }$p$: }
        \Procedure{$\textit{verifyCard}$}{ $commit, v$} \label{alg:scooping:line:validateCard_start}
            \State \Return \text{ms.verify(}\textit{aggSigs}\text{[} $commit,v$\text{])}
        \EndProcedure \label{alg:scooping:line:validateCard_end}
    \end{algorithmic}}
\end{algorithm} 

\subsection{{\scooping} Algorithm}
\label{sec:scooping_protocol}

Algorithm~\ref{alg:scooping} defines the minimal validity check inserted into
the BFT path. The procedure \textsc{verifyCard} checks local storage for a
valid aggregate signature produced by the $\DAL$ protocol for the corresponding
commitment and view. If the check succeeds, the replica has evidence that the
referenced $\DAL$ instance completed and that the data is available according
to the $\DAL$ certificate.

The check is deliberately small: mini-block collection, encoding, shard
verification, and aggregate-signature formation are handled entirely by the
$\DAL$ protocol in Section~\ref{sec:card_design}. The BFT layer only verifies
that the proposal refers to a completed $\DAL$ instance before the replica
votes, locks, or commits. A proposal that fails this check is treated as an
invalid proposal and rejected by honest replicas.

\subsection{{\nameOne} with HotStuff-2}
\label{sec:bigDipper_hotstuff}

We instantiate the integration with HotStuff-2~\cite{cryptoeprint:2023/397}.
The integration preserves the HotStuff-2 lock-commit structure: replicas still
use the same certificates, locks, and commit rule. The only change is that the
block-validity predicate is strengthened with the \scooping check. A replica
does not lock a HotStuff-2 block unless the block is backed by a completed
{\DAL} instance for the corresponding view.

The integration is lightweight because the four {\cardOne} messages align with
the four HotStuff-2 messages. HotStuff-2 uses \textsf{propose}, \textsf{vote},
\textsf{prepare}, and \textsf{vote2}; {\cardOne} uses \textsf{Dispersal},
\textsf{Approval}, \textsf{Agreement}, and \textsf{Collection}. We piggyback
each {\cardOne} message on the corresponding HotStuff-2 message:
\textsf{propose} carries \textsf{Dispersal}, \textsf{vote} carries
\textsf{Approval}, \textsf{prepare} carries \textsf{Agreement}, and
\textsf{vote2} carries the next-view \textsf{Collection}. Thus the integration
does not add a new BFT phase; it fills the existing HotStuff-2 message schedule
with the DA-CR messages needed for the next block.

Algorithm~\ref{alg:hotstuff-scooping} summarizes the modifications to
HotStuff-2, and Figure~\ref{fig:hotstuff-scooping} gives the full message flow
across views. The full integration details are given in
Appendix~\ref{sec:appendix_integration_hotstuff-2}. The important point for the
main text is that {\name} does not replace HotStuff-2's safety core; it only
prevents a leader from making progress with a proposal that has not scooped a
completed {\DAL} output. The integration fits all
DA-CR parts inside the two-phase BFT protocol, and requires no change to the pacemaker
during leader rotation. 

\begin{algorithm}[!t]
    {\fontsize{7.5pt}{7.5pt}\selectfont \caption{$\textbf{Hotstuff-2 adaptation}$}
    \label{alg:hotstuff-scooping}
    \begin{algorithmic}[1]
        \State \underline{\textbf{Modification to Enter(1) } }
        \State \textbf{As a leader}, no modification. if entering the view with double certificate, {$(C_{v-1}(C_{v-1}(B_{k-1})))$}, then proceed directly to Propose(2); otherwise the leader sets a timer of $3\Delta$ then proceeds to the Propose(2)
        \State \textbf{As a replica}, if entering a view with a double certificate, then proceed directly to Vote-and-commit(3); otherwise the replica sends its locked certificate and the corresponding valid combined signature of {\DA} to the leader along with \textbf{Collection} message prepared from Alg.{\cardOne} line~\ref{alg:card:line:signal_start}-~\ref{alg:card:line:signal_end}\label{alg:hotstuff-scooping:line:replica-enter}
        \State \underline{\textbf{Modification to Propose(2) } }
        \State The leader instead of include the BFT block, it performs Alg.{\cardOne} line ~\ref{alg:card:line:collection_start}-~\ref{alg:card:line:collection_end} to encode the collected mini-blocks, then broadcast the Propose and \textbf{Dispersal} message together. Any certified block must come with a valid combined signature, otherwise it 
        must not be used for extending the next block. \label{alg:hotstuff-scooping:line:leader_propose}
        \State \underline{\textbf{Modification to Vote and commit(3) } }
        \State Before proceeding any execution, the replica performs Alg.{\cardOne} line ~\ref{alg:card:line:dipserse_start}-~\ref{alg:card:line:dipserse_end} to checks the dispersal correctness and send \textbf{Approval} message, then proceeds to Hotstuff-2 \label{alg:hotstuff-scooping:line:vote_and_commit}
        \State \underline{\textbf{Modification to Prepare(4) } }
        \State The leader runs Alg.{\cardOne} line~\ref{alg:card:line:aggregate_start}-~\ref{alg:card:line:aggregate_end} and send \textbf{Agreement} message,
        then proceeds to Hotstuff-2
        \State \underline{\textbf{Modification to Vote2(5)} }
        \State Replicas performs procedures from Alg.Scooping ~\ref{alg:scooping:line:validateCard_start}-~\ref{alg:scooping:line:validateCard_end} before locking the BFT block. If succeed, send along with the vote2 message a \textbf{Collection} message prepared from Alg.{\cardOne} line~\ref{alg:card:line:signal_start}-~\ref{alg:card:line:signal_end}. \label{alg:hotstuff-scooping:line:optimistic_send_mini_block}
    \end{algorithmic}}
\end{algorithm}

\subsection{Pacemaker-Driven CARD Instances}
\label{sec:bigdipper:collection_waiting}

The abstract \event{Signal} used by {\cardOne} is supplied by the HotStuff-2
pacemaker. No pacemaker modification is required: when replicas enter view
\(v\), the same view number identifies the corresponding {\DAL} instance, and
replicas prepare mini-blocks for that view. Thus {\DAL} instances run at the
same frequency as BFT views. The leader starts dispersal after receiving enough
\textsf{Collection} messages and after the local condition
\textit{reach-proc-time} holds. This condition represents a tunable collection
deadline: waiting longer may include more honest mini-blocks, improving the
transaction-broadcast guarantees in Section~\ref{sec:navigation}, while waiting
less preserves lower latency. The detailed discussion of signal frequency and
waiting-time choices appears in Appendix~\ref{sec:card_design:freq_waiting}.

\subsection{Safety and Liveness}

The integration preserves HotStuff-2 safety because it only restricts the set
of proposals that honest replicas are willing to vote for. Honest replicas
continue to follow the ordinary HotStuff-2 voting, locking, and commit rules.
They additionally require the \scooping predicate to hold before locking or
committing a block. Since \scooping is a deterministic validity check for a
given proposal and view, it cannot create two conflicting certified blocks in
any execution where HotStuff-2 would prevent them. It can only cause honest
replicas to reject proposals that do not consume a completed $\DAL$ output.

The integration also preserves data availability for committed blocks. If an
honest replica locks or commits a $\name$ block for view $v$, then the block
passed \textsc{verifyCard} for the corresponding $\DAL$ commitment and view.
Therefore, the block is tied to a completed $\DAL$ instance. By the
\property{Availability}, \property{Correctness}, and \property{Commitment
Binding} properties of $\DAL$, the certified data is retrievable and uniquely
determined. By the DA-CR inclusion parameter of the selected $\DAL$
construction, the certified data also contains the required number of honest
mini-blocks.

For liveness, after GST, the HotStuff-2 pacemaker eventually selects an honest
leader and honest replicas enter the same view. The honest leader receives
enough collection messages, completes the $\DAL$ instance, and proposes a block
that passes \scooping. Honest replicas can then vote according to the ordinary
HotStuff-2 rules. Malicious leaders may fail to complete the $\DAL$ instance or
may propose blocks that fail \scooping, but those failures are handled by
timeout and view change. The full safety and liveness proofs are provided in
Appendix~\ref{sec:appendix_integration_proof}.

\section{Conclusion}
\label{sec:conclusion}
We present {\name}, a hyperscale BFT system that provides  short term censorship resistance to leader based BFT protocols. 
It is based on the idea that transaction inclusion can be made decentralized, and together with data availability protocols with censorship resistant properties, we can prevent a malicious leader from censoring the transactions. Client can use {\navigation} algorithm for controlling amount of censorship resistance with fast confirmation latency.



\bibliography{lipics-v2021-sample-article}

@inproceedings{yin2019hotstuff,
  title={HotStuff: BFT consensus with linearity and responsiveness},
  author={Yin, Maofan and Malkhi, Dahlia and Reiter, Michael K and Gueta, Guy Golan and Abraham, Ittai},
  booktitle={Proceedings of the 2019 ACM Symposium on Principles of Distributed Computing},
  pages={347--356},
  year={2019}
}

@article{buchman2018latest,
  title={The latest gossip on BFT consensus},
  author={Buchman, Ethan and Kwon, Jae and Milosevic, Zarko},
  journal={arXiv preprint arXiv:1807.04938},
  year={2018}
}

@inproceedings{kate2010constant,
  title={Constant-size commitments to polynomials and their applications},
  author={Kate, Aniket and Zaverucha, Gregory M and Goldberg, Ian},
  booktitle={Advances in Cryptology-ASIACRYPT 2010: 16th International Conference on the Theory and Application of Cryptology and Information Security, Singapore, December 5-9, 2010. Proceedings 16},
  pages={177--194},
  year={2010},
  organization={Springer}
}

@inproceedings{boneh2018compact,
  title={Compact multi-signatures for smaller blockchains},
  author={Boneh, Dan and Drijvers, Manu and Neven, Gregory},
  booktitle={Advances in Cryptology--ASIACRYPT 2018: 24th International Conference on the Theory and Application of Cryptology and Information Security, Brisbane, QLD, Australia, December 2--6, 2018, Proceedings, Part II},
  pages={435--464},
  year={2018},
  organization={Springer}
}

@inproceedings{yang2022dispersedledger,
  title={$\{$DispersedLedger$\}$:$\{$High-Throughput$\}$ Byzantine Consensus on Variable Bandwidth Networks},
  author={Yang, Lei and Park, Seo Jin and Alizadeh, Mohammad and Kannan, Sreeram and Tse, David},
  booktitle={19th USENIX Symposium on Networked Systems Design and Implementation (NSDI 22)},
  pages={493--512},
  year={2022}
}

@inproceedings{cachin2005asynchronous,
  title={Asynchronous verifiable information dispersal},
  author={Cachin, Christian and Tessaro, Stefano},
  booktitle={24th IEEE Symposium on Reliable Distributed Systems (SRDS'05)},
  pages={191--201},
  year={2005},
  organization={IEEE}
}

@inproceedings{hendricks2007verifying,
  title={Verifying distributed erasure-coded data},
  author={Hendricks, James and Ganger, Gregory R and Reiter, Michael K},
  booktitle={Proceedings of the twenty-sixth annual ACM symposium on Principles of distributed computing},
  pages={139--146},
  year={2007}
}

@article{nazirkhanova2021information,
  title={Information dispersal with provable retrievability for rollups},
  author={Nazirkhanova, Kamilla and Neu, Joachim and Tse, David},
  journal={arXiv preprint arXiv:2111.12323},
  year={2021}
}

@online{kateamortized,
    author = "Dankrad Feist and Dmitry Khovratovich",
    title = "Fast amortized Kate proofs",
    url  = "https://github.com/khovratovich/Kate/blob/master/Kate_amortized.pdf"
}

@online{universalverify,
    author = "George Kadianakis, Ansgar Dietrichs, Dankrad Feist",
    title = "A Universal Verification Equation for Data Availability Sampling",
    url  = "https://ethresear.ch/t/a-universal-verification-equation-for-data-availability-sampling/13240"
}

@online{Vitalik2d,
    author = "Vitalik Buterin",
    title = "2D data availability with Kate commitments",
    url  = "https://ethresear.ch/t/2d-data-availability-with-kate-commitments/8081"
}

@online{dankshardinghackmd,
    author = "Dankrad Feist",
    title = "New sharding design with tight beacon and shard block integration",
    url  = "https://notes.ethereum.org/@dankrad/new_sharding"
}

@article{sheng2020aced,
  title={ACeD: Scalable data availability oracle},
  author={Sheng, Peiyao and Xue, Bowen and Kannan, Sreeram and Viswanath, Pramod},
  journal={arXiv preprint arXiv:2011.00102},
  year={2020}
}

@article{kelkar2021themis,
  title={Themis: Fast, strong order-fairness in byzantine consensus},
  author={Kelkar, Mahimna and Deb, Soubhik and Long, Sishan and Juels, Ari and Kannan, Sreeram},
  journal={Cryptology ePrint Archive},
  year={2021}
}

@inproceedings{cachin2001secure,
  title={Secure and efficient asynchronous broadcast protocols},
  author={Cachin, Christian and Kursawe, Klaus and Petzold, Frank and Shoup, Victor},
  booktitle={Advances in Cryptology—CRYPTO 2001: 21st Annual International Cryptology Conference, Santa Barbara, California, USA, August 19--23, 2001 Proceedings},
  pages={524--541},
  year={2001},
  organization={Springer}
}

@online{zcashHalo2,
    author = "ZCash",
    title = "Halo2",
    url  = "https://zcash.github.io/halo2/design/proving-system/permutation.html"
}

@inproceedings{kelkar2022order,
  title={Order-fair consensus in the permissionless setting},
  author={Kelkar, Mahimna and Deb, Soubhik and Kannan, Sreeram},
  booktitle={Proceedings of the 9th ACM on ASIA Public-Key Cryptography Workshop},
  pages={3--14},
  year={2022}
}

@article{al2019lazyledger,
  title={Lazyledger: A distributed data availability ledger with client-side smart contracts},
  author={Al-Bassam, Mustafa},
  journal={arXiv preprint arXiv:1905.09274},
  year={2019}
}

@article{dwork1988consensus,
  title={Consensus in the presence of partial synchrony},
  author={Dwork, Cynthia and Lynch, Nancy and Stockmeyer, Larry},
  journal={Journal of the ACM (JACM)},
  volume={35},
  number={2},
  pages={288--323},
  year={1988},
  publisher={ACM New York, NY, USA}
}

@online{awsEC2,
    author = "AWS",
    title = "AWS instance type",
    url  = "https://aws.amazon.com/ec2/instance-types/"
}

@inproceedings{miller2016honey,
  title={The honey badger of BFT protocols},
  author={Miller, Andrew and Xia, Yu and Croman, Kyle and Shi, Elaine and Song, Dawn},
  booktitle={Proceedings of the 2016 ACM SIGSAC conference on computer and communications security},
  pages={31--42},
  year={2016}
}

@article{roughgarden2020transaction,
  title={Transaction fee mechanism design for the Ethereum blockchain: An economic analysis of EIP-1559},
  author={Roughgarden, Tim},
  journal={arXiv preprint arXiv:2012.00854},
  year={2020}
}

@misc{cryptoeprint:2023/397,
      author = {Dahlia Malkhi and Kartik Nayak},
      title = {Extended Abstract: HotStuff-2: Optimal Two-Phase Responsive BFT},
      howpublished = {Cryptology ePrint Archive, Paper 2023/397},
      year = {2023},
      note = {\url{https://eprint.iacr.org/2023/397}},
      url = {https://eprint.iacr.org/2023/397}
}

@inproceedings{castro1999practical,
  title={Practical byzantine fault tolerance},
  author={Castro, Miguel and Liskov, Barbara and others},
  booktitle={OsDI},
  volume={99},
  number={1999},
  pages={173--186},
  year={1999}
}

@article{reed1960polynomial,
  title={Polynomial codes over certain finite fields},
  author={Reed, Irving S and Solomon, Gustave},
  journal={Journal of the society for industrial and applied mathematics},
  volume={8},
  number={2},
  pages={300--304},
  year={1960},
  publisher={SIAM}
}

@inproceedings{boneh2001short,
  title={Short signatures from the Weil pairing},
  author={Boneh, Dan and Lynn, Ben and Shacham, Hovav},
  booktitle={Advances in Cryptology—ASIACRYPT 2001: 7th International Conference on the Theory and Application of Cryptology and Information Security Gold Coast, Australia, December 9--13, 2001 Proceedings 7},
  pages={514--532},
  year={2001},
  organization={Springer}
}

@inproceedings{bunz2021proofs,
  title={Proofs for inner pairing products and applications},
  author={B{\"u}nz, Benedikt and Maller, Mary and Mishra, Pratyush and Tyagi, Nirvan and Vesely, Psi},
  booktitle={Advances in Cryptology--ASIACRYPT 2021: 27th International Conference on the Theory and Application of Cryptology and Information Security, Singapore, December 6--10, 2021, Proceedings, Part III 27},
  pages={65--97},
  year={2021},
  organization={Springer}
}

@incollection{rothschild1978equilibrium,
  title={Equilibrium in competitive insurance markets: An essay on the economics of imperfect information},
  author={Rothschild, Michael and Stiglitz, Joseph},
  booktitle={Uncertainty in economics},
  pages={257--280},
  year={1978},
  publisher={Elsevier}
}

@article{das2016brief,
  title={A brief note on estimates of binomial coefficients},
  author={Das, Shagnik},
  journal={URL: http://page. mi. fu-berlin. de/shagnik/notes/binomials. pdf},
  year={2016}
}

@article{amir2010prime,
  title={Prime: Byzantine replication under attack},
  author={Amir, Yair and Coan, Brian and Kirsch, Jonathan and Lane, John},
  journal={IEEE transactions on dependable and secure computing},
  volume={8},
  number={4},
  pages={564--577},
  year={2010},
  publisher={IEEE}
}

@article{bracha1987asynchronous,
  title={Asynchronous Byzantine agreement protocols},
  author={Bracha, Gabriel},
  journal={Information and Computation},
  volume={75},
  number={2},
  pages={130--143},
  year={1987},
  publisher={Elsevier}
}

@inproceedings{mostefaoui2014signature,
  title={Signature-free asynchronous Byzantine consensus with t< n/3 and O (n2) messages},
  author={Most{\'e}faoui, Achour and Moumen, Hamouma and Raynal, Michel},
  booktitle={Proceedings of the 2014 ACM symposium on Principles of distributed computing},
  pages={2--9},
  year={2014}
}

@inproceedings{danezis2022narwhal,
  title={Narwhal and tusk: a dag-based mempool and efficient bft consensus},
  author={Danezis, George and Kokoris-Kogias, Lefteris and Sonnino, Alberto and Spiegelman, Alexander},
  booktitle={Proceedings of the Seventeenth European Conference on Computer Systems},
  pages={34--50},
  year={2022}
}

@inproceedings{spiegelman2022bullshark,
  title={Bullshark: Dag bft protocols made practical},
  author={Spiegelman, Alexander and Giridharan, Neil and Sonnino, Alberto and Kokoris-Kogias, Lefteris},
  booktitle={Proceedings of the 2022 ACM SIGSAC Conference on Computer and Communications Security},
  pages={2705--2718},
  year={2022}
}

@article{abraham2018validated,
  title={Validated asynchronous byzantine agreement with optimal resilience and asymptotically optimal time and word communication},
  author={Abraham, Ittai and Malkhi, Dahlia and Spiegelman, Alexander},
  journal={arXiv preprint arXiv:1811.01332},
  year={2018}
}

@inproceedings{keidar2021all,
  title={All you need is dag},
  author={Keidar, Idit and Kokoris-Kogias, Eleftherios and Naor, Oded and Spiegelman, Alexander},
  booktitle={Proceedings of the 2021 ACM Symposium on Principles of Distributed Computing},
  pages={165--175},
  year={2021}
}

@inproceedings{gao2022dumbo,
  title={Dumbo-ng: Fast asynchronous bft consensus with throughput-oblivious latency},
  author={Gao, Yingzi and Lu, Yuan and Lu, Zhenliang and Tang, Qiang and Xu, Jing and Zhang, Zhenfeng},
  booktitle={Proceedings of the 2022 ACM SIGSAC Conference on Computer and Communications Security},
  pages={1187--1201},
  year={2022}
}

@inproceedings{guo2020dumbo,
  title={Dumbo: Faster asynchronous bft protocols},
  author={Guo, Bingyong and Lu, Zhenliang and Tang, Qiang and Xu, Jing and Zhang, Zhenfeng},
  booktitle={Proceedings of the 2020 ACM SIGSAC Conference on Computer and Communications Security},
  pages={803--818},
  year={2020}
}

@article{saltzer1984end,
  title={End-to-End Arguments in System Design},
  author={Saltzer, J. H. and Reed, D. P. and Clark, D. D.},
  journal={ACM Transactions on Computer Systems},
  volume={2},
  number={4},
  pages={277--288},
  year={1984},
  publisher={ACM}
}

@misc{mcp,
      author = {Pranav Garimidi and Joachim Neu and Max Resnick},
      title = {Multiple Concurrent Proposers: Why and How},
      howpublished = {Cryptology {ePrint} Archive, Paper 2025/1772},
      year = {2025},
      url = {https://eprint.iacr.org/2025/1772}
}

@misc{cr_thput,
      author = {Fatima Elsheimy and Ioannis Kaklamanis and Sarisht Wadhwa and Charalampos Papamanthou and Fan Zhang},
      title = {Censorship Resistance vs Throughput in Multi-Proposer {BFT} Protocols},
      howpublished = {Cryptology {ePrint} Archive, Paper 2026/126},
      year = {2026},
      url = {https://eprint.iacr.org/2026/126}
}

@misc{sedna,
      title={Sedna: Sharding transactions in multiple concurrent proposer blockchains}, 
      author={Alejandro Ranchal-Pedrosa and Benjamin Marsh and Lefteris Kokoris-Kogias and Alberto Sonnino},
      year={2025},
      eprint={2512.17045},
      archivePrefix={arXiv},
      primaryClass={cs.CR},
      url={https://arxiv.org/abs/2512.17045}, 
}

@online{kzgcommit,
    author = "KZG time measurement",
    title = "KZG commitment time measurement",
    url  = "https://github.com/Layr-Labs/eigenda/blob/61019b4e9f91cbbb3dc05ed758674e4bdfeee20e/encoding/v2/bench/results/golang_bench_eigenda_darwin_arm64.txt#L10"
}

\section{Proof of Stake BFT Leader Assumption}
\label{sec:bft_leader_optin}
In the production leader based proof of stake blockchain, like Tendermint\cite{buchman2018latest}, a replicas' chance to become a leader is weighted proportional to replicas' voting power. 
There is a natural implication from this model to our assumption that leader is much capable such that it has high bandwidth and computing facility. 
Essentially, the leader with large percentage of stake in the partial synchronous setup already put down large capital investment in the form of stake, but the hardware requirement is relatively cheap. For example, a powerful instance m7g.16xlarge\cite{awsEC2} on AWS EC2 with $64$ vCPU, $256$ GB memory and $30$ $Gbps$ optimized bandwidth requires only $\$2.6112/hour$ in a long term plan. 

On the other hand, the censorship resistant property in {\name} does not give the leader much higher potential to manipulate the transactions inclusion. Therefore from the protocol's view, it is fine to have a few replicas with large amount of stake, and many replicas with less stake and regular hardware, provided that there is one powerful leader to drive liveness.

A more uniform stake distribution is desirable, in which scenario, replicas with insufficient hardware can choose third party services to temporarily serve the leader's role.
Or the protocol has to  downgrade the performance for a brief time proportionally to the replicas' stake.

To avoid the problem of downtime or third-party trust relation, the leader selection part of BFT protocol can have an opt-in mechanism, such that only interested parties can volunteer to becomes the leader. If some leaders are persistently censoring, eventually replicas from community can opt-in to improve the honest ratio in the leader set.

\section{BFT protocol complexity analysis}
\label{sec:appendix:bft_complexity_analysis}

Many existing BFT protocols have short term censorship resistance. In this section, we summarize those protocols and delineate their differences with {\name}.
We analyze each protocol by first identifying the Data availability part and consensus part. The comparison is shown in Table~\ref{table:bft_comparison_detail}.
In many asynchronous protocols including Tusk\cite{danezis2022narwhal}, VABA\cite{abraham2018validated},Bullshark\cite{spiegelman2022bullshark},
Dag-Rider\cite{keidar2021all}, there is a concept of a passive leader, who is retrospectively selected after a round has passed, the purpose of such leader is to create an anchor to determine of head of a set of confirmed block. The leader does not serve the role to prevent the all-to-all communication. It is different from the active leader from the classic leader based BFT protocol, in which a leader actively coordinates among BFT replicas and make sure the leader is progressing.

\begin{table}[!hbt]
\centering
\caption{Comparison of BFT protocol in a system of $n$ replicas}
\begin{subtable}[h]{\linewidth}
\centering
\resizebox{1\textwidth}{!}{
\begin{tabular}{|c | c | c | c | c | c |c | c| c | c | c |}
\hline
\multirow{3}{*}{}  BFT  & Coor-  & \multicolumn{3}{c|}{\textbf{Per Replica Communication}}   & System & Number & Block & Worst Case & Max \\
\cline{3-5}
        protocols & dinating & \multicolumn{2}{c|}{\textbf{DA }} & \multicolumn{1}{c|}{\textbf{Consensus}} & / Replica & Comm & Interval & Inclusion Distance & Adversary\\    
\cline{3-4}        
        & Leader & Complexity & Download & Complexity & Throughput & Steps & (seconds) & (blocks) & Ratio \\
\hline \hline
Hotstuff\cite{yin2019hotstuff} & \greencheck & $O(nb)$ & \greencheck & $O(1)$ $\dagger$ & $O(1)$ & $4$ & $O(1)$ & $f+1$ & $0.33$ \\
\hline
HoneyBadger\cite{miller2016honey} & \redcross & $O(nb + \lambda n^2\log{n})$ & \greencheck & $O(n^2\log{n})$  & $O(1)$ & $3+\log{n}$ & $O(n^2\log^2{n})$ & $1$ & $0.33$ \\
\hline
DispersedLedger\cite{yang2022dispersedledger} & \redcross & $O(b+\lambda n^2)$ & \redcross & $O( n^2\log{n})$ & $O(n)$ & $3+\log{n}$ & $O(n^2\log^2{n})$ & $1$ & $0.33$\\
\hline
Prime\cite{amir2010prime} & \greencheck & $O(nb + n^2)$& \greencheck & $O(n^2)$ & $O(1)$ & $6$ & $O(n^2)$ & $1$ & $0.33$\\
\hline
Dumbo-2\cite{guo2020dumbo} & \redcross & $O(nb + \lambda n^2\log{n})$ & \greencheck & $O(\lambda n^2)$ & $O(1)$ & $20$ & $O(n^2\log{n})$ & $1$ & $0.33$ \\
\hline
Dumbo-NG\cite{gao2022dumbo} & \redcross & $O(nb)$ & \greencheck & $O(\lambda n^2 \log{n})$ & $O(1)$ & $9$ & $O(n^2\log{n})$ & $1$ & $0.33$\\
\hline
VABA\cite{abraham2018validated} & \redcross & $O(nb)$ & \greencheck & $O(n)$ & $O(1)$ & $13$ & $O(n)$ & $1$ & $0.33$ \\
\hline
Dag-Rider+AVID\cite{keidar2021all} & \redcross & $O(nb + \lambda n^2\log{n})$ & \greencheck & $O(n^2\log{n})$ & $O(1)$ & $12$ & $O(n^2\log{n})$ & $1$ & $0.33$\\
\hline
Narwhal-Hotstuff\cite{danezis2022narwhal} & \greencheck & $O(nb + \lambda n^2)$ & \greencheck & $O(1)$ & $O(1)$ & $6$ & $O(n^2\log{n})$ & $1$ & $0.33$\\
\hline
Tusk\cite{danezis2022narwhal} &  \redcross & $O(nb + \lambda n^2)$ & \greencheck & $O(n^2\log{n})$ & $O(1)$ & $9$ & $O(n^2\log{n})$ & $1$ & $0.33$\\
\hline
Bullshark\cite{spiegelman2022bullshark} & \redcross & $O(nb)$ & \greencheck & $O(n^2\log{n})$ & $O(1)$ & $4$ & $O(n^2\log{n})$ & $1$ & $0.33$ \\
\hline
Bullshark-Parital\cite{spiegelman2022bullshark} & \redcross & $O(nb)$ & \greencheck & $O(n^2\log{n})$ & $O(1)$ & $4$ & $O(n^2\log{n})$ & $1$ & $0.33$ \\
\hline
$\nameQuarter$ & \greencheck & $O(b)$ & \redcross & $O(1)$ & $O(n)$ & $4$ & $O(1)$ & $1$ & $0.25$\\
\hline
{\nameOne} & \greencheck & $O(b + \lambda n)$ & \redcross & $O(\lambda n)$ & $O(n)$ & $4$ & $O(n)$ & $1$ & $0.33$ \\
\hline
$\nameLite$ & \greencheck & $O(b + \kappa \lambda \log^2{n})$ & \redcross & $O(1)$ & $O(n)$ & $4$ & $O(\log^2{n})$ & $1$ & $0.33$\\
\hline
\end{tabular}
}
\caption*{$\dagger$: $O(n)$ worst comm complexity\cite{cryptoeprint:2023/397}, when all subsequent leaders are malicious.}
\end{subtable}
\label{table:bft_comparison_detail}
\end{table}

In the table, we decompose communication complexity per replica into Data availability and consensus. 
Data availability of system does not require downloading the whole data at each replica. To signify the difference, we add a sub-column, \textbf{Download}, to indicate if the DA complexity already take care of the whole data availability in each replica as a part of DA protocol. 
We note that by decouple the retrieval with dispersal, we are able to remove a order $O(n)$ factor in front of the data workload $b$. 
The number of communication step depicts how many point-to-point (i.e. half of a round trip) in order for a transaction to get confirmed when the system is operating under an optimistic path.
It is a complementary measure to the latency in additional to complexity analysis.

For the rest of the section, we summarizes each protocol and points out their difference with {\name} protocols.

\subsection{Prime}
\label{sec:appendix:bft_complexity_analysis:prime}
Prime\cite{amir2010prime} is a classic leader based protocol whose goal is to defend against adversarial attack on performance degradation while ensuring basic safety and liveness of a BFT protocol. 
The malicious replicas can perform arbitrary operations but correctly enough for not being identified as malicious.
The paper defines a new correctness criteria that takes performance into account for defending against performance degradation attack.
For example, a malicious leader can introduce latency in the consensus critical path to delay the rate of confirmation.
The paper relies two insights, most of procedures in the protocol does not require malicious replicas to take action; second, if the workload for a leader has a predictable bound, all peers can observe the network and access the leader's performance to decide if the leader should be replaced.
To achieve them, the protocol devises multiple sub-protocols, each responsible for a specific tasks. When a replica receives a order (transaction), it disseminate it to all other replicas. The pre-ordering sub-protocol is responsible for confirming the data availability. A data structure called {\textbf{PO-SUMMARY}} is invented to record from each replica's view, how much data the other replicas currently have. Essentially, {\textbf{PO-SUMMARY}} is a vector recording cumulative sum about what other replicas have. Then each replica broadcast {\textbf{PO-SUMMARY}} to everyone including the leader.  The exchange about {\textbf{PO-SUMMARY}} requires a $O(n^2)$ communication, where $n$ is the number of replicas. 
The leader then uses all {\textbf{PO-SUMMARY}} to create a global ordering, if a transaction has been recorded by at least $2f+1$ replicas. 
{\name} is different from Prime although both are leader based protocols, because in {\name} there is never a all-to-all communication pattern, and in addition, the inclusion of transaction does not require consensus of $2f+1$ to receive the transaction. 
%
%
Although Prime uses erasure code to reconciliate data availability, the erasure coding is done to per transaction basis. The regular Ethereum transaction size if roughly 200Byte, then if the network has size of equal size, the overhead per transaction is outweighting the benefits of erasure coding.
Whereas in {\name}, the erasure code is done all transactions batch received by a BFT replica. Applying erasure code to the batch does not introduce too much overhead for indicating which indices its points are evaluating on.
At latency perspective, Prime extends 2-Phase PBFT by 1.5 phase (round-trip) trips message, which are {\textbf{PO-REQUEST}, \textbf{PO-ACK}, \textbf{PO-SUMMARY}}, whereas {\name} perfectly fits into the 2-Phase hotstuff protocol.

\subsection{HoneyBadger and DispersedLedger}
\label{sec:appendix:bft_complexity_analysis:honeybadger_and_dispersed}

HoneyBadger is a leaderless asynchronous BFT protocol, made of two protocols: a reliable broadcast protocol (RBC) and an asynchronous binary agreement protocol. 
To reach consensus, HoneyBadger runs $n$ reliable broadcast protocol asynchronously. The reliable broadcast is modified based on AVID~\cite{cachin2005asynchronous}, where every replica produces its own merkle tree, and broadcast the merkle leaf and proof to all other replicas. 
The total communication complexity for one replica to broadcast its data of size $b$ bytes is $O(n b + \lambda n^2 \log{n} )$; whereas using the classic Bracha\cite{bracha1987asynchronous} reliable broadcast would take $O(n^2 b)$.
Whenever a replica completes a reliable broadcast $j$, it starts to contribute to the binary agreement protocol for $j$. 
An asynchronous binary agreement protocol accepts inputs from all replicas. 
Since the binary agreement(BA) protocols based on construction from \cite{mostefaoui2014signature} requires $O(n^2)$ communication complexity per round, and it takes $O(k)$ rounds to complete with probability $1 - 2^{-k}$. The overall complexity for BA is $O(n^2\log{k})$.
To ensure all $n$ instances of BA to terminate with high probability, the expected number of round is $\log{n}$.
The overall communication complexity for all $n$ replicas to complete a round is $O(n^2 b + \lambda n^3 \log{n} + n^3\log{n}) = O(n^2 b + \lambda n^3 \log{n})$, where $b$ is inputs from each replicas.
The number of communication steps to complete a honeybadger RBC (which is modified based on AVID) is $3$. The first step sends merkle leaf to everyone, the second step sends echo among all replicas, and the final step for sending ready messages.

DispersedLedger has a very similar structure as HoneyBadger. DispersedLedger divides the protocols into two parts: a data availability(DA) protocol and a binary agreement protocol. 
The data availability is different from reliable broadcast protocol in HoneyBadger, because data availability can be achieved by the \textsf{Disperse} invocation alone, and its retrieval part can be deferred in the later stage. 
%
%
The construction of DA in DispersedLedger is called AVID-M, which has very similar structure as the merkle construction in reliable broadcast in HoneyBadger and AVID-H in AVID\cite{cachin2005asynchronous}.
In DispersedLedger, the communication complexity for each replica is $O(b + \lambda n^2)$, the binary agreement part is identical to HoneyBadger which takes $O(n^2\log{k})$ per replicas, that terminates with probability $1-2^{-k}$. So together the total communication complexity is $O(bn + \lambda n^2 + n^3\log{n})$.
The number of steps for DA is $3$, the sender replica first sends the merkle leaf and proof, the each replica sends {\textbf{GotChunk}} that contains only the merkle root, this is the step that differs significantly from AVID-modified RBC from HoneyBadger; the last step is to send {\textbf{ready}} message.
Although both DispersedLedger and HoneyBadger has censorship resistance property, it is implemeneted by a all-to-all communication pattern. Whereas in {\name}, there is an active coordinating leader preventing the quadratic communication pattern.

\subsection{VABA}
\label{sec:appendix:bft_complexity_analysis:vaba}

Validated Asynchronous byzantine Agreement is the first protocol that achieves multivalue byzantine agreement with $O(n^2)$ complexity. Previous method by \cite{cachin2001secure} of MVBA requires $O(n^3)$ communication complexity.
The key construction element is called provable broadcast, which is also a key building block for conventional leader-based BFT protocol like~\cite{yin2019hotstuff,buchman2018latest,castro1999practical}, where the leader has an active role to coordinate replicas. For VABA, each Atomic broadcast instance is associated with an ID; within each ID, four consecutive PB are invoked to ensure data is committed without worries of hidden lock. All $n$ replicas performs 4-stages PB, and a leader is chosen retrospectively after all replicas have abandoned their PB procedures.
If the selected leader prematurely terminates the 4 stage PB, a view change repeats the 4 stage PB until an externally valid value is confirmed by all honest replicas.
{\name} is a leader based BFT protocol, that avoids all-to-all communication pattern, unlike VABA.

\subsection{Dumbo, sDumbo-DL and Dumbo-NG}
\label{sec:appendix:bft_complexity_analysis:dumbo}

Dumbo is BFT protocol built upon the framework presented from HoneyBadger. There are two versions of the protocol: Dumbo1, and Dumbo2. The Dumbo1 identified $n$ parallel asynchronous Binary Agreement protocol as the bottleneck to the system throughput. Then it provides an optimization to replace $n$ Asynchronous Byzatine Agreement(ABA) instance. The technique is to replace $n$ ABA with a $\log{n}$ reliable broadcast(RBC) to ensure one honest replica can provide a batch of transactions from $n-f$ replicas. Then all $n$ nodes runs {$\log{n}$} ABA to agree on which one of the $\log{n}$ Index from RBC are correct, and the data is received by them.
In total the step which Dumbo1 takes to finish the protocol is $2 + 2 + \log{\log{n}}$ for Value-RBC, Index-RBC and ABA.

Dumbo2 is a further optimized version of Dumbo1. Instead of using $n$ Value-RBC, ${\log{n}}$ Index-RBC followed by ${\log{n}}$ ABA, Dumbo2 replaced the reliable broadcast part with a new protocol called provable RBC, that uses threshold signature to ensure the data broadcast by Index-RBC are indeed correct. 
The agreement part is replaced with only one Multivalue Byzantine Agreement (MVBA).

\subsection{DAG-Rider, Narwhal-Hotstuff and Tusk}
\label{sec:appendix:bft_complexity_analysis:dag-rider}

DAG-Rider, Narwhal-Hotstuff and Tusk are consensus protocols that adopts a design that separate mempool and consensus, where the mempool is implemented as Direct acyclic graph.
DAG-rider is an asynchronous protocol that arrives at consensus every four rounds, where each round consists of a reliable data broadcast (RBC) and is not a point-to-point communication. 
When a RBC is implemented by AVID, the communication complexity is $O(nb + \lambda n^2 \log{n})$.
Because each RBC takes $3$ communication steps (point to point communication), the overall steps it takes to confirm a block is $12$. Because every vertex must contain $n$ meta-data of size $\log{n}$, the broadcast would consume $O(n^2\log{n})$ bits for each replica.

Narwhal is a mempool module that is dedicated to produce data protocol that has properties of Integrity, block-availability, containment, 2/3-causality and 1/2-chain quality.
The data broadcast is achieved with provable broadcast by using certificate from $n-f$ replicas' signature.
Narwhal-Hotstuff-2 is a partial synchronous protocol that uses Hotstuff-2 for reaching consensus. In the table, we use the latest Hotstuff-2 as the consensus module to reduce latency. In the Narwhal-Tusk paper, the communication steps is $8$, because it used the old $3$ phases Hotstuff protocol.

Tusk is an asynchronous protocol that is an implementable version of DAG-Rider, and it includes optimization to reduce the expected communication steps to $9$ as indicated in the paper\cite{danezis2022narwhal}.

\section{Background Appendix }
\label{sec:appendix_background}

\subsection{Validity requirement in BFT problem}
\label{sec:appendix:background:validity}
Often external validity\cite{cachin2001secure} is added to the BFT requirement, which requires a polynomial-time computable predict for deciding if a transaction is valid.
{\name} uses standard Virtual machine to handle transaction validity, the protocol by itself serves to reach consensus on a sequence of bits. 

\subsection{DA using outsourced components}
\label{sec:appendix:background:da}
There are many ways to incorporate \cite{sheng2020aced, al2019lazyledger} a DA into a BFT protocol, either by borrowing external DA oracle or internalizing it with the BFT replicas. The advantage of the second option is a single unified trust assumption. The advantage of former options usually come with high throughput and simpler application design logic due to modularity.

\subsection{Reed Solomon Erasure coding}
\label{sec:appendix:background:rs}
The encoding process of Reed-Solomon code is based on the idea of polynomial interpolation\cite{reed1960polynomial}.
Essentially, given a list of $n$ points, there is a unique polynomial of degree $n-1$ that interpolates those points.
The redundancy generation is basically evaluating the polynomial at more points. Where those new points locate depends on specific encoding scheme. 
However, one encoding scheme that uses Fast Fourier Transform is particularly popular due to its fast $O(n\log{n})$ computation complexity. It only requires the field which evaluations indices belong to have root of unity, $\omega$, such that $\omega^z=1$, where $z$ is power of 2.
Given some data $D$, the exact operation can be summarized as $IFFT(\textit{concate}(FFT(D), \widehat{0}))$, where $IFFT$ is inverse FFT, and the length of $\widehat{0}$ depends on size of redundency.
Since Fast Fourier Transform is a linear operation, i.e. $\alpha FFT(a) + \beta FFT(b) = FFT(\alpha a + \beta b)$.
We will use the FFT encoding scheme on Elliptic Curve\cite{zcashHalo2} to work with KZG polynomial commitment.%
Given an input array $[d_1 \cdots d_n]$ with coding ratio $1/3$,
RS encoding can be implemented such that the output is $[d_1 \cdots d_n, p_{n+1} \cdots p_{3n}]$, where the input data take identical locations, and parity data $p_{*}$ are generated at new indices.

RS encoding can be used to introduce redundancy in numerous ways. Figure~\ref{fig:kzg_1d} presents a way by laying all data in one dimensional line, and apply RS encoding.
Another way is to format the data into a 2D matrix like 
Figure~\ref{fig:card_data_struct}.
Inside the matrix, data are partitioned into columns, and redundancy is applied by taking FFT extension on each rows, such that redundant columns are generated.
The matrix format is pioneered by~\cite{al2019lazyledger} and used by Ethereum Danksharding\cite{dankshardinghackmd} in the future roadmap. Unlike those works, our construction the matrix does not add redundancy along columns.
Our RS construction is also different from Semi-AVID-FR\cite{nazirkhanova2021information}, that every replica would possess one system chunks and two parity chunks.

\subsection{BLS Elliptic Curve and Signature}
\label{sec:appendix:bilinear-curve}
This section serves as a very brief coverage of basic elements of BLS12-381 elliptic curve and their properties. We also briefly cover BLS signature which can be instantiated on the BLS12-381 curve.

An elliptic curve consists of three group $G_1, G_2, G_T$ or order $p$, a pairing is an efficiently computable function $e: G_1 \times G_2 \rightarrow G_T$. Group $G_1$, $G_2$ have generator $g_1, g_2$ respectively, and all members in the group can be obtained by enumerating powers on the generator. The generator of $G_T$ equals to $e(g_1, g_2)$.
A pairing function provides a following property that $e(X^a, Y^b) = e(X,Y)^{ab}$, where $X,Y$ are elements from elliptic curve group $G_1, G_2$, and $a,b$ are integer from their corresponding finite field.
BLS12-381 is one elliptic curve allows for pairing (bilinear) operations.

On top of BLS12-381 curve, we can instantiate BLS public key signature~\cite{boneh2001short} scheme, which allows clients to sign and verify signature.
Beyond simple signature, both threshold signature and multi-signature can be developed.

\subsection{Polynomial Commitment and KZG}
\label{sec:appendix:background:kzg}
We provide a more technical definition for primitives that uses KZG scheme that uses Discrete Logarithm(DL) and t-SDH assumptions~\cite{kate2010constant}. 
First, KZG requires a trusted setup of $\{g^z \cdots g^{z^t}\}$, where $z$ has to be a secret to anyone and $g$ is a generator for some elliptic curve group that has bilinear pairing property~\cite{boneh2018compact}. We use \textsf{Setup} to presents the procedures to get those powers of generators, which requires a security parameter $\lambda$.
The security of a scheme is measured against a probabilistic polynomial time (PPT) adversary to break the properties. A secure scheme requires all polynomial time adversary to have negligible probability to break the scheme.
The primitive $Commit$ of a function $\phi = \sum_j \phi_j x^j$ can be implemented as $C = \prod_{j=0}^{deg(\phi)} {(g^{z^j})}^{\phi_j}$, where $g^{z^j}$ is provided from the public setup, and $\phi_j$ are the coefficient for $j$-th monomial.
The primitive $creatWitness$ for some index $i$ with evaluation $\phi(i)$ is defined as $\psi(x) = \frac{\phi(x) - \phi(i)}{x-i}$, and we let $w_i = g^{\psi(z)}$, which is the outcomes of $Commit(\psi)$.
The primitive $VerifyEval$ on some evaluation a tuple $(i, \phi(i), w_i)$ is defined as $e(C, g) = e(w_i, g^z / g^i)e(g, g)^{\phi(i)}$, where $e$ represents the bilinear pairing operation.
A proof for its correctness can be found in~\cite{kate2010constant}.

\begin{definition}(KZG Commitment scheme)
   \begin{enumerate}
       \item \textbf{Polynomial-binding}. For all PPT adversaries $\mathcal{A}$ 
       \begin{align*}
        \operatorname{Pr}
        \left(
            \begin{array}{c}
                pk \leftarrow Setup(1^{\lambda})\\
                (\kzgcommitment, \phi(x), \phi'(x)) \leftarrow \mathcal{A}(pk)
            \end{array}
            :
            \begin{array}{c}
                Commit(\phi) = C \wedge \\
                Commit(\phi') = C \wedge \\
                \phi \neq \phi'
            \end{array}
        \right) \leq \epsilon(\lambda).
        \end{align*}
        \item \textbf{Evaluation-binding}. For all PPT adversaries $\mathcal{A}$ 
       \begin{align*}
        \operatorname{Pr}
        \left(
            \begin{array}{c}
                pk \leftarrow Setup(1^{\lambda}), \\
                (\kzgcommitment, \langle i,\phi'(i)), w_i\rangle, \\
                \langle i,\phi'(i)), w_i' \rangle \\
                \leftarrow \mathcal{A}(pk)\\
            \end{array}
            :
            \begin{array}{c}
                VerifyEval(pk, C, i, \\
                    \phi(i), w_i) = 1 \wedge \\
                VerifyEval(pk, C, i, \\
                    \phi'(i), w_i') = 1 \wedge \\
                \phi \neq \phi'
            \end{array}
        \right) \leq \epsilon(\lambda).
        \end{align*}
   \end{enumerate} 
\end{definition}

$\epsilon$ represents a negligible function such that $\epsilon(n) \le \frac{1}{n^c}$ for all n greater some function of $c$. $\mathcal{A}$ denotes an adversary.

It can be shown that KZG provides linearity in commitments and witnesses, such that $commit(\alpha u + \beta v) = \alpha commit(u) + \beta commit(v)$, $\textit{createWitness}(\alpha u + \beta v)=\alpha \textit{createWitness}(u) + \beta \textit{createWitness}(v)$. Here we slightly abuse the notation a bit to use $+$ to present the group operation $\cdot$, it is useful to illustrate the following property.

The linearity of KZG commitment and RS encoding provide a powerful way to encode systematic chunks such that commitments of parity chunks interpolates to a polynomial, which is generated purely from commitments of systematic chunks. 

There are existing works ~\cite{hendricks2007verifying,nazirkhanova2021information,dankshardinghackmd} that study and apply this property.
Generating witnesses(proofs) is a computational intensive process. Techniques including KZG multi-reveal~\cite{kateamortized} and Universal Verification equations~\cite{universalverify} from Ethereum provide both fast way to compute batch and evaluate proofs.

\subsection{Pedersen commitment}
\label{sec:appendix:background:pedersen_commitment}
Pedersen commitments can be used to commit to a vector of field elements,
and hence to the coefficient vector of a polynomial. Let \(G\) be a cyclic
group of prime order \(q\), written multiplicatively. Let
\(\mathbf g=(g_1,\ldots,g_n)\) be group elements generated so that no party
knows a nontrivial discrete-log relation among them. For a vector
\(\mathbf a=(a_1,\ldots,a_n)\in \mathbb F_q^n\), define
\[
    \operatorname{commit}(\mathbf g,\mathbf a)
    =
    \prod_{i=1}^n g_i^{a_i}.
\]
This commitment is additively homomorphic:
\[
    \operatorname{commit}(\mathbf g,\mathbf a)\cdot
    \operatorname{commit}(\mathbf g,\mathbf b)
    =
    \operatorname{commit}(\mathbf g,\mathbf a+\mathbf b).
\]
Under the discrete-log assumption, and assuming the bases
\(g_1,\ldots,g_n\) are generated without known relations, the commitment is
computationally binding: it is infeasible to find two distinct vectors
\(\mathbf a\neq \mathbf a'\) such that
\[
    \operatorname{commit}(\mathbf g,\mathbf a)
    =
    \operatorname{commit}(\mathbf g,\mathbf a').
\]

\subsection{Ordering Functions}
\label{sec:appendix:background:oracle}  
{\name} prioritizes transaction inclusion, and provide a modular interface for any ordering modules. 
From the perspective of {\name}, the other end of the interface is treated as a \textbf{ordering} function, that returns a deterministic permutation of transactions for data corresponds to some BFT view number.
For each replica, the function could be a simple deterministic implementation. For example, a shuffling function can be implemented by computing the hash digest of all sorted transactions; another example would be using round robin to take transaction one-by-one from each replica.
Then if all replica only access to the same data, they all arrive to the same permutation.
However, simple ordering function suffers a limitation that a malicious leader can grind its mini-blocks to alter transaction locations.
More sophisticated oracle usually requires inputs from other replicas or external source for providing the ordering. 
For example, logical timestamp among transactions is required for creating a FCFS fair ordering\cite{kelkar2021themis}; 
auction mechanism needs to collect bids from possible external parties;
a verifiable random function(VRF) requires randomness inputs from other replicas. 
Most implementations can be optimized with coordination from the leader.
%
%
%


\section{{\DA} relationship with AVID protocol family}
\label{sec:appendix:card:avid}
AVID is not communication efficient since every replica downloads the entire data. AVID-FP fixes it by using homomorphic fingerprinting, but requires a $O(n^2 \lambda)$ number of overhead per node.
AVID-M overcomes the quadratic overhead by using merkle commitment, but it allows the room for inconsistent coding, which either complicates the trust assumption or prolongs the confirmation latency.
Semi-AVID-PR is designed for rollup, so it does not provide \textbf{Agreement} and \textbf{Termination} properties unlike others, but it shares many similarity with {\DAL} on data structure and properties.
Compared to those protocols, there are two great differences in {\DAL}.
First, {\DAL} has a network assumption of partial synchrony, which allows for a leader to intermediate among replicas; it is the key that allows DA-CR to avoid the second $O(n)$ broadcast step in AVID-M\cite{yang2022dispersedledger} needed to ensure the Termination property.
Second, AVID family protocols has a client-server model and the client has the entire data; whereas {\DAL} is closer to a P2P network, where every replica has partial data and serves as both a client and a server.
The {\DAL} is similar to AVID-FP, but avoids $O(n^3)$ messages complexity, 
because the leader can coordinate communication in the partial synchronous network.
%
%
At the downside of the {\DAL}, a malicious leader can refuse to proceed the protocol at any stage. 
A timeout and a leader replacement mechanism are required to properly terminate a round. 



\section{{\DA} properties discussion}
\label{sec:appendix:card_property:discussion}

We note that the \property{Inclusion}-$t$, \property{Data-tampering Resistance} and \property{Space-Capturing}-$\eta$ are extensions to the \property{Correctness} property which is offered by many {\DA} protocols.
Those additional properties refine the relation between the original data from an honest replica and its final dispersed mini-block even when the leader is malicious. 
\property{Data-tampering Resistance} is different from \property{Commitment Binding}, because the later guarantees there is only one data block that matches to the final commitment and its corresponding combined signature;
whereas for \property{Data-tampering Resistance} in an accountable mechanism, all dispersed mini-blocks must indeed be consistent to the published data attestations binding to the final commitment. 

Otherwise the leader is free to mutate any bits in any replicas' mini-blocks. The malicious leader can then attack the system by posting the original attestations, but in fact dispersing something else. 
Without careful protocol design, one can come up with scheme that violates the data tampering.
The \property{inclusion}-$t$ property directly corresponds to the censorship resistant in Definition~\ref{def:censorship-resistance},
and similarly, {\property{Space-Capturing}-$\eta$} for Definition~\ref{def:space-capturing}.

\section{Calculating Coding Ratio for $\DAL$}
\label{sec:appendix_coding_ratio}
The coding ratio is set to be $1/3$, i.e. 2 parity chunks per one systematic chunk. The number three comes from the assumption that only $f$ out of $3f+1$ BFT nodes are malicious.
To prevent liveness attack from $f$ malicious replicas, the leader should be allowed to proceed to complete a $\DAL$ instance as long as there is a combined signatures from $2f+1$ replicas. But since $f$ out of the $2f+1$ signatures could come from the malicious replicas, we require the remaining $f+1$ replicas to be able to reconstruct the whole data. 
Although the exact ratio should have been $\frac{f+1}{3f+1}$, the erasure coding is more efficient for integer and easy for reference. Since $\frac{1}{3} < \frac{f+1}{3f+1}$, the data is safe. 

\section{Retrieve Protocol}
\label{sec:appendix:retrieve_protocol}  
\subsection{Retrieve Algorithm}
\label{sec:appendix:retrieve_algorithm}  
A retriever can invoke either {\proceduref{retrieveChunk}} or {\proceduref{retrievePoint}} in Algorithm~\ref{alg:retrieve} for receiving the data.
\begin{algorithm}[!htbp]
    {\fontsize{7pt}{7pt}\selectfont \caption{Retrieve Invocations}
    \label{alg:retrieve}
    \begin{algorithmic}[1]
        \State \underline{\textbf{At the replica } $p$ :}
        \Event{$ \textit{receiving } \textsf{RetrieveChunk}$}
            \If{$ m.c_p \text{ in } db$}              
                \State $\text{send back } \langle db[m.c_p] \rangle $
            \EndIf
        \EndEvent
        \Event{$ \textit{receiving } \textsf{RetreivePoint}$}
            \If{$ m.c_p \text{ in } db$}    
                \State $\phi_{j}, w_j \gets \textit{createWitness}(db[m.c_p], j)$
                \State $\text{send back } \langle \phi_{j}, w_j \rangle $
            \EndIf
        \EndEvent
        \State         
        \Procedure{$\text{retrieveChunk}$}{ $p$, $\{c_i\}_{1 \le i \le n}\}$ } $\textit{\# if p=-1, then retrieve all chunks}$
            \If{$p \neq -1$}
                \For{$k \gets \{p, l\}$}
                    \State $chunk \gets send \langle RetrieveChunk, c_p \rangle \text{ to replica } k $
                    \If{$checkCom(c_p, chunk)$} 
                        \State \Return $chunk$
                    \EndIf
                \EndFor
            \EndIf
            \State $chunks = \emptyset$ $\quad \textit{\# when not comply}$
            \State $\{c_i\}_{0 \le i \le 3n} \gets extendEval(\{c_i\}_{1 \le i \le n})$
            \For{$j \gets \text{shuffle}(1 \cdots n) $} 
                \State $chunk \gets send \langle \textsf{RetrieveChunk}, c_j\rangle \text{ to replica } j$
                \If{$checkCom(c_j, chunk) = \true$}
                    \State $chunks[j] \gets chunk$
                \EndIf
            \EndFor
            \Event{$len(chunks) = f+1$ }
                \State $B \gets rsReconstruct(chunks)$
                \If{$checkCom(c_p, B[\cdot, p])$} $\textit{\# if p=-1, loop for all chunks}$
                    \State \Return $B[\cdot, p]$
                \EndIf
            \EndEvent
        \EndProcedure
        \Procedure{$\text{retrievePoint}$}{ $p$, $j$, $\{c_i\}_{1 \le i \le n}\}$}  $\textit{\# for a point at B[j,p]}$
            \For{$k \gets \{p, l\}$}
                \State $\phi, w \gets send \langle RetrievePoint, c_p, j\rangle \text{ to replica } k$
                \If{$\textit{kzgEvalVerify}(c_p, j, \phi, w)$} 
                    \State \Return $\phi$
                \EndIf
            \EndFor
            \State $\{c_i\}_{0 \le i \le 3n} \gets extendEval(\{c_i\}_{1 \le i \le n})$
            \For{$i \gets 1 \cdots n$}
                \State $chunk \gets send \langle RetrieveChunk, c_j\rangle \text{ to replica } j$
            \EndFor
            \State $B \gets \textit{retrieveChunk}(-1, \{c_i\}_{0 \le i < 3n}\})$ $\quad \textit{\# when not comply}$
            \State \Return $B[j,p]$
        \EndProcedure
    \end{algorithmic}}
\end{algorithm}

\subsection{Retrieval success as a function of number requested replicas}
\label{sec:appendix:retrieve:num_reconstruct}
Because every replica has three data chunks, a retriever at most needs to query $\left \lceil \frac{f+1}{3} \right \rceil + f$ replicas to reconstruct the interested rows in the data matrix. Let $a = \left \lceil \frac{f+1}{3} \right \rceil$.
Figure~\ref{fig:num_conn_to_success} plots the probability of successful reconstruction as a function of random connections with varying number of nodes $n=3f+1$.
Since each honest replica randomly select others to connect with, the probability of connecting to $a$ honest replicas, $\Pr(success)$, can be calculated by the following Equation: 

\[
    \label{eq:2}
    \Pr(success)  =\sum_{b=a}^y \frac{{2f+1 \choose b}{f \choose y-b}} {{3f+1 \choose y}} \tag{1}
\]

As we increase the number of replica, the ratio of achieving successful reconstruction is high even at the ratio of 0.6(f+a). Each node only need to connect to approximately $4f/5$ replicas to reconstruct the data if it uses query replicas randomly.

The data reconstruction can be run asynchronously, and takes one round trip time to complete. 

\begin{figure}[h]
\centering
\includegraphics[width=0.5\textwidth]{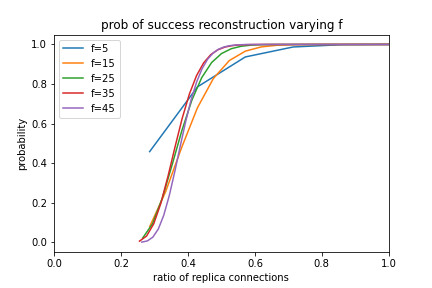}
\caption{The probability to connect to $a$ replicas as we change the total number of replicas $n=3f+1$. The x axis is presented as the ratio of $\frac{y}{f+a}$. The max value of x-axis is $f+a$.} 
\label{fig:num_conn_to_success}
\end{figure}

\begin{figure}[h]
\centering
\includegraphics[width=0.5\textwidth]{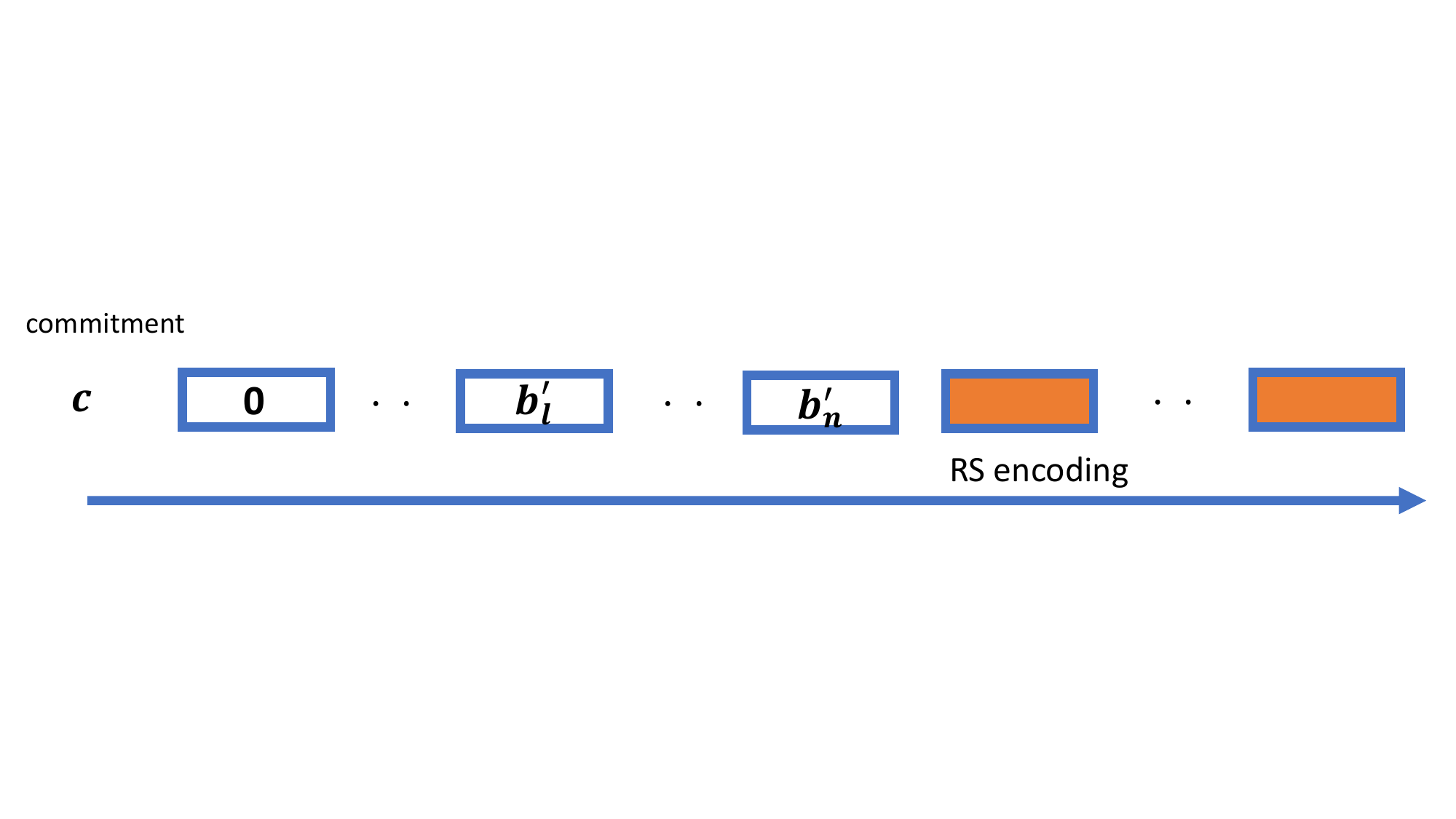}
\caption{The data structure for 1 Dimensional KZG RS encoding}
\label{fig:kzg_1d}
\end{figure}

\section{{\cardOne} Property proofs}
\label{sec:appendix:card7_proof}
\textbf{Termination}: Suppose some replicas invoke \primitive{Disperse}$(B)$. Because all honest replicas can timeout, they eventually finish with type \event{Incomplete} even if the protocol has stalled due to malicious leader or network asynchrony. 
%
When the leader is honest and the network is after GST, 
there would be at least $2f+1$ \messageProto{Collection messages} with correct commitment signatures. All verification in the \messageProto{Dispersal messages} would pass, and the leader is able to gather at least $2f+1$ aggregate signatures. As the result, all honest replicas can finish with type \event{Complete}.

\textbf{Availability}: An honest replica has finished with a $\DAL$ instance with a type \event{Complete}, it must have received an aggregate signature with at least $2f+1$ signer on a common message. $f+1$ of them must come from honest replicas. Because the coding ratio is $\frac{1}{3} < \frac{f+1}{3f+1}$, there are sufficient honest replicas to complete reconstruction. As the result any honest retriever can retrieve and reconstruct some data $B'$.

\textbf{Commitment Binding}: We prove by contradiction. Suppose a dispersal of $B$ completed with an aggregate signature, and adversary finds another $B'$ that also verify correctly with the signature. 
Since the aggregate signature requires at least $n-f$ signatures, at least $f+1$ honest replicas have signed a common message.
By intersection argument, there is at most one valid aggregate signature per instance.
%
%
Because the hash function is collision resistant and there is only one aggregate signature, the underlying {\cl} for producing both $B$ and $B'$ must also be also be identical. 
Suppose $B$ and $B'$ are different at the $p$-th mini-block. 
By polynomial binding, an adversary cannot find two $b_p$ and $b_p'$ that commits to identical commitment. Hence all $f+1$ honest replicas must have cheated. 

\textbf{Correctness}: An honest replica has \event{Finished} a dispersal with a type {\event{Complete}}, then by \textbf{Availability}, the retriever must be able to retrieve some data $B'$. 
Because the dispersed data $B'$ is \property{Commitment Binding}, any retriever can only retrieve identical mini-blocks that binds to the {\cl}. For those do not bind to the {\cl}, the kzg commitment cannot match. 
Let $B$ denote the mini-blocks that a leader collects right before starting the RS encoding. If the leader is honest, then $B=B'$.


\section{Non-accountable inclusion mechanisms}
\label{sec:appendix:non-accountables}

\subsection{Proofs for {\cardQuarter}}
\label{sec:appendix:non-accountables:proofs_cardQuarter}
Now we show proofs for {\cardQuarter}. One Big difference from {\cardOne} is the usage of 1D KZG scheme shown in Figure~\ref{fig:kzg_1d} and threshold signature. All mini-blocks are first concatenated into one dimensional line, and then the leader performs RS encoding to extends $n$ chunks to $3n$ chunks. Then the leader produces only one polynomial commitment, which is used as approval message.
Another big difference comes from an increase of required security assumption of $4f+1$ in order to enable information dispersal.

\textbf{Availability}, \textbf{Correctness} follows the same proof as {\cardOne}, since there are at least $f+1$ honest replicas each holding three correct chunks. \textbf{Commitment Binding} is provided by KZG polynomial commitment scheme. Because we do not change the message sequence and timeout, \textbf{Termination} is identical. Because it is not an accountable mechanism, \textbf{Data-tampering Resistance} is not applicable. \textbf{Inclusion-\textit{n-3f}} is stated in the Section~\ref{sec:card_quarter}, and the argument is identical to proofs of {\teaspoon}.
At last, \textbf{Space-Capturing}-$2f$ are identical proofs for {\teaspoon}.

\section{{\cardLite} Protocol and Analysis}
\label{sec:appendix:card_lite}

\subsection{Doubly Homomorphic Commitment Scheme}
\label{sec:appendix:card_lite:PCS}
In this section, we provide a summary on how to use Doubly Homomorphic Commitment Scheme\cite{bunz2021proofs} in {\cardLite} to achieve $O(\log{n})$ overhead complexity while still ensuring correct RS encoding. For explicit construction of polynomial commitment scheme that allows opening with witnesses, \cite{bunz2021proofs} contains more details on the construction.

Since {\cardLite} is a 2D KZG RS construction, the leader has a {\cl} of size $n$. The censorship resistance protocol described in Section~\ref{sec:card_lite} requires each replica to have $2$ parity chunks and $2$ commitments as usual, but it additionally requires samples other columns.
The two parity chunks are used for ensuring data availability.
To ensure all sampled commitments corresponds to a single commitment polynomial of degree $n-1$, every commitment must come with a logarithmic sized proof binding to the commitment $C$.
Because otherwise, the leader can attack the system by incorrect coding, such that no retrieve can reliably retrieve data binding to the {\cl}.

Regular merkle tree vector commitment provides the first requirement (logarithmic proof size), but fails the second requirement, because there is no way to verify all $3n$ commitments are interpolating a polynomial of degree $n-1$. 
The doubly homomorphic commitment scheme allow us to perform commitment of commitments, which treats the {\cl} of 2D KZG as $n$ evaluation points, which is points from $G_1$ group in some elliptic curve. 
Then with a polynomial with $n$ evaluations, we can find out their corresponding coefficients, and get the final commitment with Pedersen Commitment scheme (see Appendix~\ref{sec:appendix:background:pedersen_commitment}) by taking the inner product between the $n$ coefficients from group $G_1$ and $n$ basis from group $G_2$. We require the underlying elliptic curves has pairing capability such that we can derive a final commitment in $G_T$, see Appendix~\ref{sec:appendix:bilinear-curve}.
%


With construction of generalized inner product argument (GIPA)\cite{bunz2021proofs}, we can use the polynomial commitment scheme with ability to open indices with witness size $O(\log{n})$.
To save verification cost, one can use a structured setup on $G_2$ similar structure in the setup of KZG~\ref{sec:appendix:background:kzg} ($g^{z^i}$. 
However computation is not our concerns. But we can use SRS to limit the size of polynomial to degree $n-1$, which eliminates the needs of low degree testing to ensure the degree polynomial is indeed low degree of $n-1$.

Otherwise, we can use a low degree testing, by repeatedly reducing the degree of polynomial by splitting even and odd monomial, then committing the combined polynomial of half of the degree for $O(\log{n})$ times, until there is a constant polynomial.

\begin{figure}[h]
\centering
\includegraphics[width=0.6\textwidth]{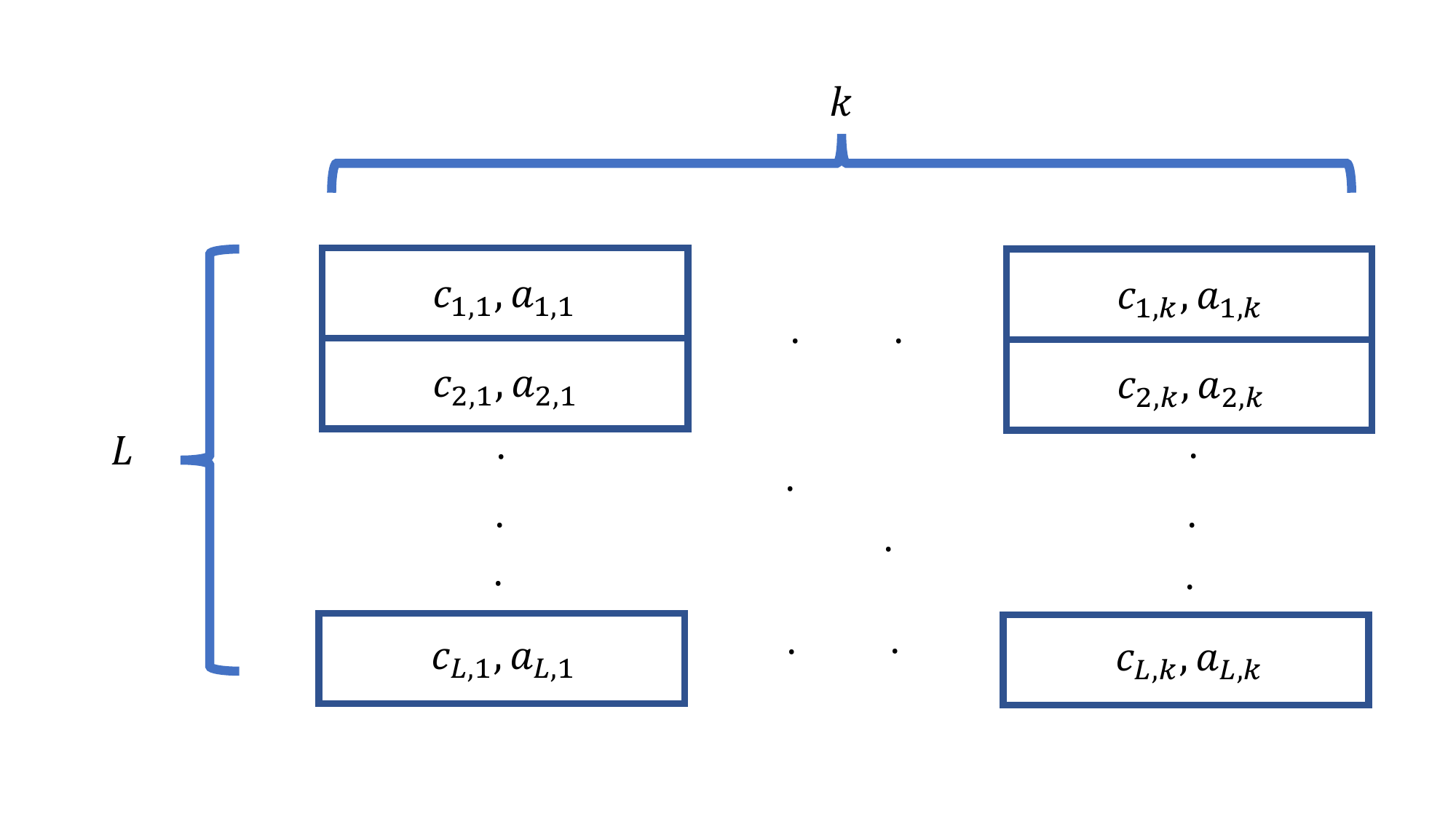}
\caption{The sampling matrix for one replica.}
\label{fig:lite_struct}
\end{figure}

\subsection{Configuring Parameter for boosting}
\label{sec:appendix:card_lite:analysis}
As described in Section~\ref{sec:card_lite}, we recap the full {\cardLite} protocol and present analysis to derive proper parameters to achieve properties stated in Table~\ref{table:card_comparison}.
Since the leader cannot send $\cl$ of length $n$ to replica, it has to use a doubly homomorphic commitment scheme to reveal commitments with $O(\log n)$ size proof.
All replicas sign the final commitment $C$ (which is commitment of the commitments) as its approving message.
After collecting mini-blocks and creating an {\al}, the leader creates $n$ sampling matrix, and each matrix has the format shown in Figure~\ref{fig:lite_struct}. 
The matrix entry at coordinate $(i,j)$ consists of a tuple of $(a_p, c_p, \textit{proof of } c_p)$ corresponding to the some random query that requests an attestation from replica $p$, where $1 \le p < n$. There are in total $kL$ random query.
In Figure~\ref{fig:lite_struct}, we use subscript to denote each sample, and omits their proofs.
Clearly, if some replica $q$ wants to create such a matrix, it cannot send $kL$ random queries to create the matrix. We require the leader to use {Fiat-Shamir} Transform to generate randomness for replica $q$ based on the commitment $C$ by using hash function $h$ to generate ${r_{i,j}} = h(C, i, j, q)$, where we use ($r_{i,j}$ mod $n$) to decide which replicas the random  query wants to receive the tuple for.
The leader has to present $n$ matrix for each replica. Let's now focus on one such matrix.

If a leader excludes attestations from some replicas, some entries in the matrix must be empty, because no leader can provide proof for something that is not on the polynomial or attestations is not available. It is orthogonal to Data tampering attack, which is handled by the 2D KZG construction.
Suppose there are two types of leaders, both types collect {\al} from replicas, and use 2D KZG scheme shown in Figure~\ref{fig:card_data_struct} for encoding, and generate a {\cl}. For simplicity of presentation, we don't allow any space capturing, hence all excluded mini-blocks must be $0$.

We want to understand the worst case scenario on how much a malicious leader can censor.
The honest leader type does not censor mini-blocks, whereas a malicious leader censors at least $s \ge1$ mini-blocks on top of $f$ allowable exclusion.
Now in the random query game, the probability that a random number mod $n$ picks on the included mini-block is $\Pr_m(\text{valid}) \le p_d = (2f+1-s)/(3f+1)$ for a malicious leader (inequality in case the malicious leader censors more); whereas for an honest leader is $\Pr_h(\text{valid}) = p_h = ({2f+1})/({3f+1})$ (inequality in case malicious replicas complies with the protocol).

Then if $k$ of such random queries are requested, the probability of providing all $k$ valid responses together is at $g_d \le p_d^k$ for malicious and $g_h \ge p_h^k$ for honest leader respectively. Clearly, $g_d < g_h$. As in the main text, we call the $k$ query together as a batch.
Now we can compute a value $\rho < 1$ such that ${p_d}^{\rho} = {p_h}$, and equivalently $\rho = \frac{\log(1/p_h)}{\log(1/p_d)} = \frac{\log(1/g_h)}{\log(1/g_d)}$. 
Similarly compute a value $k$, such that ${p_d}^{k} = \frac{1}{f^{\tau}}$, where $\tau$ is a small positive value $\tau > 1$, hence equivalently $k = \frac{\tau \log{f}}{\log{1/ p_d}}$.
We can bound $g_d$ and $g_h$ by 
\begin{align*}
    g_h \ge {p_h}^{k} &= {p_h}^{\frac{\tau \log{f}}{\log{(1/ p_d)}}} \\
                      &=  f^{- \frac{\log{(1 / p_h)}}{\log{(1 / p_d)}}} \\
                      &= f^{-\rho } \\
    g_d \le p_d^k &=  f^{-1}
\end{align*}
The second equality comes from logarithmic rule $a^{\log{b}} = b^{\log{a}}$. Now suppose the node sends $L$ batches to the leader.
Now define two probability $P_1, P_2$, which equal to the probability that none of $L$ batch is successful for honest and malicious leaders respectively. 

Suppose we choose $L=\tau f^{\rho } \log{f}$, where $\tau$ is some constant.
\begin{align*}
    P_1 \ge 1 - (1 - g_h)^L &= 1 - (1 - f^{-\rho })^{\tau f^{\rho }}\\
                            &= 1 - e^{-\tau \log{f}}\\
                            &= 1 - f^{-\tau}\\
    P_2 \le 1 - (1 - g_d)^L  &= 1 - (1 - f^{-1})^{-\tau  f^{\rho} \log{f}} \\
                &= 1 - e^{-\tau  f^{\rho -1} \log{f}}\\
\end{align*}

Clearly, $P_1$ approaches to $1$ for any $\tau>1$.
Whereas for $P_2$, we need additional procedure to derive bounds. 
Since exponential is continuous function, we can bring the limit to the exponent, 
\[
\lim_{f \rightarrow \infty} e^{-\tau f^{\rho -1} \log{f}} = e^{\lim_{f \rightarrow \infty} -\tau f^{\rho -1} \log{f}}
\]
we then need to calculate $\rho$ to derive bounds on $s$, such that for the malicious type the limit on the exponent goes to $0$, and consequently $P_2 \rightarrow 0$.

Since $\lim_{f \rightarrow \infty} (f^{-1/\log{f}})^{\tau} = e^{-\tau}$.
We want to get a $\rho$ such that $\rho-1$ equal to $\frac{-\tau}{\log{f}}$. By definition, $\rho = \frac{\log(1/p_h)}{\log(1/p_d)}$. 

We can approximate $\rho$ with Taylor series with some additive second order error $\frac{6}{(3f+1)^3}$, which is close to $0$ as $f$ increases.

\begin{align*}
    p_h &= \frac{2f+1}{3f+1} = 1 - \frac{f}{3f+1} \approx e^{-\frac{f}{3f+1}} \\
    p_d &= \frac{2f+1-s}{3f+1} = 1 - \frac{f-1+s}{3f+1} \approx e^{-\frac{f-1+s}{3f+1}}
\end{align*}
hence $\rho = \frac{f}{f-1+s}$. With the equality $\rho-1=\frac{-\tau}{\log{f}}$. Now compute $s$

\begin{align*}
    \log{f} &= \frac{\tau}{1 - \rho} = \frac{f-1+s}{s-1} \tau \\
        s   & = \frac{f \tau}{\log{f}} = \Theta(\frac{f}{\log{f}})\\
\end{align*}

As the result, 
\begin{align*}
    P_2 &\le 1 - e^{-\tau f^{\rho -1} \log{f}}= 1 - e^{-\tau e^{-\tau} \log{f}}  = 1 - f^{-\tau e^{\tau}} \\
\end{align*}

Clearly, $e^{-\tau}$ decreases much faster than $\tau$ for $\tau>1$, $P_2$ approaches $0$ exponentially fast as we increase $\tau$. Therefore
\begin{align*}
    P_1 &\ge 1 - f^{-\tau} \\
    P_2 &\le 1 - f^{-\tau e^{-\tau}}  \\
\end{align*}
Even with a small $\tau$, the honest replicas can differentiate two types with extremely high probability.
The leader prepare one $k \times L$ matrix for every one replica. Each matrix corresponds to a sampling replica $p$. If the leader finds a valid batch (row), it can demonstrate that sufficient many mini-blocks are included, by sending the particular valid row(batch) in the matrix to the replica.
For an honest leader, it needs to find the working batch for at least $2f+1$ replicas, hence the probability of is
\begin{align*}
    \Pr(\text{success | honest}) &= {P_1}^{2f+1} = (1 - f^{-\tau})^{2f+1} \\
    &\approx e^{-\frac{2f+1}{f^{\tau}}} \\
    &\approx e^{-2 f^{1-\tau}} \textit{  where } \tau>1\\
    &\approx 1
\end{align*}
However, if the leader is malicious, it needs to convince at least $f+1$ honest replica.

\begin{align*}
    \Pr(\text{success | malicious}) &= {P_2}^{f+1} = (1 - f^{-\tau e^{-\tau}})^{f+1} \\
    &\approx e^{-\frac{f+1}{f^{\tau e^{-\tau}}}} \\
    &\approx e^{-f^{(1-\tau e^{-\tau})}} \\
    &\approx e^{-f}\\
    &\approx 0
\end{align*}

Hence for any number of censoring with $s \ge \Theta(\frac{f}{\log{f}})$, there is extremely low chance that all $f+1$ would find valid batches.
So a malicious leader can at most censor $s = o(\frac{f}{\log{f}})$. In summary, the leader needs to create $n$ sample matrix for each replicas, each of which has  {$(\tau f^{\frac{f}{f-1+s}} \log{f})$} number of rows and ($\frac{\tau}{\log{1/ p_d}} \log{f}$) number of column.
Since $p_d = \frac{2f+1-s}{3f+1}$ and $1 \le s \le o(\frac{f}{\log{f}})$, we can derive a more concise inequality for the number of columns, i.e batch size as
\begin{align*}
    \centering
    k &= -\tau (\log{f}) \log{p_d} \\
                        & \le -\tau (\log{f})\log(\frac{2f - \frac{f}{\log{f}}}{3f})\\
                        &\le \tau (\log{f}) \log{2/3}\\
                        &\le \kappa \log{f}
\end{align*}

where $\kappa$ is a small positive constant.

For every dispersal, a leader has in total $O(f^2 \log^2{n})$ rows for all replicas.
A malicious leader cannot grind the final commitment $C$ to increase the probability of convincing $f+1$ replicas if it censors $s=\Theta(\frac{f}{\log{f}})$, 
%
because it needs to grind at least $Z$ times, such that $Z*e^{-f}$ equals to some constant, where $n$ can take asymptotic value.

The overall overhead message size is therefore just the $\kappa \log{f}$ tuples consisting of $(a_p, c_p, proof)$. The proof part takes $O(\log{n})$ size because the polynomial commitment scheme uses IPA. However, since Pedersen commitments are aggregatable, the $\log{f}$ proofs can collapse together in a single proof. Hence the overall overhead complexity is still $O(\log{f})$.

\subsection{{\cardLite} Proof}
\label{sec:appendix:card_lite:proof}
We now prove the {\DAL} properties for {\cardLite} property.

\textbf{Termination}: Since we don't touch the protocol mechanism, and there is still timeout to detect livelessness, Termination proof is identical to the {\cardOne}.

\textbf{Availability}: If an honest replicas received an threshold signature, there must be at least $2f+1$ signer, and $f+1$ are honest replicas. 
Since there are at least $f+1$ honest replicas that signed the message, eventually an honest retrieve can find a honest replicas to get the correct {\cl} that commits to the final commitment $C$.
After getting the {\cl}, since coding ratio is $1/3$, any honest retriever can reconstruct the whole BFT block without concerning about the degree of the polynomial and incorrect coding. 
This is because we use IPA for commitment to the polynomial and can use a fixed size SRS to ensure the commitment is consistent to a polynomial of degree $n-1$. (we can optionally use a low degree testing for the same purpose).

\textbf{Commitment Binding}: The idea is similar in proof for {\cardOne} except we replace the collision resistant hash function(CRHF) with a polynomial commitment scheme (generalized inner product argument), which is also collision resistant. 

\textbf{Correctness}: The proof is identical to {\cardOne}.

\textbf{Data-tampering Resistance}: Since IPA is commitment binding, no adversary can find another {\cl} that maps to the same final commitment. 
As the result, the proof is identical to {\cardOne}, since all signing replicas is using a common {\cl}.

\textbf{Inclusion-$(f-o(f/ \log{n}))$}: The proof is a direct consequence from Section~\ref{sec:appendix:card_lite:analysis}.

\textbf{Space-Capturing-($o(f / \log{f})$)}: Since the Data-tampering Resistance is $n - 2f - z$ where $z = o(f / \log{f})$, the malicious leader can capture $z$ mini-blocks without honest replicas noticing. The malicious leader cannot capture more than $z$ because the correct batch of size $k$ which the leader has to provide to all honest replicas must come with $k$ original attestations. 

One thing worth noticing is that the protocol designer also has a coarser control on how much to allow the leader to do space capturing, instead of a batch containing all $k$ original attestation, it could allow $k-1$ attestations but allow the leader to have $2$ attestations.

\section{Stealing or {\LSC}?}
\label{sec:appendix:card_design:lsc}
So far, we have yet discussed Leader Space Capturing ({\LSC}) in depth. This section serves for developing difference and clarifying some internal tradeoff.
In developing censorship resistance property, we use decentralization as the underlying force to achieve it.
It is therefore natural to adhere to a principle that block building is a collective work that everyone should be given a fair share of space resource. 
However, a space capturing is a phenomenal when a leader has taken data space from other replicas which can potentially have positive effects on the system utilization.
On the other hand, if without constraint a malicious leader can capture too much space leaving only one guaranteed inclusion of honest replica; in the protocol {\teaspoon}, as $n$ becomes larger, the protocol is as close to as a regular leader based BFT protocol.
{\LSC} is intractable with non-accountable inclusion mechanism because there is no mapping about whose mini-blocks get included. 
On the other hand, accountable inclusion mechanism allows protocol designers to tune the amount of {\LSC}. 
In both accountable mechanisms we discussed, the parameter $\eta$ is configured as $0$, however with little modification, all accountable protocols can increase the number, by allowing a leader to replace some $0$ in the {\cl} with leader's own attestations and mini-block.

Given possible choices of {\LSC}, we now elaborate in what ways {\LSC} can be beneficial and harmful, which would be helpful for protocol designers to decides if to allow {\LSC} and at what amount.

As mentioned, space capturing is unobservable in non-accountable mechanisms. 
However, if it is allowed in accountable mechanism, $\eta$ is the only mechanism which can control the amount of {\LSC}. Recall in in section~\ref{sec:card_design:freq_waiting}, we discuss a local parameter {\textit{read-proc-time}} which determines how much longer an honest leader waits after receiving $n-f$ mini-blocks. 
But since the {\textit{read-proc-time}} is an unobservable parameter to the outside parties, all leaders can set it to maximize {\LSC}, and the protocol has no means of enforcement.
On the positive side, when the network is bad and many replicas cannot submit mini-blocks fast enough before the dispersal period ends, 
{\LSC} allows the leader to make use of available space for higher system throughput. 
On the negative side, if we attempt a rational model for a moment, and assume replica's revenue increases with the number of included mini-block, then {\LSC} encourages any leader to take data space from other replicas. %
However, at the same time if everyone gets the chance to be a leader, all can do {\LSC} while at the same time increasing system utilization.

If {\LSC} is allowed, there is subtle difference of {\LSC} depending on the accountability of the inclusion mechanism. 
If non-accountable mechanism is used, a malicious leader can apply censorship on the basis of replicas, and the replica which repeatedly do not get mini-blocks included cannot even prove itself being censored. And vice versa for a malicious replicas falsely accuses a leader when its data is included.
In accountable mechanism, although we cannot attribute which party is at fault, we can at least know without false alarm there is something wrong in the protocol such that we can perform analysis on the leader or even increase network waiting time.
%

%
If we view {\LSC} from a broader perspective that takes clients into considerations, the next section shows that the {\LSC} parameters $\eta$ has a great impact on the amount of censorship resistance to client's experience and system efficiency. In general, the less is the $\eta$ value, the better is the client's experience and system efficiency.

\section{Navigator protocol}
\label{sec:appendix:navigator:protocol}

\subsection{Client and Replicas time synchronization}
\label{sec:appendix:navigator:tx_broadcast_sync} 
We present a time synchronization protocol in Algorithm~\ref{alg:navigation_sync}. Essentially, the algorithm requires the client to periodically pings each replicas about their status about when they would submit the next mini-block. Assuming the worst cast network delay $\Delta$ after GST, the client can assure they can make the deadline. If the network has not reached GST, the BFT protocol might not reach agreement. 
The accuracy of the status synchronization depends on the ping frequency, which is a tunable parameter for each client.

\begin{algorithm}[!t]
    {\fontsize{8pt}{8pt}\selectfont \caption{{\navigation} Synchronization}
    \label{alg:navigation_sync}
    \begin{algorithmic}[1]
        \State $S_v \gets \emptyset$, $v \gets \textit{0}$ $ {\cdots INF}$, $timer$
        \State \underline{\textbf{At the client:} }
        \Procedure{$cronJob$}{}
            \For{$i \gets 1 \cdots n$}
                \State $s \gets send \langle PING, v, \textit{timer.now} \rangle \quad \quad \text{\# timeout 2}\Delta$
                \State $S_v \gets S_v \cup  s$ 
            \EndFor
            \State $\textit{\# } \Delta \textit{ is latency in network model}$
            \State $S_v \gets \{s: s \in S_v, \textit{timer.now} < s.deadline \}$ 
        \EndProcedure
        \State \underline{\textbf{At the replica } $p$ :}
        \Event{$ \textit{receiving } \langle PING, v, tn\rangle $}
            \If{$(v > \textit{myView}) \lor (v = \textit{myView } \wedge \textit{currAt = SIGNAL} )$}
                \State $\textit{deadline} \gets tn + 2\Delta + \textit{timeToNextSignal}$
                \State $send \langle PONG, deadline\rangle$
            \Else
                \State $send \langle PONG, 0\rangle$
            \EndIf
        \EndEvent
        \Procedure{$\text{getAvailableReplicas}$}{ $x$, $tv$} 
            \State \Return $\textit{shuffle}(S_{tv})[x:]$ $\textit{\# } \textit{ return only x replicas}$
        \EndProcedure
    \end{algorithmic}}
\end{algorithm}  

\begin{figure}[h]
\centering
\includegraphics[width=0.5\textwidth]{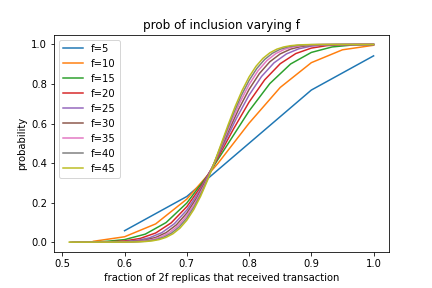}
\caption{The probability of inclusion given the leader is malicious with varying $f$. The x-axis is a ratio of the number of copies of transactions and $2f$}
\label{fig:prob_inclusion_malicious_leader_f1}
\end{figure}

\subsection{Empirical Evaluation on Inclusion Probability when Leader is honest}
\label{sec:appendix:navigator:empirical_eval_with_honest_leader}

Figure~\ref{fig:heatmap_honest_leader} plots the probability of censorship for a transaction using a {\DAL} with $t=f+1$ and $n = 3f+1$ when the leader is honest, such that $t \le q \le n-f$, where $f=10$ and $n=3f+1=31$.

\begin{figure}[h]
\centering
\includegraphics[width=1\textwidth]{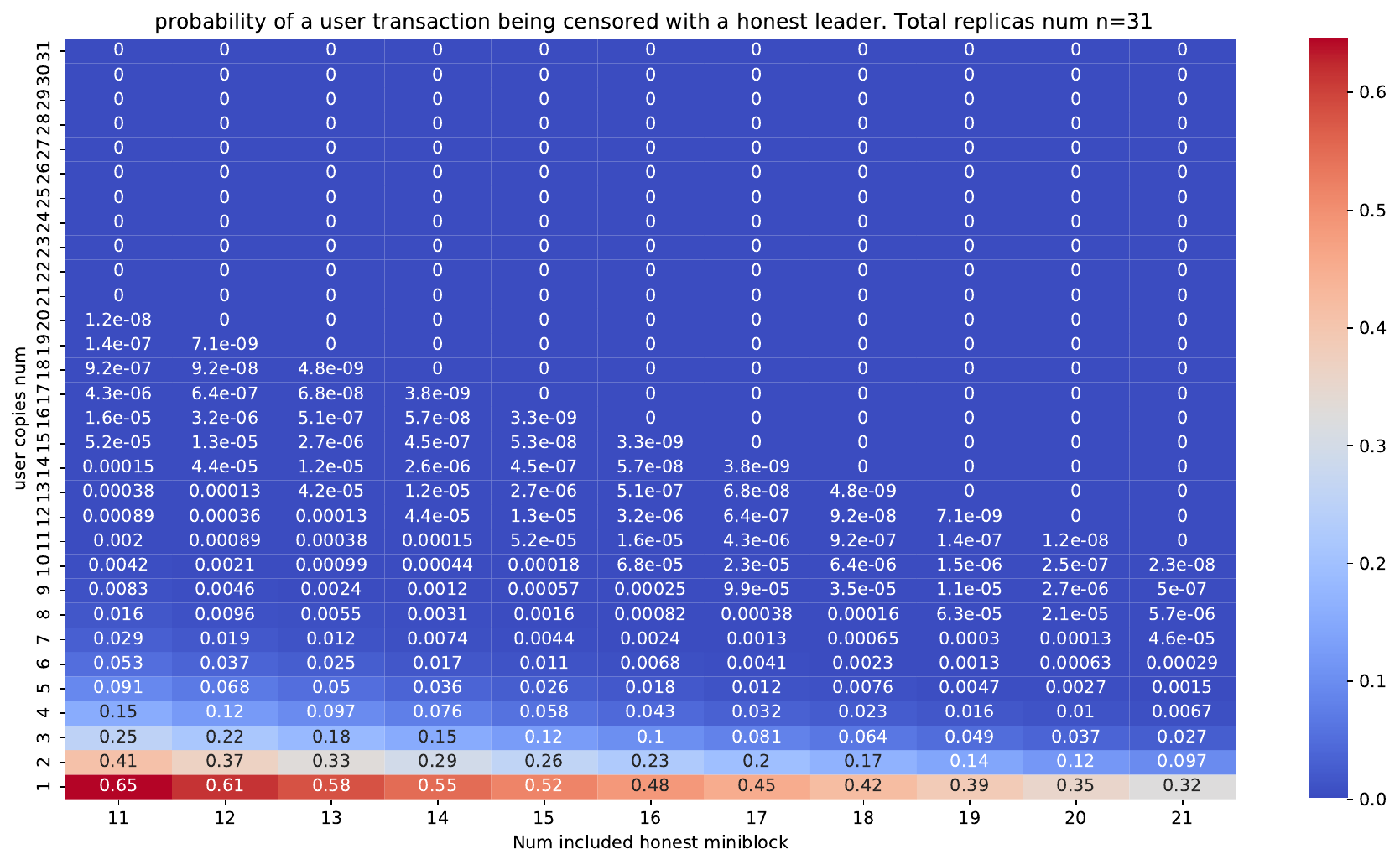}
\caption{Under honest leader, the probability of a transaction being censored as a function of $(x,q)$, where $1\le x \le n$, $f<q \le 2f+1$} 
\label{fig:heatmap_honest_leader}
\end{figure}

\subsection{Analysis on probability of inclusion given malicious leader with $f+1 \ge x$}
\label{sec:appendix:nagivator:malicious_leader_f1} 
Figure ~\ref{fig:prob_inclusion_malicious_leader_f1} plots the probability of inclusion with varying number of replicas $n=3f+1$. The inclusion can be satisfied with high probability even the client sends less than $2f+1$ copies, approximately $1.8f$. 
The probability of inclusion decreases as the clients send less number of copies of transactions.

%

\subsection{Analysis on the upper bound on the total number of censored transaction}
\label{sec:appendix:nagivator:malicious_leader_upper}
Suppose there is a total $T$ unique transactions, denoted as $t_1,..t_T$ by all clients,
where each transaction has $x$ copies which are randomly sent to replicas without replacement.
For a small constant ratio $\alpha$, it is exponentially unlikely that the leader would be able to censor more than $\alpha T$ transaction.%

Define a set of mini-blocks $S$ of cardinality $n-f$, we want to compute the probability that a malicious leader can find a particular $S$ such that the union of all transactions from mini-blocks in set $S$ has a size less than $(1-\alpha) T$.
For the same reason we discussed in Section~\ref{sec:bigdipper_prob_bad_leader}, the probability of inclusion is entirely depend on how likely a client sends its transaction to honest replicas; since adversary can always remove those transaction by selectively accepting other honest mini-block.
The probability of such set $S$ exists can be written as Equation~\ref{eq:7}, 
where every transaction corresponds to an independent random indicator variable, $\mathbbm{1}_{t_i \in S}$.
The variable is independent because every transaction runs its own instance of Algorithm~\ref{alg:navigation}.
It is a Bernoulli random variables that evaluates to 1 if condition $t_i \in S$ holds true; otherwise 0. The probability of the condition being true is $ p = 1- \prod_{i=0}^{x-1} \frac{n-f}{n-i}$. 

\begin{align*}
\label{eq:7}
    \Pr(\textit{remove } \alpha T ) &= \Pr_{ {|S| = n-f}} \Bigl\{ \exists S: \sum_{i=1}^{T} \mathbbm{1}_{t_i \in S} < (1-\alpha) T \Bigr\} \tag{7} \\
\label{eq:8}
    &\le {n \choose n-f} \Pr \Bigl\{ S: \sum_{i=1}^{T} \mathbbm{1}_{t_i \in S} < (1-\alpha) T \Bigr\} \tag{8} \\
\end{align*}
The inequality from Equation~\ref{eq:7} to~\ref{eq:8} comes from Boole's inequality.
%
%
%
To get the probability on the right high side, we use a random variable $\frac{\sum_{i=1}^{T}\mathbbm{1}_{t_i \in S}}{T} = X$, whose expected value is $X$ is $p$, by rearranging terms of Equation~\ref{eq:8}, we get 
\begin{align*}
    \label{eq:9}
    \Pr \Bigl\{ S: \sum_{i=1}^{T} &\mathbbm{1}_{t_i \in S} < (1-\alpha) T \Bigr\} = \Pr \Bigl\{ X < 1-\alpha \Bigr\} \tag{9}\\
            &=  \Pr \Bigl\{E[X] - X > E[X] + \alpha -1\Bigr\} \tag{10}\\
            \label{eq:11}
            &\le  \Pr \Bigl\{\left|E[X] - X\right| > E[X] + \alpha -1\Bigr\} \tag{11}\\
            \label{eq:12}
            &\le 2 \exp{\{-2(p+\alpha-1)^2T\}} \tag{12}
\end{align*}
In Equation~\ref{eq:11}, we let $ \alpha >1-E[X]=\prod_{i=0}^{x-1} \frac{n-f}{n-i}$, inequality is due to $\Pr \Bigl\{\left|E[X] - X\right| > E[X] + \alpha -1\Bigr\} = \Pr \Bigl\{E[X] - X > E[X] + \alpha -1\Bigr\} + \Pr \Bigl\{X-E[X] < -E[X] - \alpha +1\Bigr\}$. Inequality~\ref{eq:12} comes from Hoeffding's Inequality.

The probability of removing $\alpha T$ transactions is therefore upper bounded by 
${2 {n \choose n-f} \exp{\{-2(p+\alpha-1)^2T\}}}$.

%
%
As it can be shown that given $300$ replicas and $200$ transactions per mini-block, a malicious leader can censor more than $5\%$ of the transactions.

    
    
    

\subsection{Navigator Properties Proof}
\label{sec:appendix:navigator_proof}
In this section we prove the four properties stated in Section~\ref{sec:navigation}.

\textbf{Strict Transaction Inclusion in DA layer}: Suppose there is a {\DAL} instance with $t$ guaranteed honest mini-block inclusion, and the client sends $n-t+1$ copies of the same transaction to distinct replicas which are available for the instance round, then suppose the leader is malicious and only includes the minimal $t$ mini-blocks from honest replicas, which is required by the {\DAL}. By pigeonhole principle, there must be at least one transaction received by one of the $t$ honest replicas whose mini-block would get included. Similar argument can be made for honest leader.

\begin{figure}[h]
\centering
\includegraphics[width=0.5\textwidth]{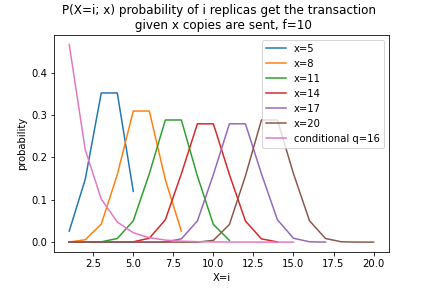}
\caption{The probability of that , $X=i$, $i$ number of honest replicas receiving the transaction given $x$ copies was sent. We show one conditional probability when $q=16$ assuming network latency is i.i.d to illustrate the point that both $q,x$ needs to work together to increase the probability of $\Pr{(IN)}$.}
\label{fig:prob_X_i_given_x}
\end{figure}

\textbf{Probabilistic Transaction Inclusion in DA layer Under Honest leader}: Suppose a {\DAL} instance with $t$ guaranteed honest mini-blocks inclusion, if an honest client sends $x$ copies of a transaction to random distinct available replicas. We first show that the probability of inclusion increases with $x$ and $t$ in the feasible region where a leader cannot exclude all $x$ copies without making the instance \event{Incomplete}. From Section~\ref{sec:navigation} and Table~\ref{table:prob_inclusion_table}, we can define the feasible region as $(0 < x < n,t\le q \le n-f)$ when the leader is honest; 
%
By our first properties, we already knows when $x\ge n-t+1$, the probability of inclusion is $1$. We now focus on the rest of the case.

When the leader is honest and $(0 < x < n-t+1,t\le q < n-f)$, Section~\ref{sec:navigation:honest_leader} shows that 
\begin{align*}
    \Pr(\textit{IN; x,q} ) 
             &= 1 - \sum_{i=1}^x \Pr(\textit{None } | X=i; q ) \Pr{(X=i;x)}\\
             &= 1 - \sum_{i=1}^x \Pr(\textit{None } | X=i; q ) \frac{{n-f \choose i}{f \choose x-i}} { {n \choose x}}\label{eq:13} \tag{13}\\
\end{align*}

where $X$ is a random variable indicating the number of honest replicas that has client's transaction. {$P(q,i) = 1- \Pr(\textit{None } | X=i; q)$} is the probability none of $X$ honest mini-block containing the client's transaction. 
We note that $P(q,i)$ is monotonically increasing in $q$, it is because as there are more guaranteed honest mini-block inclusion, the chance that one of the $i$ mini-blocks containing the transaction gets included is non-decreasing. 
Since $q \ge t$, $P(q,i)$ increases as $t$ increases. Because $\Pr{(X=i; x)}$ does not depend on $t$, the overall probability $\Pr{(IN)}$ as a function of $q$ increases as $P(q,i)$ increases.
Therefore $\Pr{(IN)}$ is monotonically increasing as $q$ increases.

We now argue  $\Pr{(IN)}$ is monotonically increasing as $x$ increases. We first show that $P(q,i)$ is monotonically increasing in $i$ while fixing $q$. 
Since the honest leader is collecting $q$ honest mini-blocks out of $n-f$ honest replicas. If the number of honest replicas whose mini-block containing the transaction increases, then the chance of $P(q,i)$ is at least non-decreasing. 
Now in the Equation~\ref{eq:13}, there is a conditional sum over all possible value which $X$ can take up to $x$; in the worst case when $\Pr(\textit{None } | X=i; q)$ is a constant ratio, then the final probability $\Pr{(IN)}$ is also be a constant. 
But as the conditional probability starts to decrease, the overall $\Pr{(IN)}$ increases.
In addition to that, as $x$ increases, the distribution of $P(X=i;x)$ gets shifted rightward, such that it is more common (higher probability) that more honest replicas receiving the transaction.
The intuition behind the phenomena is as $x$ increases, the numerator terms of $\frac{{n-f \choose i}{f \choose x-i}} { {n \choose x}}$ finds its peak also at higher $i$.
Figure~\ref{fig:prob_X_i_given_x} plots distribution for different $x$ when $f=10$.
%
Hence if the conditional probability drop fast enough, a client only needs to send a few copies $x$, such that the permutation term (binomial like) raises up just after the conditional term drops, the probability can be kept very low. 
In Figure~\ref{fig:prob_X_i_given_x}, the conditional probability is using i.i.d assumption with number of included mini-block $q=16$, out of $21$ possible honest mini-blocks with $n=31$.

Now, we show the second statement that if the network latency is i.i.d, the inclusion probability increases exponentially with $x$ and $t$. Since network latency is i.i.d,  the conditional part can be computed as 
\begin{align*}
    \centering
    \Pr(\textit{None } | X=i;q ) &= \prod_{j=0}^{q-1} \frac{n-f-i-j}{n-f-j} \\
                &= \prod_{j=0}^{q-1} \frac{a - i}{a }\\
                &< (1 - \frac{i}{n-f})^{q} \\
                &\approx e^{-q\frac{i}{n-f}}
\end{align*}

Hence we know the term {$\Pr(\textit{None } | X=i; q)$} is exponentially small when $q$ increases. However, we need also show that the summation term in Equation~\ref{eq:13} is small.

First note that, the permutation term, $\Pr{(X=i;x)}$, is always less than $1$, so we can just focus on the case when $i$ is small, because when $i$ is large, the conditional probability part is already very small.

Now we use the binomial upper bound, ${n \choose k} < (\frac{ne}{k})^k$\cite{das2016brief}, for numerator, and binomial lower bound, $(\frac{n}{k})^k < {n \choose k}$, for denominator to show that in worse case, when $i$ is small the probability is exponential decreasing when $x$ increases

\begin{align*}
    \centering
    \frac{{n-f \choose i}{f \choose x-i}} { {n \choose x}} &< \frac{(\frac{(n-f)e}{i})^i (\frac{fe}{x-i})^{x-i}}{(\frac{n}{x})^x}\\
            &= \frac{(n-f)^i f^{x-i} x^x e^x}{i^i (x-i)^{x-i} n^x}\\
            &= \frac{(n-f)^i f^x x^x e^x}{i^i x^x n^x}\\
            &\approx \frac{(n-f)^i}{i^i} (\frac{e}{3})^x
\end{align*}

Clearly for small $i$, as $x$ increases the value is upper bound by a exponentially decreasing value. Hence it completes the proof.

\textbf{Probabilistic Transaction Inclusion in DA layer Under Malicious Leader}: The first part of the statement that the probability increases as $x$ increases comes directly from the fact from Section~\ref{sec:bigdipper_prob_bad_leader} where $f+1 \le x < n-t+1$ that 
\[
    \Pr(IN) = \sum_{i=f+1}^{x} \frac{ {n-f \choose i} {f \choose  x-i} }{{n \choose x}}
\]
Clearly, $\Pr(IN)$ increases as $x$ increases. If $x\le f$, the probability is $0$, because the malicious leader can choose the rest $f+1$ honest replicas which does not have the transaction.

To show that a malicious leader cannot censor all $u$ unique transactions altogether, we use the result from Section~\ref{sec:bigdipper_upper_bound}, where $p = 1- \prod_{i=0}^{x-1} \frac{n-f}{n-i}$
\[
    \Pr(\textit{Censor all u txs}) = 2 {n \choose n-f} \exp{\{-2(p+\alpha-1)^2T\}}
\]
As we can see, given a fixed pair of $n,x$, the probability that the censoring attack succeeds is decreasing exponentially in either $T$, $\alpha$ or $x$. By letting $u= \alpha T$, we get our property.

\textbf{2} $\Delta$ \textbf{Probabilistic Confirmation in DA layer}: By Algorithm~\ref{alg:navigation}, the clients is able to get the response if some replicas are able to submit transactions. %
The client can calculate its probability of inclusion based on the response. 
Although some responses might be sent from malicious replicas, the calculation stated in the previous properties can apply directly to derive the probability result, since they already assume the malicious replicas are not including the transaction.
Since each categories have the explicit formula to compute the probability, the client can use the received the responses, the number of duplication and current belief on the leader to compute the probability of inclusion.

\section{Spamming Attack}
\label{sec:appendix_navigation_spamming}
{\name} expects clients to duplicate its transactions among replicas to increase censorship resistance. 
However, without proper mechanism clients are inclined to spam the BFT replicas for increasing the chance of inclusion. 
%
%
A direct way of controlling the spamming problem is to increase the cost of transaction. 
We could use a price mechanism, like EIP1559~\cite{roughgarden2020transaction} deployed in Ethereum, which uses posted price when the network is not congested, and otherwise use first price auction. 
But the price mechanism alone does not the problem, since the price is raised for everyone. 
The final cost for a client should also correspond to the number of copies a transaction has been sent.
%
%
%
This can be enforced by requiring the execution layer (virtual machine) to also charge fees by transactions counts as a meta information when calculating the spent resources.

Another method that does not rely on in-protocol mechanism like 1559 is to use market mechanism entirely, such that every BFT replicas is free to post its own prices. 
The $n$ replicas creates a competitive market. If we consider there are 2 types of transaction hold by various clients, and let the BFT replicas being the first mover in the information asymmetry game, a natural solution is to use screening mechanism.
We can find a separating subgame perfect Nash equilibrium if the number of high value transaction type is relatively small compared to the low value type transaction. The screening mechanism has been been well studied in the field of contract theory~\cite{rothschild1978equilibrium}. 
However in the screening model, the BFT replicas would lose some profits from high value transaction, because the high value type would mimic the type of low value transactions.
Then the replicas can use the value to sort the transaction and only include the ones that fit into the mini-blocks.

For spamming concern at network level, we can rely on standard approaches including content distribution network, proxy networks which have been already used in the real world scenario.

\begin{figure}[h]
\includegraphics[width=1\textwidth]{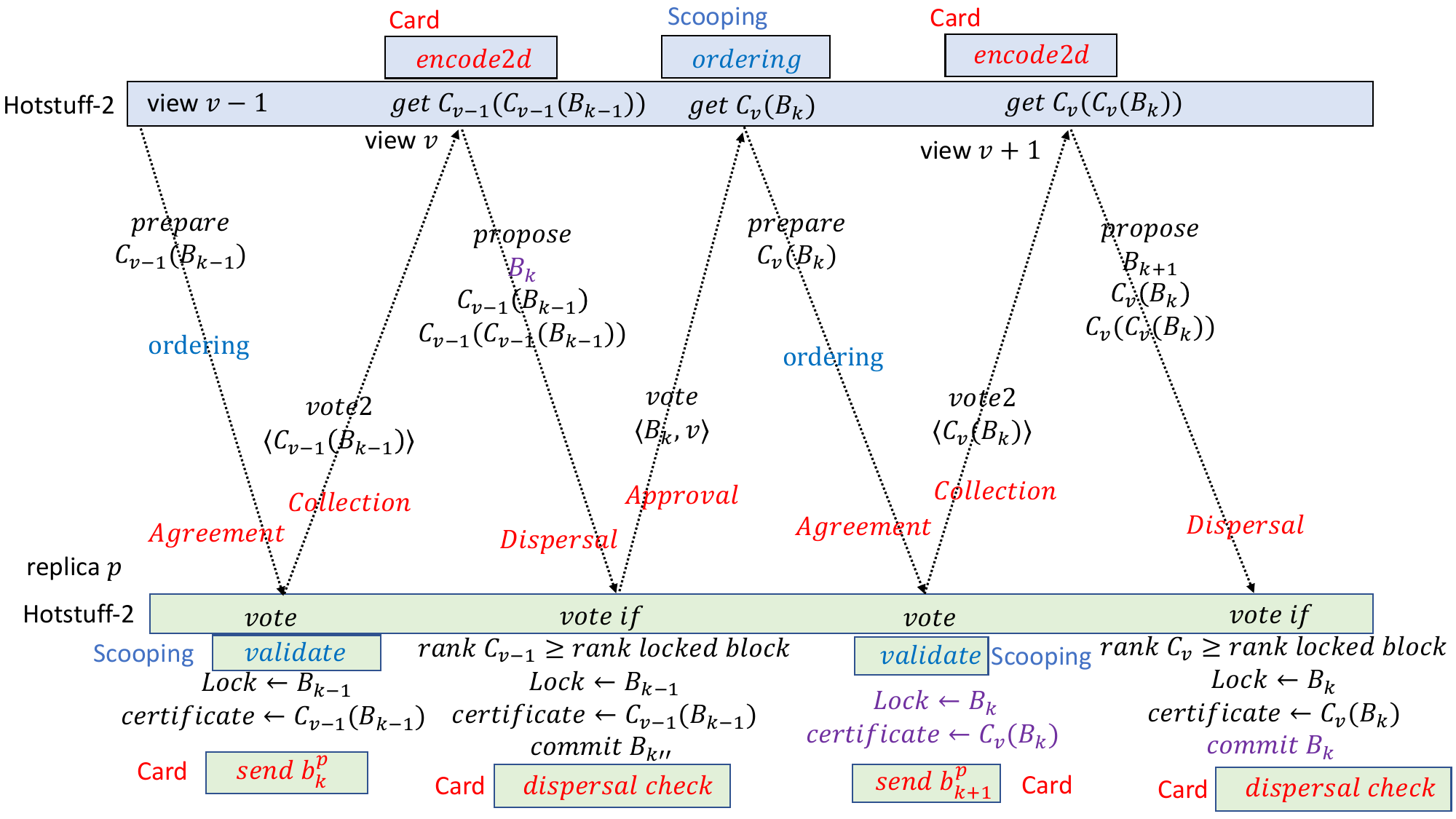}
\caption{A flow diagram depicting an adaptation of $\name$ with Hotstuff. The procedures from Hotstuff-2 protocol is marked in black; {\cardOne} protocol is marked with red, and {\scooping} protocol is marked in blue.
The signature validation procedure has to be verified after receiving the settlement message.}
\label{fig:hotstuff-scooping}
\end{figure}

\section{Integration with Hotstuff-2}

\label{sec:appendix_integration_hotstuff-2}
Figure~\ref{fig:hotstuff-scooping} depicts a flow diagram that integrates {\cardOne} with Hotstuff-2. We encourage readers to read about Hotstuff-2\cite{cryptoeprint:2023/397} protocol. $C_v(\cdot)$ represents a certificate at view $v$, $C_v(C_v{\cdot})$ represents a double certificates, which indicates the block at view $v$ can be safely committed by anyone. $k$ denotes block height, since no all view can successfully commit a block, hence $k$ is different from $v$.

On the diagram, the blue bar represents the Hotstuff-2 leader replica, the green bar at the bottom represents a non-leader replicas. Procedures needs to be carried out after receiving the messages are written close to the message arrow, and the sequence of execution is from top to down.

Because we are using verifiable information dispersal, all block $B$ from the Hotstuff-2 protocol actually means the metadata overhead described in the {\cardOne} protocol, which are {\cl} and {\al}.

All replicas need to three algorithms: Hotstuff-2 (in black and purple), {\cardOne} (in red) and Scooping (in blue).
The time flows from left to right, transitioning from view $v-1$ to $v$. Starting from \textit{Collection} message from the left, it belongs to the {\cardOne} protocol when a replica tries to get its mini-block included by the leader; 
the \textit{vote2} message belongs to Hotstuff-2 protocol, the leader needs to collect at least $2f+1$ vote2 messages before proceeding; \textit{vote2} message is sent at a stage that the replica has locked some block, and is planning to vote the second time to commit the block. The subscript indicates that the block $B_{k-1}$ at blockchain height $k-1$ happens at view number $v-1$.

When the leader collects at least $2f+1$ Collections messages and $2f+1$ vote2 messages. The leader performs the 2D RS encoding and sends the \textit{Dispersal} message of the {\cardOne} protocol, at the same time the leader also sends the \textit{Propose} message along with the double certificates to the replica. This is the official start of the new block, the block is tagged with view number $v$; in the diagram we colored all procedures relates to this block in purple.
At the same time, the previous block $B_{k-1}$ has just been able to commit after creating a double certificate by collecting signatures from $2f+1$ \textit{vote2} messages, the double certificate are piggybacked to the replicas. 

After receiving those messages at the replica side, the replica would directly commit the block that has double certificates, in addition it would indirectly commit all blocks that are prefix of the block just being committed. In the diagram, we represents them as $B_{k''}$.
The non-leader replica then needs to perform the dispersal check which corresponds to the kzg commitment checking and signature checking from {\cardOne} Protocol. After everything passes, the replica sends the \textit{Approve} message of {\cardOne} and \textit{vote} message of Hotstuff-2 to the leader. 

The leader waits to receive messages from $2f+1$ replicas and creates the combined signature for \textit{Agreement} message of {\cardOne} and similarly a certificate $C_v(B_k)$ for \textit{prepare} message of Hotstuff-2. 

At the replica side, after receiving the first certificate for $B_k$, the first step is to check if the validate procedure from the {\scooping} protocol is successful. If it is not, then the data is not available, so the replica should not lock the certificate. Otherwise, the replica proceeds with regular Hotstuff-2 procedures. 
At the same time, the leader is preparing for a new view at $v+1$, so the replica should prepare its new mini-block $b_{k+1}$ and sends to the leader. 
And we just complete a cycle of optimistic path when nothing wrong happens.

\section{Properties proofs to {\nameOne} Hotstuff-2 integration}
\label{sec:appendix_integration_proof}

\subsection{Properties proofs to Safety and Liveness}
We prove that the integrated protocol satisfies
safety and liveness properties of BFT protocol, which has great similarity to the proof in  ~\cite{cryptoeprint:2023/397}.
%
All properties listed in Section~\ref{sec:introduction} can be proved by applying properties proven in {\cardOne} and {\navigation} protocols.

\textbf{Safety}: We prove by contradiction, suppose two blocks $B_l,B_l'$ commit at the same height $l$.
Since {\DAL} has \property{Correctness}, \property{Commitment binding} and \property{Data-tampering Resistance} properties, both block must be made available and properly certified in different BFT instances, otherwise it cannot pass Alg~\ref{alg:hotstuff-scooping} line~\ref{alg:hotstuff-scooping:line:vote_and_commit}.
Suppose both are committed as result of direct commits of $B_k,B_k'$ at view $v, v'$ respectively.
Suppose $v \le v'$ (if $v=v'$, then suppose $k \le k'$), by Lemma~\ref{lemma:hotstuff2-scooping-liveness} from Appendix~\ref{sec:appendix_lemma_integration_safety}, certificate $C_{v'}(B_{k'}')$ must extend $B_k$, hence $B_l = B_l'$.
%
%
%
%
%
%
%
%

%
\textbf{Liveness}: The integrated protocol uses identical pacemaker as Hotstuff-2.
Since {\DAL} has \property{Termination} property and Hotstuff-2 is live by itself, the protocol continues to next view if a malicious leader stalls the protocol or the network is asynchronous.
When the leader is honest and the network is synchronous, in the optimistic path where the double certificate is received, the leader can collect at least $n-f$ mini-blocks (line~\ref{alg:hotstuff-scooping:line:optimistic_send_mini_block}) from replicas.  
If the honest leader is not in the optimistic path, the leader must wait for $3\Delta$ before proposing, Line ~\ref{alg:hotstuff-scooping:line:replica-enter} requires the replica to send the collection message along with the locked certificate.
Consequently the leader always have data to propose in both cases. The \property{Termination} property of {\DAL} ensures dispersal can finish with type \event{Complete}, so BFT can make progress.

\subsection{Properties proofs to Addditional Censorship resistance properties}
\textbf{Strict Transaction Inclusion}: Because {\cardOne} provides \property{Inclusion-(t)} with $t=f+1$, the strict inclusion directly come from \property{Strict transaction in DA layer} property provided by the {\navigation} algorithm. Since the integration with Hotstuff-2 is safe and live, under an honest client, a client can have guarantee of inclusion with probability $1$ by sending $n-t+1$ copies of transaction. 
If a malicious leader excludes the transaction that has been sent to $n-t+1$ replicas, also due to \property{Strict transaction in DA layer} property provided by the {\navigation} algorithm, the {\DAL} results in an \primitive{Incomplete} state, which no valid combined signature can be generated, hence the integrated {\nameOne} Hotstuff-2 integration would not proceed because no honest replica can pass the {\scooping} Algorithm, which is inserted into the consensus path in Alg.{\scooping} line~\ref{alg:hotstuff-scooping:line:optimistic_send_mini_block}.


\textbf{Probabilistic Transaction Inclusion Under Honest Leader}: The proof is almost identical as above.
Since the integration with Hotstuff-2 is safe and live, a client can send $x$ copies of transaction to probabilistically reach sufficient replicas, the rest follow through the \property{Probabilistic Transaction Inclusion in DA layer Under Honest Leader} from {\navigation}.
Since the integrated protocol is live and safe, the probability of client's transaction is included increases exponentially with $x$ and $t$.


\textbf{Probabilistic Transaction Inclusion Under Malicious Leader}: The proof is almost identical as above. If the leader is malicious and the instance \event{Completes}, we use can use \property{Probabilistic Transaction Inclusion in DA layer Under Malicious Leader} from the {\navigation} protocol, to derive that the inclusion probability is monotonically increasing with $x$, and when $x \le f$, the probability of censoring $u$ transactions altogether decreasing exponentially as $u$ increases.

$2\Delta$ \textbf{Probabilistic Confirmation for Clients}: It is a direct consequence of {\card} property and {\navigation} property from Section~\ref{sec:navigation}.

\textbf{Hyperscale Throughput}: 
Since all {\name} protocols use verifiable information dispersal, the throughput of reaching consensus to a sequence of bits in not bottleneck from downloading the whole data of size $O(bn)$, . 
The system capacity is summation of bandwidth from all replicas, which is $O(C_{band}n)$, where $C_{band}$is bandwidth for each replica. 
Then the limit on rate of information dispersal for each replicas needs to be $O(b) \le C_{band}$, so at maximum the rate of reaching consensus on a sequence of bits is only a constant of system capacity.

\subsection{Lemma in support for safety proofs of {\name} Hotstuff-2 integration}
\label{sec:appendix_lemma_integration_safety}
We show that lemma4.1 from Hotstuff-2\cite{cryptoeprint:2023/397} still holds in the integrated protocol.

\begin{lemma}
\label{lemma:hotstuff2-scooping-liveness}
If a party directly commits block $B_k$ in view $v$, then a certified block that ranks no lower than $C_v(B_k)$ must equal or extend $B_k$.
\end{lemma}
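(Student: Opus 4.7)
The plan is to reduce the statement to the original safety argument of Hotstuff-2 (Lemma 4.1) by verifying that the two insertions the integration makes---the DA-CR dispersal during Propose(2) and the \textsc{verifyCard} check during Vote-and-commit(3)---preserve the invariants that underpin locking and quorum intersection. I would argue by induction on views $v' \ge v$ of an arbitrary certified block $B'$ whose rank is no lower than $C_v(B_k)$'s, showing that $B'$ must equal or extend $B_k$.

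The base case $v' = v$ follows from quorum intersection applied to two certificates at the same view: any two quorums of size $n-f$ share at least $f+1$ signers, so by the single-signing rule only one block can be certified at view $v$, forcing $B' = B_k$.

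For the inductive step, I would note that a direct commit at view $v$ entails a double certificate $C_v(C_v(B_k))$, which in turn means at least $n-f$ replicas signed the second-phase vote at view $v$. By line~\ref{alg:hotstuff-scooping:line:vote_and_commit} of Algorithm~\ref{alg:hotstuff-scooping}, each such honest voter executed the dispersal check and therefore holds a valid {\cardOne} combined signature binding $C_v(B_k)$ to a unique block content (by \propertyL{Commitment Binding} of {\cardOne}). Hence $n-f$ replicas become locked on a single well-defined $C_v(B_k)$ whose content is immutable. Now consider the smallest $v' > v$ at which a certified block $C_{v'}(B')$ ranked no lower than $C_v(B_k)$ violates the claim. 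Its formation required $n-f$ first-phase votes at view $v'$. Intersecting this quorum with the $n-f$ lockers on $C_v(B_k)$ yields at least one honest replica that both locked on $C_v(B_k)$ and voted for $B'$. An honest replica only votes when $B'$ either extends its locked certificate or is justified by a higher-ranked locked certificate that it has itself verified; by the induction hypothesis and \propertyL{Commitment Binding}, any such higher-ranked certificate already extends $B_k$. Either way $B'$ extends $B_k$, contradicting the choice of $v'$.

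The main obstacle is making rigorous the second disjunct in the voting rule: a replica may in principle unlock upon seeing a certificate ranked higher than its lock, so I have to rule out the possibility that a malicious leader fabricates a bogus ``certificate'' that passes the replica's checks without extending $B_k$. This is exactly where {\cardOne}'s \propertyL{Commitment Binding} together with the \textsc{verifyCard} guard (Algorithm~\ref{alg:scooping}, lines~\ref{alg:scooping:line:validateCard_start}--\ref{alg:scooping:line:validateCard_end}) do the work: no honest replica ever treats a value as certified absent a valid multi-signature from $n-f$ replicas, and each such multi-signature uniquely binds the underlying block content. Once this invariant is established, the locking rule survives the DA-CR insertions unchanged, and the remainder of the argument is the original Hotstuff-2 Lemma 4.1 transplanted verbatim.
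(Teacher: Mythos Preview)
Your proposal is correct and follows essentially the same induction-on-views argument as the paper: both reduce to Hotstuff-2's Lemma~4.1 and verify that the DA-CR insertions leave the locking/quorum-intersection invariant intact. One minor correction: the \textsc{verifyCard} guard actually sits at Vote2(5) (line~\ref{alg:hotstuff-scooping:line:optimistic_send_mini_block}, just before locking), not at Vote-and-commit(3); the paper's version also spends a bit more effort on the two leader-entry cases (double certificate vs.\ $3\Delta$ wait) to argue data availability explicitly, whereas you lean on \propertyL{Commitment Binding}---but the substance is the same.
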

\begin{proof}
We use the same approach as the Hotstuff-2 proof, we are mostly concerned that the data is unavailable, we present the whole proof and shows the data has to be available for any committed blocks. 
Suppose we consider a block $B_{k'}'$ that is certified in view $v'$ to produce certificate $C_{v'}(B_{k'}')$ has a rank higher than $C_v(B_k)$ if $v' > v$. We will use induction on view $v'$.
At the base case $v=v'$, $C_{v'}(B_{k'}')$ is certified in view $v$. But it is impossible, since no honest replica would vote twice per a single view. 
At the induction step, since $C_v(C_v(B_k))$ exists, there must be a set of replicas $S$ whose size is no less than $2f+1$ that possess $C_v(B_k)$. 
Since a double certificate can only be created with at least $f+1$ honest replicas, we know the data $B_k$ is available.
All replicas in $S$ must have locked on $B_k$ or a block that extends $B_k$ at the end of $v$. By induction assumption, any certified block that ranks equally or higher than $v$ must equal or extend $B_k$. By the end of $v'$, the replicas in $S$ are still locked at $B_k$ or blocks that extends $B_k$. 
The leader has two ways to enter a new view. In the first case when the leader has a double certificate, then at least $f+1$ honest replicas have ensured the data is available, so it is safe to extend on top of it.
In the second case, the leader finds out the highest certified block by waiting $3\Delta$ to receive all honest replicas. 
Since the leader verifies that all locked certificate must have a valid aggregate signature corresponding to it, this step is ensured by Alg~\ref{alg:hotstuff-scooping} line ~\ref{alg:hotstuff-scooping:line:leader_propose}.
So we know it is safe for the leader to build on top of it.
Resume the Hotstuff-2 proof. Consider a proposal at view $v'+1$, if the leader makes a proposal $B_{k'}'$ that does not extend $B_k$, its certificate must be ranked lower than $C_v(B_k)$. As the result, no honest replica would vote for it, so the certificate $C_{v'+1}(B_{k'}')$ cannot be formed. Hence if $B_{k'}'$ was committed, $C_{v'+1}(C_{v'+1}(B_{k'}'))$ is formed, $B_{k'}'$ must extend $B_k$.
\end{proof}

To clarify special cases, when a malicious leader correctly runs the {\card} protocol but refuses to broadcast the aggregate signature to anyone, the malicious leader eventually gets replaced. However, since no one has the aggregate signature, a new block proposed by an honest leader would not build on top of that block after waiting $3\Delta$. 
If a malicious leader broadcasts the aggregate signature, but refuses to create the double certificates, any honest replica can send the locked certificate along with the aggregate signature to convince the leader to build on top of it.

\subsection{Signal Frequency and Collection Waiting Time}
\label{sec:card_design:freq_waiting}

Each {\DA} instance starts when replicas receive a local \event{Signal}. In the
HotStuff-2 integration, this signal is supplied by the BFT pacemaker and carries
the local view number. We run one {\DA} instance per BFT view. This frequency is
important. If multiple independent {\DA} instances were allowed for the same BFT
view, a malicious leader could selectively complete one instance while ignoring
others, weakening the intended inclusion rule without stalling the BFT protocol.
If {\DA} instances were less frequent than BFT views, then the chain could not
consume newly dispersed data at the pace of block production.

The leader also needs a rule for when to stop waiting for \textsf{Collection}
messages and begin dispersal. In Algorithm~\ref{alg:card}, this rule is denoted
\textit{reach-proc-time}. It can be implemented as a short local timer that
becomes true after the leader has waited long enough, in addition to the
requirement that the leader has received at least \(n-f\) mini-blocks. If
\textit{reach-proc-time} is always true, the leader proceeds as soon as it has
\(n-f\) mini-blocks. If the timer is longer, the leader may collect more honest
mini-blocks before dispersal.

This choice affects the latency/inclusion tradeoff. More collected honest
mini-blocks improve the transaction-broadcast guarantees from
Section~\ref{sec:navigation}: a client can achieve the same inclusion
probability with smaller fanout when more honest mini-blocks are captured.
However, waiting longer also weakens optimistic responsiveness. The timer is
therefore a performance parameter, not a safety condition. Regardless of the
timer setting, replicas only approve and later scoop a {\DAL} instance if the
certificate and inclusion checks are valid.

\end{document}